\documentclass[11pt,letterpaper]{article}

\usepackage[letterpaper,margin=1in]{geometry}
\usepackage[T1]{fontenc}
\usepackage[utf8]{inputenc}
\usepackage{times}
\usepackage{fullpage}
\usepackage{microtype}
\usepackage{setspace}
\usepackage{amsmath,amssymb,amsthm,mathtools}
\usepackage{xparse}
\usepackage{xcolor}

\usepackage[ruled,vlined,linesnumbered]{algorithm2e}
\SetKw{Break}{break}
\SetKwData{Null}{null}
\SetKwProg{WithProb}{\normalfont with probability}{}{end}

\usepackage{array}
\usepackage{booktabs}
\usepackage{diagbox}
\usepackage{multirow}
\usepackage{tabularx}
\usepackage{threeparttable}
\newcolumntype{Y}{>{\raggedright\arraybackslash}X}

\usepackage{enumitem}
\setlist{topsep=0.5em,itemsep=0.2em,parsep=0pt,partopsep=0pt}

\usepackage[
  colorlinks=true,
  linkcolor=blue!55!black,
  citecolor=blue!55!black,
  urlcolor=blue!55!black,
  pdfauthor={},
  pdftitle={Frequency Moments Beyond Equality: Streaming Cosine Density Moments}
]{hyperref}

\allowdisplaybreaks
\newtheorem{theorem}{Theorem}[section]
\newtheorem{lemma}[theorem]{Lemma}

\theoremstyle{definition}

\theoremstyle{remark}

\newtheorem{remark}[theorem]{Remark}

\newcommand{\abs}[1]{{\left | #1 \right |}}

\newcommand{\E}{\mathbf{E}}
\newcommand{\Var}{\mathbf{Var}}
\newcommand{\eps}{\epsilon}

\renewcommand{\Pr}{\mathbf{Pr}}

\newcommand{\R}{\mathbb{R}}

\newcommand{\loc}{\text{local}}
\NewDocumentCommand{\dl}{m o}{%
  \IfNoValueTF{#2}{d_{#1,\loc}}{d_{#1,\loc}^{#2}}%
}
\NewDocumentCommand{\Dl}{m o}{%
  \IfNoValueTF{#2}{D^{(#1)}_{\loc}}{D_{#1,\loc}^{#2}}%
}

\makeatletter
\let\oldnl\nl
\newcommand{\nonl}{\renewcommand{\nl}{\let\nl\oldnl}}
\makeatother

\newcommand{\ip}[2]{\left\langle #1,#2 \right\rangle}
\newcommand{\norm}[1]{\left\lVert #1 \right\rVert}

\newcommand{\diag}{\operatorname{diag}}
\newcommand{\rank}{\operatorname{rank}}
\newcommand{\tr}{\operatorname{tr}}
\newcommand{\Index}{\textnormal{\textsc{Index}}}
\newcommand{\1}{\mathbf{1}}
\newcommand{\ind}{\mathbf{1}}
\newcommand{\cP}{\mathcal{J}_{+}}
\newcommand{\cN}{\mathcal{J}_{-}}
\newcommand{\Gzero}{\mathcal{G}_0}
\newcommand{\Gone}{\mathcal{G}_1}
\newcommand{\Mp}{M_p}
\newcommand{\stirling}[2]{\genfrac{\{} {\}}{0pt}{}{#1}{#2}}

\newcommand{\qinsays}[2][]{}

\title{Frequency Moments Beyond Equality: Streaming \\ Cosine Density Moments}

\author{
	Qin Zhang \\
	Computer Science Department\\
	Indiana University\\
	\texttt{qzhangcs@iu.edu}
}

\date{}

\begin{document}
\hypersetup{pageanchor=false}


\maketitle

\begin{abstract}
For a stream of nonzero vectors $x_1,\ldots,x_n\in\mathbb{R}^d$, let $u_i=x_i/\|x_i\|_2$. We define the cosine density of the $i$-th stream element by $D_i:=\sum_{j\in[n]}\langle u_i,u_j\rangle$ and study the density moments $M_p:=\sum_{i\in[n]}D_i^p$ in both the signed- and nonnegative-cosine regimes. These quantities are similarity-aware analogues of classical frequency moments: replacing cosine similarity by equality (that is, $D_i = \sum_{j\in[n]} \1\{u_j = u_i\}$) gives $M_p=F_{p+1}$ and, in particular, $M_{-1}=F_0$, the number of distinct elements. We give one-pass streaming algorithms and lower bounds that are tight or nearly tight in their dependence on the dimension $d$. Our results thus extend several fundamental statistics from the classical data stream literature to cosine similarity, a widely used measure for comparing vector embeddings in modern AI systems.

The main challenge in proving a space lower bound for nonnegative cosine is to eliminate unwanted contributions without relying on pairs of opposite vectors. We address this through a construction that we call \emph{equal-sum moment isolation}: two insertion-only prefixes have the same cardinality and vector sum, and a finite-difference comparison cancels their common baseline while isolating the desired higher-order signal. This proof framework may be useful for other insertion-only streaming lower bounds, where direct cancellation is not possible.
\end{abstract}



\pagenumbering{arabic}
\setcounter{page}{1}
\hypersetup{pageanchor=true}

\section{Introduction}
\label{sec:intro}

For more than three decades, the theory of data streams has treated
the data universe as discrete: two items are either equal or different.
This abstraction underlies fundamental statistical quantities such as the number of
distinct elements, the self-join size, and the frequency
moments~\cite{AMS99}. Modern large language models (LLMs), however, increasingly represent text, images, audio, and other unstructured objects as
high-dimensional vectors, where similarity is real-valued rather than
binary. Thus, the LLM era changes not only the volume of data but also
the geometry of the data universe.

As an example, consider two streams of $n$ pairwise distinct documents. Suppose that the documents in the first stream have unrelated content, whereas those in the second stream are LLM-generated paraphrases of the same underlying content. Because every document is distinct, the $q$-th frequency moment
$F_q=\sum_{a} f_a^q$, where $f_a$ denotes the frequency of universe element $a$, equals $n$ for both streams for every $q\geq 0$. Semantically, however, the
first stream is highly diverse, while the second is almost entirely
redundant. Exact matching therefore treats superficial differences as if they reflected meaningful diversity.  This problem is becoming increasingly pronounced with the rapid growth of generative data, especially content produced through summarization, paraphrasing, translation, and the recombination of existing material.

A natural approach is to replace the exact frequency of an item by its
total similarity mass. Let $[n] := \{1, \ldots, n\}$. Given a dataset
$V=(v_1,\ldots,v_n)$ and a similarity function
${sim}(\cdot, \cdot)$, define the \emph{similarity density} (or
\emph{soft frequency})  of $v_i$ as $D_i:=\sum_{j\in[n]}sim(v_i,v_j)$
and the $p$-th density moment $M_p$ of $V$ as $M_p:=\sum_{i\in[n]}D_i^p$.
When ${sim}$ is equality, we have $D_i = f_{v_i}$ and 
$$
M_p(V) = \sum_{i\in[n]}f_{v_i}^p = \sum_{a:f_a>0}f_a^{p+1} = F_{p+1}.
$$
Thus, $M_{-1}$, called the \emph{diversity index} in~\cite{LZ26b}, generalizes the distinct-elements count $F_0$, while $M_1=\sum_{i,j\in[n]}sim(v_i,v_j)$ is a soft analogue of
the self-join size $F_2$ and measures the dataset's total redundancy.

Previous approaches to similarity-aware statistics largely attempt to
recover a latent discrete universe. They assume that the items form
well-separated groups~\cite{Zhang15,CZ16,CZ18,Zhang25}, or parameterize
the discrepancy between the observed data and an unknown ground
truth~\cite{LZ26}. Such assumptions may be restrictive in embedding
spaces, where semantic regions can overlap and need not admit clear
boundaries. Density moments avoid an explicit clustering or
ground-truth assumption, and instead aggregate the full, continuously
valued similarity structure.

Recent work~\cite{LZ26b} formulates similarity-aware statistics in terms of a weighted similarity graph, where data items correspond to vertices and pairwise similarities define edge weights. It shows that, for a general similarity function, {\em no} one-pass sublinear-space algorithm can approximate the corresponding density moments within a constant factor for $M_{-1}$ and $M_1$, or within a factor of $\Theta_p(n^{(p-1)/2})$ for $M_p$ when $p>1$.
The general similarity model, however, is extremely expressive:
it can encode essentially arbitrary weighted graphs. This raises a
natural question: {\em can we design one-pass sublinear-space algorithms for
	natural similarity measures with additional structure?}

We focus on the cosine similarity $sim(x, y) = {\langle x/\norm{x}_2, y/\norm{y}_2 \rangle}$, a widely used similarity measure for comparing vector
embeddings in modern AI systems~\cite{RG19,RKH+21}. For normalized
vectors, cosine similarity equals the inner product. Thus, if
$A\in\mathbb{R}^{n\times d}$ contains the vectors as its rows, then the
similarity matrix is the Gram matrix $G=AA^\top$, which is positive
semidefinite and has rank at most $d$. Cosine similarity graphs therefore
form a highly structured subclass of general weighted similarity graphs.
Nevertheless, estimating a density moment requires aggregating
similarity information over all pairs of vectors, and it is not
immediately clear whether the algebraic structure of cosine similarity is
sufficient to circumvent the lower bounds for general similarity
functions. 

In this paper, we answer this question affirmatively. We design one-pass streaming algorithms for vector data under cosine similarity and establish space upper and lower bounds that are tight or nearly tight in their polynomial dependence on the dimension $d$ for estimating $M_p$ when $p=-1$, $p=1$, or $p \geq 2$. As mentioned, under equality, these statistics correspond respectively to distinct elements $F_0$, the self-join size $F_2$, and the higher frequency moments $F_{p+1}$. Thus, our results provide similarity-aware, one-pass analogues of some of the most fundamental statistics in the classical data stream literature, for a similarity measure of central importance in modern AI systems.

We consider two cosine regimes. In the \emph{signed-cosine regime}, pairwise cosine similarities may be either positive or negative. In the \emph{nonnegative-cosine regime}, all pairwise inner products are nonnegative, which also ensures that every density is nonnegative. The nonnegative-cosine regime is particularly important for odd $p$, as negative densities would contribute negative terms to $M_p$ and could lead to cancellation. 

We are particularly interested in the setting $n\gg d$, where the stream length far exceeds the representation dimension. This setting commonly arises in modern large-scale embedding applications, where representation dimensions are typically in the hundreds or low thousands, while the number of vectors may reach billions or even trillions. Our goal is therefore to obtain streaming algorithms whose space depends primarily on $d$.

\vspace{2mm}
\noindent{\bf Our Results.\ }
For simplicity, we fix the error probability at $\delta=0.01$. One word stores one input coordinate or intermediate scalar.  We write $O_c(\cdot)$ and $\Omega_c(\cdot)$ when the hidden constant may depend on $c$, and use $\widetilde{O}(\cdot)$ and $\widetilde{\Omega}(\cdot)$ to suppress logarithmic factors.  Formal definitions of the streaming model, approximation guarantees, and other conventions are given in Section~\ref{sec:preliminaries}.

For a stream of nonzero vectors $x_1,\ldots,x_n\in\mathbb{R}^d$, let $u_i:={x_i}/{\|x_i\|_2}$. Let $D_i:=\sum_{j \in [n]} \langle u_i,u_j\rangle$ and $M_p:=\sum_{i \in [n]} D_i^p$.
Our first theorem characterizes the complexity of the diversity index.

\begin{theorem}
	In the nonnegative-cosine regime, there is a one-pass streaming algorithm that outputs a $(1+\epsilon)$-approximation to $M_{-1}$ using $O\left(d^2\epsilon^{-2}+d\log n\right)$ words of space. 
	Conversely, every one-pass algorithm that achieves constant relative error requires $\Omega(d^2)$ bits of space.
\end{theorem}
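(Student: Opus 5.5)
The natural starting point for the upper bound is the identity $D_i=\langle u_i,S\rangle$, where $S:=\sum_{j\in[n]}u_j$. This gives $M_{-1}=\sum_i 1/\langle u_i,S\rangle$. Storing $S$ exactly costs only $d$ words. The difficulty is that $S$ is known only at the end of the stream, while the map $u\mapsto 1/\langle u,S\rangle$ is nonlinear.

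I would use online importance sampling driven by prefix densities. When $u_i$ arrives, compute $P_i:=\langle u_i,S_{\le i}\rangle$, where $S_{\le i}$ is the running sum including $u_i$. In the nonnegative regime $1\le P_i\le D_i$. Keep $u_i$ with probability $q_i=\min(1,K/P_i)$, storing $q_i$ alongside it. At the end, output $\widehat M=\sum_{i\text{ kept}}1/(q_i\langle u_i,S\rangle)$, which is unbiased. Its variance is at most $\sum_i 1/(q_iD_i^2)\le\sum_i P_i/(KD_i^2)\le M_{-1}/K$. Since the self term gives $D_i\ge 1$, and hence $M_{-1}\ge 1$ after a short argument (also $D_i\le n$), choosing $K=\Theta(\epsilon^{-2})$ and applying Chebyshev gives a $(1+\epsilon)$-approximation.

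The expected number of stored vectors is $K\sum_i 1/P_i$. The key structural step is an online leverage-type bound $\sum_i 1/P_i=O(d\log n)$. I would prove it from the rank-$d$ psd structure of the Gram matrix, in the same spirit as the proof that $\sum_i 1/D_i\le d$: apply Cauchy--Schwarz against $(\sum_{j\le i}u_ju_j^\top)^{+}$ and use a log-determinant potential argument. I expect this to be the main obstacle. A further difficulty is that storing $O(\epsilon^{-2}d\log n)$ raw vectors costs $d^2\epsilon^{-2}\log n$ words, not $O(d^2\epsilon^{-2}+d\log n)$. To remove the $\log n$ factor I would bucket arrivals by the dyadic scale of $P_i$, keeping only a $d$-dimensional aggregate per bucket. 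A bucket holds vectors whose final densities are within a constant factor of one another, so replacing individual terms by the bucket's aggregated inner product with $S$ changes the estimate by only a $(1\pm\epsilon)$ factor. I would also cap the number of fully stored samples at $O(d\epsilon^{-2})$ by subsampling within buckets. This should leave $O(d^2\epsilon^{-2})$ words for samples plus $O(d\log n)$ words for the bucket aggregates.

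For the $\Omega(d^2)$-bit lower bound I would reduce from \Index\ on $m=\Theta(d^2)$ bits. Split coordinates into two halves $A,B$ of size $d/2$, with all vectors in the nonnegative orthant so that pairwise cosines are nonnegative. Alice inserts $(e_a+e_b)/\sqrt2$ for each pair $(a,b)\in A\times B$ whose bit is $1$. Bob, holding the index $(a,b)$, inserts $N$ copies each of $e_a$, $e_b$, and $(e_a+e_b)/\sqrt2$ for a large $N$. If Alice's vector at $(a,b)$ is present, it merges into Bob's heavy cluster and contributes $\approx 1/N$ instead of its own scale. I would calibrate the geometry, and possibly add a normalizing batch on Bob's side, so that the two cases change $M_{-1}$ by a constant fraction. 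The hardest part of the lower bound is controlling the spillover from the other Alice vectors sharing coordinate $a$ or $b$. I would handle this by choosing Bob's inserted directions so that those vectors' densities change identically in both cases, leaving a gap of constant relative size.
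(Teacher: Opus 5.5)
\textbf{Upper bound.} Your arrival-time sampler is unbiased with variance at most $M_{-1}/K$. Your bound $\sum_i 1/P_i=O(d\log n)$ is also true, but it needs a ridge term. Set $A_i=\sum_{j\le i}u_ju_j^\top$; then $P_i\ge u_i^\top A_iu_i\ge 1$, so $1/P_i\le\frac32\,u_i^\top(A_i+\frac12 I)^{-1}u_i$, and the log-det potential finishes. The unregularized pseudo-inverse is not enough, since online leverage scores can sum to $\Omega(n)$.

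This still only gives $O(d^2\eps^{-2}\log n)$ words, and for your rule the $\log n$ loss really occurs. With $d$ orthonormal directions each repeated $n/d$ times, $\sum_i\min\{1,K/P_i\}=\Theta(Kd\log(n/(dK)))$.

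Your repair does not work, for two reasons. First, the dyadic scale of $P_i$ does not control the final density: insert $e_1$, $e_2$, then many copies of $e_1$; both have $P=1$, but $D_1\gg D_2=1$. Second, even if a bucket's final densities were within a constant factor, the aggregate $\sum_{i\in b}u_i$ reveals only $\sum_{i\in b}D_i$. That determines $\sum_{i\in b}1/D_i$ up to a constant factor, not $1\pm\eps$.

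The missing idea is to keep re-sampling as densities grow. Since $D_i(t)$ is nondecreasing, give each point a fixed rank $r_i\sim\operatorname{Unif}(0,1)$. Update the density of each stored point by $\ip{u_i}{u_t}$ at every arrival, and evict the point once $r_i>\min\{1,\lambda/D_i(t)\}$; it can never become eligible again. At every time $t$ the stored set is then an independent sample with probabilities $\min\{1,\lambda/D_i(t)\}$. Its expected size is therefore at most $\lambda M_{-1}(P_t)\le\lambda d$ by the diversity--rank principle. The final inclusion probabilities use final densities, so the variance bound is unchanged. Bernstein plus a union bound over the $n$ times adds $O(\log(n/\delta))$ points, which is where the $d\log n$ words come from.

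\textbf{Lower bound.} Your construction has no gap at all.
If the queried bit is $0$, the vector $(e_a+e_b)/\sqrt2$ is simply absent and contributes $0$. If it is $1$, that vector is absorbed into Bob's $N$-copy cluster and contributes $O(1/N)$. The rest of the stream behaves almost identically in both cases:
\begin{itemize}
\item Alice's other vectors sharing $a$ or $b$ have density $\Theta(N)$ either way.
\item Bob's densities change by $O(1)$ out of $\Theta(N)$.
\item Alice's vectors touching neither $a$ nor $b$ are orthogonal to all of Bob's vectors and to the queried vector, so their contribution is literally identical.
\end{itemize}
Hence $M_{-1}$ moves by only $O(d/N)$.

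Worse, those untouched vectors form an $X$-dependent baseline that can be $\Theta(d)$; a perfect matching already gives about $d/2$. That swamps any $O(1)$ signal under constant relative error. Even if you suppressed that baseline and tested for presence of the queried vector, its own density is $1$ plus half the number of other ones in row $a$ and column $b$, which Alice controls.

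The paper avoids all three problems as follows.
\begin{itemize}
\item Alice always inserts one vector per row, $a_k=e_k+\eta\sum_\ell X_{k\ell}f_\ell+\gamma g$ with $\eta=m^{-2}$ and $\gamma=m^{-4}$. The rows are then nearly orthogonal whatever $X$ is.
\item Bob appends $L=m^3$ copies of $b_i=\sum_{k\ne i}e_k+\gamma g$. This drives the total contribution of all nonqueried rows to $O(m^{-3/2})$.
\item Bob also appends $L$ copies of $q_j=f_j+\gamma g$. This raises the density of $\bar a_i$ to at least $m/2$ when $X_{ij}=1$, and leaves it at $1+O(m^{-2})$ when $X_{ij}=0$.
\end{itemize}
Each of Bob's two blocks contributes about $1$, so $M_{-1}\approx 3$ when $X_{ij}=0$ versus $\approx 2$ when $X_{ij}=1$.
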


For completeness, note that $M_0$ is simply the stream length, while $M_1$ admits a simple exact $O(d)$-space algorithm, whose dependence on $d$ is optimal.

\smallskip

We now turn to higher density moments, beginning with an essentially complete characterization of the signed-cosine setting for even $p$.

\begin{theorem}
	Fix an even integer $p\geq 2$. There is a one-pass streaming algorithm that outputs a $(1+\epsilon)$-approximation to $M_p$ in the signed-cosine regime using $O_p\left(d^{p/2}\epsilon^{-2}\right)$ words of space.
	Conversely, every one-pass algorithm achieving constant relative error requires $ \widetilde{\Omega}_p \left(d^{p/2}\right)$ bits of space.  For $p=2$, the lower bound strengthens to $\Omega(d)$ bits.
\end{theorem}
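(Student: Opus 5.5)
Write $S=\sum_j u_j\in\R^d$, so that $D_i=\langle u_i,S\rangle$ and $M_p=\sum_i\langle u_i,S\rangle^p=\langle T,S^{\otimes p}\rangle$, where $T:=\sum_i u_i^{\otimes p}$. The difficulty is that $S$ is only known at the end of the stream. The estimator should therefore be linear in the stream and evaluated against $S$ afterwards.

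\textbf{Upper bound.} For even $p$, the symmetric tensor $T$ lives in a space of dimension $\binom{d+p-1}{p}=O_p(d^p)$, which is too large to store. We exploit the fact that we only need $T$ evaluated at a single rank-one point $S^{\otimes p}$. Split $p=2k$ with $k=p/2$, and regard $T$ as the matrix
\[
T_{(k)}=\sum_i (u_i^{\otimes k})(u_i^{\otimes k})^\top
\]
acting on $(\R^d)^{\otimes k}$. Then $M_p=(S^{\otimes k})^\top T_{(k)} S^{\otimes k}=\sum_i\langle u_i^{\otimes k},S^{\otimes k}\rangle^2$. This is a squared norm $\norm{A_{(k)}S^{\otimes k}}_2^2$, where $A_{(k)}$ has rows $u_i^{\otimes k}$.

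This form is nonnegative, which is where evenness matters. It suggests an AMS/JL-style sketch over the stream index $i$. Draw 4-wise independent signs $\sigma_i$ from a hash function, and maintain the vector
\[
Y=\sum_i\sigma_i u_i^{\otimes k},
\]
restricted to symmetric coordinates. This vector has dimension $O_p(d^{k})=O_p(d^{p/2})$. At the end, output $Z=\langle Y,S^{\otimes k}\rangle^2$. The expectation is $\E Z=\sum_i\langle u_i,S\rangle^{2k}=M_p$.

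The variance follows the standard AMS computation with $a_i=\langle u_i,S\rangle^k$: it gives $\Var Z\le 2(\sum a_i^2)^2=2M_p^2$. We maintain $O(\epsilon^{-2})$ independent copies plus $S$ itself, average them, and apply Chebyshev (or a median for $\delta=0.01$). The total space is $O_p(d^{p/2}\epsilon^{-2})$ words, as claimed. The hash seeds cost $O(\log n)$ bits each, which is negligible in words.

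\textbf{Lower bound.} We reduce from $\Index$ on $N\approx\binom{d}{k}$ bits, using $k=p/2$. Alice holds $b\in\{0,1\}^N$ indexed by $k$-subsets $Q\subseteq[d]$. For each $Q$ she inserts a vector $w_Q$, scaled to unit length, built so that $\langle w_Q,\cdot\rangle^{p}$ detects whether a test direction aligns with $Q$. One concrete choice is to let $w_Q$ be the tensor-flattened direction $\prod_{j\in Q}e_j$ embedded via a fixed feature map; a more elementary choice is to insert the $\pm$ combinations $\frac{1}{\sqrt k}\sum_{j\in Q}s_je_j$ over all sign patterns, whose $p$-th moment against a test vector isolates the monomial $\prod_{j\in Q}y_j^2$. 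Alice inserts these vectors only when $b_Q=1$.

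Bob, who holds the index $Q^*$, appends many copies $L$ of a vector $y$ supported on $Q^*$. The signed regime lets him also append opposite pairs to cancel unwanted sums so that $S$ is controlled. The quantity $M_p$ is then dominated by $L^{p}\sum_Q b_Q\langle w_Q,y\rangle^p$, and the $Q^*$ term beats the others by a constant factor. The $\widetilde{\Omega}$ slack absorbs the logarithmic loss from handling the partially overlapping subsets, for instance via a random sign/design choice that makes cross terms small.

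\textbf{Main obstacle.} I expect the hardest step to be the gap analysis. Subsets $Q$ overlapping $Q^*$ in $k-1$ coordinates contribute comparably to the $Q^*$ term. I would try restricting the family to a packing or design in which pairwise overlaps are at most $k/2$, which still has size $\widetilde{\Omega}_p(d^{k})$ subsets. For $p=2$ the construction is simpler. Alice inserts $e_j$ for each $j$ with $b_j=1$. Bob inserts $L$ copies of $e_{j^*}$ together with $\pm$ pairs that make $S=Le_{j^*}$ plus $b$-dependent mass. Then $M_2$ jumps by $\Theta(L^2)$ exactly when $b_{j^*}=1$, which gives $\Omega(d)$.
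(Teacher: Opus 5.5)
Your upper bound is the paper's tensor--AMS estimator and is fine. Put the weights $\sqrt{\binom{k}{\alpha}}$ on the symmetric coordinates, or keep the full $d^k$ tensor, so that pairing with $S^{\otimes k}$ returns $\sum_i\sigma_iD_i^k$.

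Both lower bounds, however, fail at the gap analysis you defer. For $p=2$, your $\pm$ pairs cannot shape $S$, since each pair sums to zero. Bob also cannot cancel Alice's unknown vectors, so $S=\sum_jb_je_j+Le_{j^*}$. Counting Alice's vectors and your $L$ copies, $M_2=|b|+L^3$ if $b_{j^*}=0$ and $M_2=|b|-1+(L+1)^3$ if $b_{j^*}=1$. The ratio is $1+(3L^2+3L)/(|b|+L^3)=1+O(d^{-1/3})$ for every $L$ once $|b|=\Theta(d)$. Each identical copy has density about $L$, so the copies' self-contribution $L^{p+1}$ swamps the $\Theta(L^p)$ jump; for small $L$, Alice's uncancelled mass swamps it instead. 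The paper's fix is Lemma~\ref{hm:lem:balanced-suffix}: Bob appends $L=\Theta(d)$ \emph{mutually orthonormal} vectors summing to $\sqrt L\,e_{j^*}$. Each has density $O(1)$, the suffix costs only $O(L)$, and $M_2$ moves from $2t^2$ to $3t^2+4t+1$. In the signed regime there are two other repairs. Alice can insert $\pm e_j$ and Bob a single $e_{j^*}$, giving $M_2\in\{1,3\}$. Or Bob can take $L=1$ plus $\Theta(d)$ pairs $\pm e_{j^*}$, which act as probes of $\langle e_{j^*},S\rangle^2$. Neither is the argument you wrote.

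For $p=2k\ge 4$, coordinate-subset codewords are the wrong code. The tensor option is unavailable, because stream vectors live in $\R^d$ and each density is a single linear form in $S$. The sign-pattern family does sum to zero, but $\sum_s(\sum_{j\in Q}s_jy_j)^{2k}$ contains every even monomial, e.g.\ $y_j^{2k}$. Moreover, $\prod_{j\in Q}y_j^2$ is not a nonnegative combination of $p$-th powers of linear forms (evaluate at $y=e_j$), and such combinations are all that insertions can produce.

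Worse, take $y\propto\sum_{j\in Q^*}e_j$. Every $Q$ meeting $Q^*$ has a sign pattern with correlation at least $1/k$ with $y$. In a family of $F=\widetilde\Omega(d^k)$ $k$-subsets, a member meets on average at least $kF/d-1=\widetilde\Omega(d^{k-1})$ others (Cauchy--Schwarz on coordinate degrees). So the interference exceeds the $O(2^k)$ signal of $Q^*$ by a polynomial factor. Capping overlaps at $k/2$ does nothing about the overlap-one subsets. Also, $M_p$ is not dominated by $L^p\sum_Qb_Q\langle w_Q,y\rangle^p$, since Bob's own copies contribute $L^{p+1}$.

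What is needed, and the true source of the logarithmic loss, is a dense code with all cross-correlations $O(\sqrt{\log m/m})$, namely random $\pm1/\sqrt m$ vectors as in Lemma~\ref{lem:random-code}. At most $R\approx c(m/\log m)^{p/2}$ codewords are selected, so that $R\,(C\sqrt{\log m/m})^p$ is a small constant. Alice inserts antipodal pairs, so her sum is zero, and Bob's orthonormal suffix sums to $\sqrt m\,\bar x$. Then $M_p=2m^{p/2}\sum_{a\in T}|\langle\bar a,\bar x\rangle|^p+m$, which gives a constant gap. The paper decodes all of $T$ and uses a counting argument, but \Index\ directly on $R$ such codewords would also work.
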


Finally, we consider higher moments under the nonnegative-cosine constraint. Positivity permits algorithms for arbitrary real exponents, but makes strong lower bounds considerably more difficult.  Establishing this superlinear lower bound is the paper's main technical contribution. 

\begin{theorem}
	\label{thm:nonnegative-main}
	Fix a real number $p>2$. In the nonnegative-cosine regime, there is a one-pass streaming algorithm that outputs a $(1+\epsilon)$-approximation to $M_p$ using $\widetilde{O}_p\left(
	d^{p/2}\epsilon^{-4+4/p}\right)$ words of space.
	Conversely, for every fixed integer $p\ge 3$, there exists a constant $\epsilon_p>0$ such that every one-pass $(1+\epsilon_p)$-approximation algorithm requires $\Omega_p\left(d^{\rho_p}\right)$ bits of space, where $\rho_p :=\frac{p^2(p-1)}{2(p^2-2)}$.
\end{theorem}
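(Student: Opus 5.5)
The proof has two independent halves: the algorithm and the lower bound. Both use the identity $D_i=\langle u_i,S\rangle$, where $S:=\sum_{j\in[n]}u_j$. In the nonnegative regime this gives $0\le D_i\le \norm{S}_2=:T$ and $\sum_i D_i=T^2=M_1$.

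\textbf{Upper bound.} The difficulty is that $S$ is only known at the end of the stream, while $D_i^p$ for non-integer $p$ cannot be linearised by a tensor $u_i^{\otimes p}$. My plan is an importance-sampling, level-set scheme in the spirit of $F_p$ estimation, with the universe size $n$ replaced by a dimension-dependent quantity.
\begin{enumerate}
\item Maintain $S$ exactly in $O(d)$ words, and maintain $C=\sum_i u_iu_i^\top$ in $O(d^2)$ words, so that $\sum_i D_i^2=S^\top C S$ is available exactly. This pins down the scale of the second moment.
\item Partition the unknown $D_i$ into geometric levels $[T2^{-k-1},T2^{-k})$.
\item Run independent subsampling at geometrically decreasing rates, storing only the sampled vectors $u_i$.
\item At the end of the stream, compute $\langle u_i,S\rangle$ for the survivors and form the Horvitz--Thompson estimator $\sum_{\text{sampled } i} D_i^p/q_{k(i)}$.
\end{enumerate}

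The structural inequality I would aim to prove is that, under nonnegativity, each level $k$ contributing at least an $\epsilon$-fraction of $M_p$ has few enough effective directions. The candidate statement is that the level-$k$ mass $\sum_{i\in L_k}D_i^2$ is at most $T^2\norm{C}_{\mathrm{op}}$, with $\operatorname{tr} C=n$. Combined with Hölder between $M_1,M_2,M_p$, this should give a $d^{p/2}$ bound in place of $n^{1-2/p}$. The $\epsilon^{-4+4/p}$ factor would come from balancing the per-level sample size against the $\epsilon$-fraction threshold, as in level-set $F_p$ analyses.

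I am least sure that the stored samples can be bounded independently of $n$. If they cannot, I would replace raw storage by a CountSketch-style sketch of the $u_i$, projected onto a sketch of $S$ recovered via a second linear sketch.

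\textbf{Lower bound.} I would reduce from a multi-party communication problem (disjointness or an Index variant) encoded as sets of nearly orthogonal nonnegative vectors in $\R^d$. Planting a common aligned block of size roughly $d^{\alpha}$ creates a large $D_i$ whose $p$-th power dominates exactly in the YES case.

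The obstacle is that insertion-only nonnegative streams cannot cancel baseline contributions with opposite vectors. Following the paper's \emph{equal-sum moment isolation} idea, I would build two prefixes with the same cardinality and the same vector sum $S$. They then induce identical densities for every later element. Comparing $M_p$ across the two prefixes via a $p$-th order finite difference cancels every term of degree below $p$ in the planted overlap and leaves only the isolated signal.

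\textbf{Main obstacle.} Optimising the block sizes against the residual baseline error should yield the exponent $\rho_p=\frac{p^2(p-1)}{2(p^2-2)}$. Constructing such equal-sum prefixes with nonnegative pairwise inner products while keeping the finite-difference signal a constant fraction of $M_p$ is the step I expect to be hardest. I would first try mixtures of $\{0,1\}$-valued normalized vectors supported on combinatorial designs.
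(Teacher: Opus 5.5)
\textbf{Upper bound.} This half has a genuine gap, and two of your steps break the target bound.

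First, maintaining $C=\sum_i u_iu_i^\top$ costs $\Theta(d^2)$ words. For $2<p<4$ and constant $\eps$, that already exceeds $\widetilde O_p(d^{p/2}\eps^{-4+4/p})$.

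Second, your structural inequality carries no dimension information. The bound $S^\top C S\le\norm{C}_{\mathrm{op}}T^2$ holds for the whole sum, not per level. Moreover, $\norm{C}_{\mathrm{op}}$ can be as large as $n$, and $\operatorname{tr}C=n$ only gives $\norm{C}_{\mathrm{op}}\ge n/d$, which is the wrong direction. So nothing in your plan bounds the sample size independently of $n$, which is exactly the step you flag.

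The missing lemma is the moment-ratio bound $nM_{2p}/M_p^2\le d^{p/2}$. It follows from three facts:
\begin{itemize}
\item $D_i\le\norm{S}=\sqrt{M_1}$;
\item H\"older's inequality, $M_p\ge n^{1-p}M_1^p$;
\item $n^2\le M_1M_{-1}\le dM_1$, where the last step is the diversity--rank principle $M_{-1}=\operatorname{tr}(\Delta^{-1/2}G\Delta^{-1/2})\le\rank(G)$, whose eigenvalues lie in $[0,1]$.
\end{itemize}
With this bound, plain uniform sampling of $K=\Theta(d^{p/2}\eps^{-2})$ indices suffices, and no level sets are needed.

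Even then, storing the sampled $u_i$ costs $dK=d^{p/2+1}\eps^{-2}$ words, a factor $d$ too many. The high-rate levels of your geometric scheme would store $\Theta(nd)$ words. The paper instead applies a black-box linear $F_p$-sketch $\Pi$ with $\widetilde O(K^{1-2/p}\eps^{-2})$ rows to the sampled density vector $y\in\R^K$. It lifts each counter to a $d$-vector, $Y=\sum_r \Pi e_r\,u_{i_r}^\top$, so that $YS=\Pi y$ at the end. This yields $d\cdot K^{1-2/p}\eps^{-2}=d^{p/2}\eps^{-4+4/p}$, which is where the $\eps$-exponent comes from, not level-set balancing. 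Your fallback gestures at this lifting, but no sketch of $S$ is needed, since $S$ is stored exactly. The lifting meets the bound only when combined with the moment-ratio sampling step.

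\textbf{Lower bound.} Here you restate the equal-sum/finite-difference principle but omit every ingredient that makes it realizable and that produces $\rho_p$, as you acknowledge. Concretely, four pieces are missing.
\begin{itemize}
\item A geometric realization of the levels, $u_{i,j,\sigma}=t_ja_i+\lambda e_{\chi(i)}+\sigma\mu_jg_{\chi(i)}$. Both signs $\sigma$ are inserted, so the balancing part cancels from every gadget sum and each density is exactly affine, $B_i+t_jZ_i$, as the finite-difference identity requires. The shared anchor $\lambda^2=2/3$ dominates $\mu_j\mu_k$, keeping all inner products nonnegative. The signed coefficients $c_j=(-1)^{p-j}\binom pj$ are split between the two gadgets, and the degree-$0$ and degree-$1$ identities give equal cardinality and equal sum.
\item A balanced coloring, so that $N\gg m$ gadgets share only $m$ anchor/balancing pairs.
\item A sparse nonnegative code with both a leakage bound $\sum_{i\ne q}\ip{a_i}{a_q}^p\le\iota$ and first-power row sums $O((N/m)^{(p-1)/(p-2)})$.
\item Bob's suffix of $L=m^{(p-2)/2}$ copies of each $v_\ell=a_q/\sqrt m+\sqrt{1-1/m}\,w_\ell$. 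This gives query amplitude $Q=m^{(p-1)/2}$ and suffix self-contribution $mL^{p+1}=O(Q^p)$.
\end{itemize}
The exponent then comes from balancing Alice's code-space baseline $NR_{\max}^p$ against $Q^p$, namely $N(N/m)^{p(p-1)/(p-2)}=\Theta(m^{p(p-1)/2})$. Your ``planted aligned block of size $d^\alpha$'' does not identify these competing terms. In particular, every selected non-queried gadget contributes $\kappa_pZ_i^p$, with $Z_i$ carrying Alice's own code-space baseline, and bounding that sum is what limits $N$. Finally, the reduction needs only two-party one-way \Index{} (Alice's prefix, Bob's suffix); multi-party disjointness plays no role.
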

Note that for every integer $p\ge 3$, $\rho_p \in (\frac{p-1}{2}, \frac{p}{2})$,
and hence the lower-bound exponent is less than $1/2$ away from the upper-bound exponent $p/2$.

\smallskip

Our main results are summarized in Table~\ref{tab:main-results}.  Generally speaking, cosine similarity strikes a useful balance: it captures a wide range of vector similarities while preserving enough geometric structure to support one-pass algorithms whose space usage depends mainly on the dimension $d$.

\begin{table*}[t]
	\centering
	
	\small
	\setlength{\tabcolsep}{4pt}
	\renewcommand{\arraystretch}{1.18}
	
	\begin{tabularx}{\textwidth}{
			@{}
			|p{0.14\textwidth} |
			p{0.15\textwidth}
			p{0.16\textwidth}
			p{0.20\textwidth}
			Y|
			@{}
		}
		
		\hline
		Moment
		& Cosine regime
		& Guarantee
		& Upper bound
		& Lower bound
		\\
		\hline
		
		$M_{-1}$
		& Nonnegative
		& $(1+\eps)$-approx
		&
		$\displaystyle
		O\!\left(
		d^2\epsilon^{-2}+d\log n
		\right)$
		&
		$\Omega(d^2)$
		\\
		
		$M_1$
		& Signed
		& Exact
		&
		$O(d)$
		&
		$\Omega(d)$
		\\
		
		$M_p$, even $p\geq 2$
		& Signed
		& $(1+\eps)$-approx
		&
		$\displaystyle
		O_p\!\left(d^{p/2}\epsilon^{-2}\right)$
		&
		$\displaystyle
		\widetilde{\Omega}_p
		\!\left(d^{p/2}\right)$, $\Omega(d)$ for $p=2$
		\\
		
		$M_p$, odd $p \ge 3$
		& Nonnegative
		& $(1+\eps)$-approx
		&
		$\displaystyle
		\widetilde{O}_p\!\left(
		d^{p/2}\epsilon^{-4+4/p}
		\right)$
		&
		$\Omega_p(d^{\rho_p})$, where
		$\rho_p = \frac{p^2(p-1)}{2(p^2-2)}$
		\\
		
		\hline
	\end{tabularx}
	\caption{
		Main results for cosine density moments.
		Upper bounds are measured in words and lower bounds in bits.
		All algorithms use one pass.
		The algorithm for $M_1$ is exact, whereas the algorithms for
		$M_{-1}$ and $M_p$ return $(1+\eps)$-approximations with
		constant error probability.
		Lower bounds for $M_{-1}$ and $M_p$ hold for constant relative error and
		constant success probability; the $M_1$ lower bound is for exact
		computation.  The nonnegative-cosine upper bound extends to every fixed real $p>2$,
		while the corresponding lower bound holds for every fixed integer
		$p>2$.
		$\widetilde{O}_p$ and $\widetilde{\Omega}_p$ suppress logarithmic
		factors and constants depending only on the moment degree $p$.
	}
	\label{tab:main-results}
\end{table*}

\vspace{2mm}
\noindent{\bf Related Work.\ }
We summarize a few research directions that are closely related to this work.

{\em Classical data streams.}
The distinct elements problem $F_0$ and the frequency moments
$F_q\ (q > 0)$ are among the foundational problems in the data stream
model~\cite{FM85,BJK+02,KNW10,AMS99}, to cite just a few. A long line of work has developed
increasingly tighter upper and lower bounds for estimating frequency
moments, especially in the high-moment regime~\cite{BGKS06,IW05,W04,LW13,GW18}.
Our density moments extend these classical statistics from equality to real-valued similarity. For cosine similarity, however, the density of a vector depends on its interaction with the entire stream rather than on a single coordinate of a discrete
frequency vector. Consequently, classical sketches for frequency moments do
not directly apply.
\smallskip

{\em Streaming with noisy data.}
Prior work on streaming data with near-duplicates has studied robust
distinct elements estimation and distinct sampling under geometric
separation, well-shapedness, or related assumptions on the
data~\cite{Zhang15,CZ16,CZ18,Zhang25}. Another framework assumes an
unknown noiseless ground-truth dataset and parameterizes the guarantees
by the discrepancy between the observed and ground-truth
data~\cite{LZ26}. The closest predecessor to this paper studies
degree moments, the diversity index, diversity sampling, and degree-moment sampling on general similarity graphs in the node-arrival model~\cite{LZ26b}. It gives constant-pass sublinear-space algorithms, but also proves strong $\Omega(n)$ one-pass lower bounds for general similarity functions, which can encode an arbitrary weighted
graph.  This paper investigates how additional structure changes the one-pass complexity, giving bounds that depend primarily on the representation dimension $d$ rather than on the stream length $n$.
\smallskip

{\em Sublinear-time estimation of degree moments.}
Degree moments of graphs have also been studied in the sublinear-time
query model~\cite{Feige04,GR08,GRS11,ERS17}. These works assume oracle
access to a fixed graph through operations such as degree and neighbor
queries, and try to minimize the number of such queries. Our setting is
different: vector nodes arrive sequentially, the edges of the similarity
graph are implicit, and a similarity can be evaluated only when the
corresponding nodes are available in memory. Thus, the query access
available to sublinear-time algorithms is not present in our streaming
model, and their techniques and bounds do not directly transfer.

\section{Technical Overview}
\label{sec:overview}

For a stream of normalized vectors $P=(u_1,\ldots,u_n)$, let
$P_t=(u_1,\ldots,u_t)$ denote its prefix of length $t$, and let
$s_t=\sum_{i\in[t]}u_i$. The density of a point $u_i$ in this prefix is
then $D_i(t)=\ip{u_i}{s_t}.$ For the completed stream, we abbreviate $s=s_n$ and $D_i=D_i(n)$.

\subsection{A Map of Upper Bounds}
\label{sec:overview-ub}

\vspace{2mm}
\noindent{\bf Algebraic Identities and Signed Cosine.\ }
The simplest result is the exact $O(d)$-word algorithm for $M_1$: it
maintains only $s$, since $M_1=\norm{s}_2^2$
(Theorem~\ref{thm:SS}). More generally,
$
M_p=\left\langle \sum_{i\in[n]}u_i^{\otimes p},s^{\otimes p}\right\rangle,
$
so storing the degree-$p$ symmetric tensor gives an exact
$O_p(d^p)$-word algorithm for every integer $p\geq1$
(Theorem~\ref{thm:higher-moments}). For an even exponent $p=2k$, we can
reduce the tensor degree by half. There is a degree-$k$ feature map $\psi_k$ that
satisfies
$\ip{\psi_k(v)}{\psi_k(w)}=\ip{v}{w}^k$, and hence,
$$
M_{2k}=\sum_{i\in[n]} \ip{\psi_k(u_i)}{\psi_k(s)}^2.
$$
An AMS sketch~\cite{AMS99} in this lifted space estimates the last sum using
$O_p(d^{p/2}\eps^{-2})$ words (Theorem~\ref{thm:even-higher-moments}). This argument works even when individual densities are negative, since the final exponent is even.

\vspace{2mm}\noindent{\bf Nonnegative Cosine.\ }
For a prefix stream of length $t$ with Gram matrix $G_t$ and density matrix
$\Delta_t=\diag(D_1(t),\ldots,D_t(t))$, consider
$
B_t=\Delta_t^{-1/2}G_t\Delta_t^{-1/2}.
$
The matrix $B_t$ is positive semidefinite and similar to a stochastic
matrix, and has rank at most $d$. Its eigenvalues therefore lie in
$[0,1]$, one eigenvalue equals $1$, and
$\tr(B_t)=M_{-1}(P_t)$. This gives the diversity--rank principle:
$$
1\leq M_{-1}(P_t)\leq \rank(G_t)\leq d.
$$
It also explains why nonnegative cosine is algorithmically useful: every
prefix density $D_i(t)$ is nondecreasing. Our diversity estimator assigns
point $i$ a fixed random rank and stores it with the decreasing
probability
$\min\{1,\lambda/D_i(t)\}$. Once a point is rejected, it can safely be discarded. The expected number of stored points is at most
$\lambda M_{-1}(P_t)\leq\lambda d$, giving the one-pass
$O(d^2\eps^{-2}+d\log n)$-word upper bound of
Theorem~\ref{thm:diversity-main}. 

For every fixed real $p>2$, nonnegative cosine also gives the
moment-ratio bound:
$$
\frac{nM_{2p}}{M_p^2}\leq d^{p/2}.
$$
Consequently, uniformly sampling about $d^{p/2}/\eps^2$ indices is
enough to estimate $M_p$. The sampled densities are not known when the
corresponding points arrive, and explicitly storing all sampled
vectors would use too much space. We instead take a standard linear
$F_p$-sketch $\Pi$ of the sampled density vector $y$ and lift each
scalar sketch counter to a $d$-dimensional vector. If the resulting
matrix is $Y$, then at the end of the stream $Ys=\Pi y.$
Thus, the desired sketch is recovered only after the final sum $s$ is
known. Combining uniform sampling with this lifted sketch gives the
$\widetilde O_p(d^{p/2}\eps^{-4+4/p})$-word algorithm of
Theorem~\ref{thm:approx-higher-moments}.

\subsection{Geometric Isolation for Lower Bounds}
\label{sec:overview-index}

The majority of our lower bounds reduce from the one-way communication problem
\Index$_N$. Alice receives a string $z\in\{0,1\}^N$, Bob receives an
index $q\in[N]$, and, after receiving one message from Alice, Bob must
recover $z_q$. Any constant-error protocol requires $\Omega(N)$ bits
of communication.

Our reductions from \Index\ share the same communication interface: Alice processes a stream prefix determined by $z$ and sends the resulting memory configuration to Bob, who then appends a suffix determined by $q$ and continues running the streaming algorithm. Once the full stream produces two well-separated objective values, depending on whether $z_q=0$ or $z_q=1$, the communication lower bound for \Index\ gives the space lower bound. The difficult part is constructing such a stream.
In particular, Bob’s suffix must depend only on his index, and the stream must consist of bounded-precision unit vectors satisfying the required cosine constraint in dimension much smaller than the \Index\ length. The contribution of vectors related to $z_q$ (call them {\em queried vectors}) must be separated multiplicatively after accounting for all other vectors. This is nontrivial because each density is an inner product with the final vector sum, so Bob’s suffix changes the densities of all earlier vectors, not only those queried vectors. 

The easier reductions illustrate this framework. For exact $M_1$, orthogonal
basis vectors directly encode one bit per dimension. For $M_2$, however,
Bob needs many mutually orthogonal suffix vectors whose sum points in a
required query direction. A Hadamard construction supplies such a
balanced orthonormal suffix; it keeps all pairwise inner products
nonnegative while turning the queried bit into a constant multiplicative
gap (Theorem~\ref{thm:second-density-moment}). For the diversity index $M_{-1}$, setting $N := m^2$, Alice interprets her bit string $z$ as an $m\times m$ Boolean matrix, and encodes it into $m$ vectors each of dimension $\Theta(m)$. Write Bob's index $q$ as $(i, j) \in [m]^2$, and call the $i$-th vector the {\em queried row} and the $j$-th coordinate of that row the {\em queried coordinate}.
Bob's first suffix block suppresses all row vectors except the queried row, and his second block tests the queried coordinate. A tiny
common coordinate makes every pairwise cosine strictly positive. We then show that the total $M_{-1}$ changes from about $3$ to about $2$ as the queried bit $z_{(i,j)}$ changes from $0$ to $1$, which gives the $\Omega(d^2)$ lower bound
(Theorem~\ref{thm:quadratic-lower-bound}). 

The higher-moment lower bounds require a further compression: a
$d$-dimensional construction must support many more than $d$ possible
queries. Low-correlation codes provide the query directions, but they
also create interference from nonqueried codewords that must be
controlled. In the signed setting, the nearly tight lower bound uses
antipodal pairs to cancel Alice's prefix, enabling a membership-query
construction followed by a standard counting argument.
Under nonnegative cosine, such cancellation is impossible, and the
unavoidable positive baseline can overwhelm the contribution of the
queried bit. Our replacement, which we call \emph{equal-sum moment
	isolation}, encodes each bit using one of two insertion-only gadgets with
the same vector sum; a finite-difference identity then makes their
$p$-th moment contributions differ only in the query-dependent term. The
remainder of this overview develops this construction, beginning with
the role of cancellation in the signed case and then describing the
algebraic gadget, its geometric realization, and the resulting 
reduction from \Index.

\subsection{Equal-Sum Moment Isolation: A Substitute for Cancellation}
\label{sec:technical-highlight}

\vspace{2mm}\noindent{\bf Cancellation in Lower Bound Arguments.\ }
A common turnstile lower bound argument lets Bob use deletions as a
\emph{late selector}. Alice inserts a large collection containing many
possible answers; after learning his query, Bob deletes the part that is
common to his side information, leaving a small residual that reveals the
answer. For example, strict-turnstile lower bounds for sampling based on
{\em Universal Relation} let Alice insert a support $X$ and Bob delete a known
subset $Y\subseteq X$, so that a sampler sees the residual
$X\setminus Y$~\cite{KNP+17}. The bounded-deletion framework makes this dependence on cancellation explicit~\cite{JW18}. However, a cancellation-based turnstile lower bound does not automatically imply the same lower bound for insertion-only algorithms.

Although our stream is insertion-only, signed cosine provides a geometric analogue of deletion: inserting two antipodal vectors causes them to cancel in the vector sum; this is the geometric counterpart of cancellation in a turnstile reduction.
Consider the lower bound for even moments under signed cosine
(Theorem~\ref{thm:signed-lower-bound}). Alice selects codewords and, for
each selected unit vector $a$, inserts the antipodal pair $a,-a$. Her
prefix therefore has vector sum zero. Bob can append a suffix whose sum
points in a queried code direction $x$. The two vectors in an antipodal
pair then have opposite densities, and for even $p$ their combined
contribution is
$$
\ip{a}{s}^p+\ip{-a}{s}^p = 2\abs{\ip{a}{s}}^p,
$$
where the vector sum $s$ is proportional to $x$. A self-correlation is large, whereas all cross-correlations are small. The final moment therefore reveals whether the queried codeword was
selected.

More precisely, Bob appends $m$ mutually orthogonal vectors whose sum is $s = \sqrt{m} x$. If $T$ is Alice's
set of selected unit codewords, then the full stream satisfies the identity
$$
M_p=2m^{p/2}\sum_{a\in T}\abs{\ip{a}{x}}^p + m.
$$
The final term $m$ is the contribution of Bob's suffix. If $x\in T$,
the self-correlation contributes at least $2m^{p/2}$; if $x\notin T$,
the low-correlation code makes the first term small. Thus, the
queried signal has scale $m^{p/2}$ while the suffix baseline has scale
$m$, a signal-to-baseline amplification of $m^{p/2-1}$. This growing
ratio is exactly why the construction gives the nearly matching lower
bound for even $p>2$.

\vspace{2mm}\noindent{\bf The Challenge under Nonnegative Cosine.\ }
Suppose $u_1,\ldots,u_r$ are unit vectors with pairwise nonnegative inner
products. Then
$$
\Big\|\sum_{i\in[r]} u_i\Big\|_2^2=\sum_{i\in[r]}\norm{u_i}_2^2 + 2\sum_{i<j}\ip{u_i}{u_j} \geq r.
$$
Thus, a nonempty prefix cannot have zero vector sum. Alice necessarily
creates a positive baseline that cannot be canceled when Bob appends his
query.

\vspace{2mm}\noindent{\bf Equal-Sum Gadgets.\ }
Our key new idea is to replace explicit cancellation with \emph{equal-sum moment isolation}.
For each codeword $a_i$, Alice has two possible multisets,
$\Gzero(a_i)$ and $\Gone(a_i)$, representing the bits zero and one. They
satisfy
$$
\abs{\Gzero(a_i)}=\abs{\Gone(a_i)}\ ,
\quad\text{and}\quad
\sum_{u\in\Gzero(a_i)}u = \sum_{u\in\Gone(a_i)}u.
$$
The common sum is not zero. Rather, the key point is that it is the
same for the two encodings. Since every density is an inner product
with the final vector sum, replacing $\Gzero(a_i)$ by $\Gone(a_i)$ does
not change the density of any vector outside the $i$-th gadget. The swap can therefore affect $M_p$ only through the different arrangements of vectors within the two multisets.

This is the first part of equal-sum moment isolation: rather than forcing Alice's sum to zero, we make it independent of her bit string. The second part is to ensure that the internal contributions of the two multisets differ by a pure $p$-th power, thereby cleanly separating the bit-dependent signal from the background. We next show how to achieve both properties using a finite-difference construction.

\vspace{2mm}\noindent{\bf A Finite-Difference Construction.\ }
In our hard instance, the vectors associated with bit $i$ come at
several \emph{levels}. A level-$j$ vector has density
$
B_i+t_jZ_i.
$
Here $B_i$ is the query-independent baseline contributed by Alice’s prefix and the local gadget structure, while $Z_i$ is the scalar response of the $i$-th codeword to Bob’s query direction. The parameter $t_j$ determines how strongly level $j$ amplifies this response. If we simply take a $p$-th
power, then
$$
(B_i+t_jZ_i)^p = \sum_{r=0}^p\binom pr B_i^{p-r}t_j^rZ_i^r.
$$
The desired term is $Z_i^p$, but it is accompanied by lower-order mixed
terms involving the potentially much larger and query-independent
baseline $B_i$. The purpose of the finite-difference gadget is to
remove all of these mixed terms without deleting any vector and without
violating the nonnegative-cosine promise.

Let $c_j=(-1)^{p-j}\binom pj$ and $t_j=t_0+jh$ for $j=0,1,\ldots,p$ and some constants $t_0, h > 0$. 
The coefficients $c_j$ define the $p$-th forward-difference operator. We have
$$
\sum_{j=0}^p c_jt_j^r=0
\quad \forall 0\leq r<p, \qquad\text{whereas}\qquad
\sum_{j=0}^p c_jt_j^p=p!\,h^p.
$$
Applying these identities after expanding the $p$-th power gives, for
all $B,Z\in\R$,
$$
\sum_{j=0}^p c_j(B+t_jZ)^p = p!\,h^pZ^p.
$$
Every term containing the baseline $B$ is eliminated exactly.

Of course, the signs of the $c_j$ cannot be implemented as negative
multiplicities in an insertion-only stream. We instead split the
finite difference into two multisets. Levels with $c_j>0$ appear
in $\Gone$, levels with $c_j<0$ appear in $\Gzero$, and every level is
inserted with the positive multiplicity $\abs{c_j}$. The negative signs
exist only in the mathematical comparison ``one-gadget minus
zero-gadget''; neither stream contains a negative update. After
accounting for a symmetric pair of vectors at each level, the difference
between the contributions of the two gadgets is exactly
\begin{equation}
	\label{eq:kappa-p}
	\sum_{u\in\Gone(a_i)}\ip{u}{s}^p - \sum_{u\in\Gzero(a_i)}\ip{u}{s}^p = \kappa_p Z_i^p,
	\qquad \text{where} \quad
	\kappa_p=2p!h^p>0.
\end{equation}
The same finite-difference identities also explain why the gadgets have
equal cardinality and equal vector sum: the zeroth and first moments of
the signed coefficient sequence are zero.

\vspace{2mm}\noindent{\bf Local Anchors, Balancing Components, and Colors.\ }
It remains to realize these algebraic levels by unit vectors whose
pairwise inner products are all nonnegative.  Set $m=(d-1)/4$. We use three mutually
orthogonal blocks. The first is the \emph{code block}, containing a
sparse nonnegative unit codeword $a_i\in\mathbb{R}_{\geq 0}^m$ for each bit $z_i$ of Alice’s input string. The other two blocks have orthonormal bases
$e_1,\ldots,e_m$ and $g_1,\ldots,g_m$. A balanced coloring
$\chi:[N]\to[m]$ assigns each codeword to one of these $m$ pairs of
auxiliary directions. The level-$j$ vectors for bit $i$ are
$$
u_{i,j,\sigma} = t_ja_i+\lambda e_{\chi(i)}+\sigma\mu_jg_{\chi(i)}, \qquad \sigma\in\{-1,+1\},
$$
where $\mu_j=\sqrt{1-\lambda^2-t_j^2}$ makes the vector have unit norm.

The coordinate $e_{\chi(i)}$ is the \emph{local anchor}. It is local
because only vectors of the same color share it. The coordinate
$\sigma\mu_jg_{\chi(i)}$ is the \emph{balancing component}. Both signs
are inserted at every level, so the balancing components cancel
from the vector sum of a gadget. They nevertheless serve an important
geometric purpose: their level-dependent lengths allow all level
vectors to be normalized without changing the coefficients $t_j$ in
the code block.

The anchor protects nonnegativity. Vectors of different colors have
orthogonal anchor and balancing components, so their inner product is
simply $t_jt_k\ip{a_i}{a_{i'}}\geq0$. For vectors of the same color,
opposite balancing signs may contribute a negative term, but the common
anchor contributes $\lambda^2$. The constants are chosen so that the
anchor always dominates the worst possible negative balancing
contribution. Consequently, every pair of gadget vectors has
nonnegative inner product.

The coloring is needed for dimension efficiency. Giving every one of
the $N$ encoded bits its own anchor and balancing directions would
require dimension $\Omega(N)$ and would destroy the superlinear lower
bound. We instead reuse only $m$ anchor--balancing pairs. A balanced
coloring assigns $N/m$ gadgets to each color, which keeps the local
anchor baseline $B_i$ uniform and of order $N/m$. Thus, the same
construction simultaneously preserves nonnegative geometry, uses only
$O(m)$ auxiliary dimensions, and controls the unavoidable baseline.

Pairing the two balancing signs gives
$$
u_{i,j,+}+u_{i,j,-} = 2t_ja_i+2\lambda e_{\chi(i)}.
$$
The zeroth- and first-order finite-difference identities therefore imply
that $\Gzero(a_i)$ and $\Gone(a_i)$ have the same sum, of the form
$\gamma_pa_i+\beta_pe_{\chi(i)}$, where $\gamma_p,\beta_p>0$ are constants depending only on $p$. Because the two balancing signs occur with equal multiplicity, every gadget has zero total $g$-component. The balancing coordinates therefore
make no contribution to the density of a level vector. What remains is
a level-independent baseline $B_i$ together with the code-space
response scaled by $t_j$, so the density is exactly
$B_i+t_jZ_i$. This is precisely the affine form to which the finite-difference
identity applies.

\vspace{2mm}\noindent{\bf The Query and Sparse Nonnegative Codes.\ }
Alice receives $z\in\{0,1\}^N$ and inserts a gadget (that is, a multiset of vectors) $\mathcal G_{z_i}(a_i)$ for each bit $z_i$ and codeword $a_i$.
By the equal-sum property, the vector sum of this entire prefix is
independent of $z$. Bob receives a query index $q$. In a fourth,
orthogonal block with basis $w_1,\ldots,w_m$, he uses the unit vectors
$$
v_\ell = \frac{1}{\sqrt m}a_q + \sqrt{1-\frac1m}\,w_\ell, \qquad \ell\in[m],
$$
and appends $L=m^{(p-2)/2}$ copies of each. Their total code-space component is $Qa_q$, where $Q=L\sqrt m=m^{(p-1)/2}.$ This choice of $L$ balances the query signal with the contribution of Bob's suffix itself: there are $mL$ suffix vectors, each with density $O_p(L)$, and hence $mL^{p+1}=O_p(Q^p).$ The suffix vectors have nonnegative inner products with one another and
with every prefix vector.

The final code response seen by gadget $i$ is
$$
Z_i = \gamma_p\sum_{r\in[N]}\ip{a_i}{a_r} + Q\ip{a_i}{a_q}.
$$
The first term is the code-space baseline created by Alice's prefix,
while the second is Bob's query response. In particular, the
self-correlation $\ip{a_q}{a_q}=1$ gives $Z_q\geq Q$.

The codewords are normalized incidence vectors of sparse subsets of
$[m]$. Besides being nonnegative, they satisfy two
correlation bounds. First, for a sufficiently small constant $\iota$,
$$
\sum_{i\neq q}\ip{a_i}{a_q}^p\leq\iota,
$$
which limits the total leakage of Bob's query into nonqueried gadgets.
Second, writing
$
R_{\max} = \left(N/m\right)^{(p-1)/(p-2)},
$
their first power correlation row sums satisfy
$$
1+\sum_{r\neq i}\ip{a_i}{a_r} = O_p(R_{\max}).
$$
This controls the code-space baseline contributed by Alice's prefix.
Together, the two bounds imply
$$
\sum_{i\neq q}Z_i^p = O_p\bigl(NR_{\max}^p+\iota Q^p\bigr).
$$

\vspace{2mm}\noindent{\bf The Final Reduction.\ }
Let $V_0(q)$ be the value of $M_p$ on the all-zero instance, in which
Alice uses $\Gzero(a_i)$ at every index and Bob asks query $q$. Because the two bit-gadgets have the same vector sum, the final stream sum is independent of Alice's bit string; consequently, so are the responses $Z_i$ and the densities of Bob's suffix vectors. We may therefore change Alice's gadgets one at a time while keeping all other density parameters fixed, and apply the finite-difference identity separately to each changed gadget, giving the exact decomposition
$$
M_p(P(z,q)) = V_0(q)+\kappa_p\sum_{i:z_i=1}Z_i^p,
$$
where $\kappa_p$ is defined in \eqref{eq:kappa-p}.

We set $N=c_Nm^{\rho_p}$ for a constant $c_N$,
and $\rho_p=\frac{p^2(p-1)}{2(p^2-2)}.$
The exponent $\rho_p$ is obtained by matching the polynomial scales of
Alice's aggregate code-space baseline and Bob's signal:
$$
NR_{\max}^p = N\left(\frac Nm\right)^{p(p-1)/(p-2)} = \Theta_p(Q^p).
$$
After fixing this exponent $\rho_p$, we first choose $\iota$ small enough to
control the query-leakage term $\iota Q^p$, and then choose the constant
$c_N$ small enough to make $NR_{\max}^p$ a small multiple of $Q^p$.
Thus, for any desired constant $\nu>0$, we have
$$
\sum_{i\neq q}Z_i^p\leq \nu Q^p.
$$

The global baseline $V_0(q)$ contains several common contributions. The
local-anchor contribution is
$
O_p\big(N\left(N/m\right)^p\big)=o(Q^p);
$
Alice's code-space baseline contributes $O_p(NR_{\max}^p) = O_p(Q^p)$; Bob's query contributes $O_p((1+\iota)Q^p)$ across the zero-gadgets, where the $1$
accounts for the queried codeword itself; and Bob's suffix contributes
$O_p(mL^{p+1})=O_p(Q^p)$. Consequently, $V_0(q)=O_p(Q^p)$.

If $z_q=1$, the queried gadget contributes at least $\kappa_pQ^p$. If $z_q=0$, the total contribution of all selected nonqueried gadgets can be made at most $\kappa_p\nu Q^p$ for a small constant $\nu$. Since the common
baseline is only a constant multiple of $Q^p$, these bounds give a
constant multiplicative gap, allowing an approximation to $M_p$ to
recover $z_q$.

Finally, the construction has dimension $\Theta(m)$ and encodes
$N=\Theta_p(m^{\rho_p})$ bits, so the one-way lower bound for
\Index$_N$ gives $\Omega_p(d^{\rho_p})$ bits. A small common perturbation, followed by rational approximation, makes every vector use only $O_p(\log d)$ bits per coordinate while preserving the constant gap and nonnegative pairwise cosines.

\vspace{2mm}\noindent{\bf Summary and Potential Generalization.\ }
In our construction above, both equal sums and finite differences are modular background-control mechanisms. Equal sums prevent Alice's choices from changing the global background, while finite differences express each bit-dependent gadget contribution as a controlled baseline plus a clean $Z_i^p$ perturbation. The sparse code and Bob's suffix then ensure that the queried perturbation $Z_q^p$ dominates the remaining background.

This suggests a more general template for lower bounds in insertion-only streams. One first identifies a small \emph{external interface} (in our case, the gadget's vector sum) through which a local gadget affects the rest of the stream, and then constructs two legal positive-multiplicity gadgets that encode different input bits while having the same interface. Switching between the two gadgets then leaves the surrounding instance unchanged, localizing the bit-dependent effect to the gadget itself. The remaining task is problem-specific: one must design the two gadgets so that their contributions to the objective differ by a sufficiently large signal while their common background remains controllable. For density moments, the polynomial structure of $M_p$ makes finite differences a particularly effective way to achieve this separation, but other objectives may require different constructions.

\section{Preliminaries}
\label{sec:preliminaries}

\noindent{\bf Computation Model and Problem Setup.\ }
In the data stream model, the input is a sequence of nonzero vectors $P=(x_1,\ldots,x_n)$, where each $x_i\in\mathbb{R}^d$. The algorithm is allowed to make a sequential pass over the stream while using limited memory. At the end of the stream, it computes the target function on $P$ using its final memory configuration.

For each $i \in [n]$, let $u_i:={x_i}/{\|x_i\|_2}$ be the normalized vector. We slightly abuse the notation and also write $P=(u_1,\ldots,u_n)$ as the input stream, since each input vector can be normalized upon arrival.
For each prefix $P_t=(u_1,\ldots,u_t)$, write
$s_t:=\sum_{j\leq t}u_j$ and define the prefix density of $u_i$ by
$$
D_i(t):=\sum_{j\leq t}\langle u_i,u_j\rangle = \langle u_i,s_t\rangle
$$
for any $i\leq t$.
We write
$
M_p(P_t):=\sum_{i\leq t}D_i(t)^p
$, 
and $D_i:=D_i(n)$.

We distinguish two regimes. In the \emph{nonnegative-cosine regime},
we assume that $\langle u_i,u_j\rangle\geq 0$ for every pair
$i,j\in[n]$. Consequently, every prefix density satisfies
$D_i(t)\geq 1$ and is nondecreasing in $t$. In the \emph{signed-cosine regime}, no sign restriction is
imposed. The densities may then be zero or negative, so inverse moments need
not be defined and odd higher moments may be negative or suffer from
severe cancellation. We therefore focus in this regime on even
integers $p$, for which
$
M_p(P)=\sum_{i\in[n]}D_i^p = \sum_{i\in[n]}|D_i|^p
$
is a nonnegative statistic.

\vspace{2mm}
\noindent{\bf Conventions.\ }
Unless explicitly stated otherwise, $p$ denotes an integer.
We focus on integer exponents $p \ge -1$; the case $p=0$ is simply the stream length. Some results under nonnegative cosine extend to arbitrary fixed real exponents $p>2$, and we state these extensions explicitly.

Our upper bounds are measured in words, where one word stores an input coordinate or an intermediate real-valued scalar; thus, a vector in $\mathbb{R}^d$ occupies $O(d)$ words. Our lower bounds are stated in the bit model. The hard instances use rational coordinates representable with $O_p(\log d)$ bits each, and the algorithm's memory is measured in bits. This finite-precision requirement is necessary for information-theoretic lower bounds, since an unrestricted real number could encode arbitrarily many bits in a single coordinate or memory word.

For simplicity, we describe our algorithms using fully random hash functions and random signs. These can be replaced by standard bounded-independence constructions or pseudorandom generators for space-bounded computation, incurring only polylogarithmic overhead and a negligible additional error probability.

We use the terms ``point'' and ``vector'' interchangeably.  Unless otherwise specified, $\norm{\cdot}$ denotes the Euclidean norm. We omit floors and ceilings when they do not affect the correctness of the analysis. All logarithms are base $2$.  For $X \ge 0$, we say that $\widetilde{X}$ is a $(1+\eps)$-approximation to $X$ if
$(1-\eps)X \leq \widetilde{X} \leq (1+\eps)X.$

\vspace{2mm}
\noindent{\bf Communication Complexity.\ }
To establish our space lower bounds, we reduce from the classical communication complexity problem \Index.  In the \Index$_N$ problem, Alice receives a
string $X\in\{0,1\}^N$, Bob receives an index $J\in[N]$, and Bob must output
$X_J$ after receiving one message from Alice.

\begin{lemma}[see, e.g., \cite{KN96}]
	\label{lem:index-lower-bound}
	Every one-way protocol for \Index$_N$ with error at most
	$1/3$ uses $\Omega(N)$ bits of communication.
\end{lemma}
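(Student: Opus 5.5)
We prove the $\Omega(N)$ bound for one-way \Index$_N$ with error at most $1/3$ via Yao's minimax principle and an information-theoretic argument. Under the uniform distribution on $(X,J)$, a randomized protocol with error $1/3$ yields, by averaging over public randomness, a deterministic protocol with distributional error at most $1/3$. It therefore suffices to show that every deterministic message function $\Pi:\{0,1\}^N\to\{0,1\}^c$, together with a decoder $g(\Pi,J)$ that errs with probability at most $1/3$ over uniform $(X,J)$, must have $c=\Omega(N)$.

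Let $M=\Pi(X)$ with $X$ uniform. We compare two quantities.
\begin{itemize}
\item[(i)] Since $M$ takes at most $2^c$ values, $c\ge H(M)\ge I(X;M)$.
\item[(ii)] Applying the chain rule together with the independence of the coordinates of $X$,
\[
I(X;M)=\sum_{j\in[N]} I(X_j;M\mid X_{<j})\ \ge\ \sum_{j\in[N]} I(X_j;M).
\]
The inequality holds because $I(X_j;M\mid X_{<j}) = H(X_j)-H(X_j\mid M,X_{<j})\ge H(X_j)-H(X_j\mid M)$.
\end{itemize}

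Next, let $\epsilon_j=\Pr[g(M,j)\ne X_j]$. Then $\frac1N\sum_j\epsilon_j\le 1/3$. By Fano's inequality for a binary variable, $H(X_j\mid M)\le H_2(\epsilon_j)$. Hence $I(X_j;M)\ge 1-H_2(\epsilon_j)$ whenever $\epsilon_j\le 1/2$, and the bound $I(X_j;M)\ge 0$ holds trivially otherwise.

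Define $\phi(\epsilon)=1-H_2(\min(\epsilon,1/2))$. This function is convex and nonincreasing on $[0,1]$. Jensen's inequality then gives
\[
\sum_j I(X_j;M)\ \ge\ N\,\phi\!\left(\tfrac1N\textstyle\sum_j\epsilon_j\right)\ \ge\ N\bigl(1-H_2(1/3)\bigr).
\]
Since $1-H_2(1/3)\approx 0.08>0$, combining this with (i) and (ii) yields $c\ge (1-H_2(1/3))N=\Omega(N)$.

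The only delicate point is the convexity step. $H_2$ is concave, so $1-H_2$ is convex on $[0,1/2]$, and capping at $1/2$ keeps $\phi$ convex. As an alternative that avoids Fano's inequality, one can give a direct counting argument. By Markov's inequality, some message class $S$ of size at least $2^{N-c}$ has average error at most $1/3$ over $J$ and $X$. The decoder defines a single string $y$ for the class, and every $x\in S$ then satisfies $\mathrm{dist}(x,y)\le (1/3+o(1))N$ for most of $S$. Bounding the size of a Hamming ball gives $2^{N-c}\lesssim 2^{H_2(1/3)N}$, which yields the same conclusion.
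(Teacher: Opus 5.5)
Your main argument is correct. It differs from the paper only in that the paper gives no proof at all: the lemma is imported as a classical fact, with a pointer to the Kushilevitz--Nisan textbook.

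You instead give the standard self-contained information-theoretic proof:
\begin{itemize}
\item fix the coins via the easy direction of Yao's principle under the uniform distribution on $(X,J)$;
\item bound $c\ge H(M)\ge I(X;M)$;
\item use the chain rule and the independence of the $X_j$ to get $I(X;M)\ge\sum_j I(X_j;M)$;
\item apply binary Fano coordinatewise, and conclude by convexity.
\end{itemize}
Each step checks out and yields the explicit bound $c\ge(1-H_2(1/3))N\approx 0.08N$.

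What your version buys over the citation is transparency about the model. It holds for public-coin protocols, hence also private-coin ones. That is exactly what the paper's reductions need, since the streaming algorithm's random bits need not be part of the memory configuration Alice sends.

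Two cosmetic points. First, with variable-length messages you should use $H(M)\le c+1$. Second, the capping at $1/2$ is unnecessary. The bound $I(X_j;M)\ge 1-H_2(\epsilon_j)$ holds for every $\epsilon_j$, the function $1-H_2$ is convex on all of $[0,1]$, and it is decreasing on $[0,1/2]$, which contains the average error.

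Your alternative counting sketch, however, is not right as written, and it should not be presented as an equivalent route without repair. There are two problems.

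\begin{itemize}
\item The claim that some message class of size at least $2^{N-c}$ has average error at most $1/3$ does not follow from averaging. Suppose one class holds 90\% of the strings at error rate $0.37$, and all remaining classes have size below $2^{N-c}$ and error $0$. This respects the global bound $1/3$, yet no class meets both conditions. The correct statement discards classes of size below $\delta 2^{N-c}$, which have total mass at most $\delta 2^N$. It then finds a surviving class with error at most $1/(3(1-\delta))$.
\item Average Hamming distance about $N/3$ from the decoded string $y$ does not put \emph{most} of $S$ within distance $(1/3+o(1))N$. Markov only gives a constant fraction within distance $(1+\delta')\cdot\frac{N}{3(1-\delta)}$, which stays below $N/2$ for small $\delta,\delta'$.
\end{itemize}

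A constant fraction is enough for the Hamming-ball bound, so the sketch is fixable with constant-factor losses. Your primary entropy argument does not depend on it.
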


\vspace{2mm}
\noindent{\bf Roadmap.\ }
The rest of the paper is organized as follows. Section~\ref{sec:basics} collects the algebraic and spectral identities that underlie our exact algorithms. Section~\ref{sec:diversity} studies the diversity index $M_{-1}$ in the nonnegative-cosine regime, giving a one-pass approximation algorithm and a matching lower bound. Section~\ref{sec:signed} considers even density moments under signed cosine, where tensorization and a lifted AMS sketch give one-pass approximation algorithms and nearly matching lower bounds. Section~\ref{sec:nonnegative} gives one-pass upper bounds for odd moments $p>2$ under nonnegative cosine, and Section~\ref{sec:nonnegative-lb} proves the corresponding lower bounds via equal-sum moment isolation. We conclude in Section~\ref{sec:conclusion}.

\section{Algebraic Identities and Exact Computation}
\label{sec:basics}

We begin with three simple mathematical facts.  The first two are algebraic consequences of the finite-dimensional cosine feature map; they directly give the algorithms for the exact computation of $M_p$ for $p \ge 1$. The third is a spectral consequence of positive semidefiniteness and
nonnegativity.

For a multi-index vector $\vec{\alpha}=(\alpha_1,\ldots,\alpha_d)\in\mathbb{N}^d$ and a vector $z \in \mathbb{R}^d$, write $|\vec{\alpha}|:=\sum_{k \in [d]}\alpha_k$,  
$z^{\vec{\alpha}}:=\prod_{k \in [d]} z_k^{\alpha_k}$, and multinomial coefficient
$
\binom{p}{\vec{\alpha}}:=\frac{p!}{\alpha_1!\cdots\alpha_d!}.
$

\subsection{Exact Computation of $M_1$}
\label{sec:SS}

The first density moment collapses to the squared norm of the  vector
sum.  Recall that $s_n=\sum_{i\in[n]}u_i$ denotes the sum of the normalized
vectors in the stream.

\begin{theorem}
	\label{thm:SS}
	We have $M_1(P)=\norm{s_n}_2^2.$ Consequently, $M_1(P)$ can be computed exactly in one pass using $O(d)$ words.
\end{theorem}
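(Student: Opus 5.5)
The plan is to expand $M_1$ by bilinearity of the inner product and then read off the one-pass algorithm. Since $p=1$, the moment is simply $M_1(P)=\sum_{i\in[n]} D_i=\sum_{i\in[n]}\ip{u_i}{s_n}$, using the prefix-density identity $D_i=D_i(n)=\ip{u_i}{s_n}$ from the preliminaries. Linearity in the first argument then gives $\sum_{i}\ip{u_i}{s_n}=\ip{\sum_i u_i}{s_n}=\ip{s_n}{s_n}=\norm{s_n}_2^2$. Equivalently, $M_1=\sum_{i,j}\ip{u_i}{u_j}=\mathbf{1}^\top G\mathbf{1}$ for the Gram matrix $G=AA^\top$, and $\mathbf{1}^\top AA^\top\mathbf{1}=\norm{A^\top\mathbf{1}}_2^2=\norm{s_n}_2^2$.

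For the algorithm, maintain a single vector $s\in\R^d$, initialized to $0$. When $x_t$ arrives, normalize it to $u_t=x_t/\norm{x_t}_2$ and update $s\leftarrow s+u_t$. Normalization needs only $O(d)$ scratch words, and the update is in place, so the memory is $O(d)$ words throughout. At the end of the stream, $s=s_n$, and the algorithm outputs $\norm{s}_2^2$. By the identity above this equals $M_1(P)$ exactly, in the word model where each scalar is stored exactly. No sign assumption is needed, so the result holds in the signed-cosine regime as well.

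There is no real obstacle here. The only points needing care are that density is defined with respect to the final sum $s_n$, which the identity handles automatically, and that the word-model convention lets the arithmetic be exact.
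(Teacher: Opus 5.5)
Your proof is correct and matches the paper's: bilinearity of the inner product gives $M_1(P)=\sum_{i,j}\ip{u_i}{u_j}=\norm{s_n}_2^2$, and the algorithm maintains only the running sum $s_t=s_{t-1}+u_t$ and outputs its squared norm. Your remark that no sign assumption is needed is also right, consistent with the paper listing $M_1$ under the signed regime.
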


\begin{proof}
	By bilinearity,
	$$
	M_1(P) = \sum_{i \in [n]}\sum_{j \in [n]}\ip{u_i}{u_j} = \ip{\sum_{i \in [n]} u_i}{\sum_{j \in [n]} u_j} = \norm{s_n}_2^2.
	$$
	The algorithm maintains $s_t=s_{t-1}+u_t$ and outputs its squared Euclidean
	norm at the end.
\end{proof}

We complement the upper bound with a lower bound matching its linear dependence on $d$.

\begin{theorem}
	Any one-pass streaming algorithm that computes $M_1(P)$ exactly with
	probability at least $2/3$ on every stream of vectors in $\mathbb{R}^d$
	requires $\Omega(d)$ bits of space. The lower bound holds for
	streams of length $O(d)$ consisting of standard basis vectors, and hence
	in the nonnegative-cosine regime.
\end{theorem}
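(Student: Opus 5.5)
We reduce from \Index$_d$ (Lemma~\ref{lem:index-lower-bound}), using the Index communication interface described in Section~\ref{sec:overview-index}. Suppose a one-pass algorithm computes $M_1$ exactly with probability at least $2/3$ using $S$ bits. Alice holds $z\in\{0,1\}^d$. For each coordinate $k$ with $z_k=1$, she feeds the standard basis vector $e_k$ into the algorithm, and then sends the memory configuration to Bob. Bob holds an index $q\in[d]$. He resumes the algorithm on the suffix consisting of a single copy of $e_q$ and reads off the exact value of $M_1$. All vectors are standard basis vectors, so every pairwise inner product lies in $\{0,1\}$. The stream is therefore in the nonnegative-cosine regime, has length at most $d+1=O(d)$, and uses coordinates with $O(1)$ bits.

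By Theorem~\ref{thm:SS}, $M_1=\norm{s_n}_2^2$. Let $w=\sum_k z_k$ be the Hamming weight of $z$. The final sum is $s_n=\sum_{k:z_k=1}e_k+e_q$. If $z_q=0$, the suffix adds a new coordinate equal to $1$, so $M_1=w+1$. If $z_q=1$, the coordinate $q$ becomes $2$, so $M_1=(w-1)+4=w+3$.

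The only subtlety is that Bob does not know $w$; this is the step that needs care. I would fix it with one of the following.
\begin{itemize}
\item Alice appends $\lceil\log_2(d+1)\rceil$ bits encoding $w$ to her message. This costs only $O(\log d)$ extra bits.
\item Bob uses the parity of $M_1-w$. This again requires knowing $w$, so it does not remove the dependence by itself.
\item Alice inserts $e_k$ for $z_k=1$ and, in a second block of $d$ fresh coordinates, $e_{d+k}$ for $z_k=0$. Then $w$ is replaced by the constant $d$, and $M_1$ equals $d+1$ or $d+3$. This stays within dimension $2d$ and needs no extra bits.
\end{itemize}
I prefer the third option because it is clean.

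With the third option, an exact answer distinguishes $z_q=0$ from $z_q=1$ with probability at least $2/3$. This yields a one-way \Index$_d$ protocol with communication $S$, so $S=\Omega(d)$, and rescaling $2d$ to $d$ gives the theorem. With the first option the protocol uses $S+O(\log d)$ bits, which still yields $S=\Omega(d)$.
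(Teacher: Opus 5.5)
Your third option is exactly the paper's proof: with $m=\lfloor d/2\rfloor$ and orthonormal $e_1,\ldots,e_m,f_1,\ldots,f_m$, Alice inserts $e_i$ if $z_i=1$ and $f_i$ if $z_i=0$, Bob appends $e_j$, and $M_1\in\{m+1,m+3\}$ reveals $z_j$. Your first option (also sending the Hamming weight with $O(\log d)$ extra bits) is likewise valid, since $S+O(\log d)=\Omega(d)$ still forces $S=\Omega(d)$, but the padding trick makes it unnecessary.
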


\begin{proof}
	We reduce from the communication problem \Index. Recall that in $\Index_m$,
	Alice receives a string $z=(z_1,\ldots,z_m)\in\{0,1\}^m$, Bob receives an
	index $j\in[m]$, and Bob must recover $z_j$ after receiving one message
	from Alice. Any one-way protocol for $\Index_m$ with
	error probability at most $1/3$ requires $\Omega(m)$ bits of
	communication.
	
	Let $m:=\left\lfloor {d}/{2}\right\rfloor,$ and let $e_1,\ldots,e_m,f_1,\ldots,f_m$
	be mutually orthonormal vectors in $\mathbb{R}^d$. For each $i\in[m]$,
	Alice inserts the vector
	$$
	a_i :=
	\begin{cases}
		e_i, & \text{if } z_i=1,\\
		f_i, & \text{if } z_i=0.
	\end{cases}
	$$
	The vectors $a_1,\ldots,a_m$ are mutually orthogonal. Alice runs the
	streaming algorithm on this prefix and sends its memory configuration to Bob.
	
	Given $j\in[m]$, Bob resumes the algorithm and appends the vector $e_j$.
	We have
	$$
	M_1(P) = \Big\| \sum_{i \in [m]} a_i+e_j \Big\|^2.
	$$
	If $z_j=0$, then $a_j=f_j$ is orthogonal to $e_j$, and all $m+1$
	vectors in the completed stream are mutually orthogonal. Therefore,
	$M_1(P)=m+1.$ If $z_j=1$, then $a_j=e_j$, so the coefficient of $e_j$ in the final
	vector sum is two, while the remaining $m-1$ vectors are orthogonal to
	$e_j$ and to one another. Hence,
	$
	M_1(P)=2^2+(m-1)=m+3.
	$
	Thus, whenever the streaming algorithm outputs the exact value of
	$M_1(P)$, Bob can recover $z_j$ by distinguishing $m+1$ from $m+3$.
	
	If the streaming algorithm uses $s$ bits of memory, this construction
	gives a one-way protocol for $\Index_m$ communicating exactly
	$s$ bits and having error probability at most $1/3$. Consequently,
	$s=\Omega(m)=\Omega(d).$ All vectors in the construction are standard basis vectors, so every pairwise cosine similarity belongs to $\{0,1\}$. The stream therefore
	lies in the nonnegative-cosine regime, has length $m+1=O(d)$, and uses
	coordinates with $O(1)$-bit complexity.
\end{proof}

\subsection{Exact Density Moments}
\label{sec:exact-Fp}

For an integer $p\geq1$, define the degree-$p$ counters
$A_{\vec{\alpha}}:=\sum_{i \in [n]} u_i^{\vec{\alpha}}$ for every $\vec{\alpha}\in\mathbb{N}^d\text{ with }|\vec{\alpha}|=p$.
These are the coordinates of the symmetric tensor
$\sum_{i \in [n]} u_i^{\otimes p}$ in a monomial basis.

\begin{theorem}
	\label{thm:higher-moments}
	For every integer $p\geq1$, $M_p(P)$ can be computed exactly in one pass using $O_p(d^p)$ words.
\end{theorem}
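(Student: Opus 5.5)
The plan is to use the tensor identity from the overview, $M_p=\ip{\sum_{i\in[n]}u_i^{\otimes p}}{s^{\otimes p}}$. Concretely, I would show that $M_p(P)$ can be written entirely in terms of the degree-$p$ counters $A_{\vec\alpha}$ and the final vector sum $s=s_n$. The algorithm then maintains both quantities in one pass and combines them at the end.

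The first step is the multinomial expansion of each density. Since $D_i=\ip{u_i}{s}=\sum_{k\in[d]}u_{i,k}s_k$, the multinomial theorem gives
$$
D_i^p=\sum_{|\vec\alpha|=p}\binom{p}{\vec\alpha}u_i^{\vec\alpha}s^{\vec\alpha}.
$$
Summing over $i$ and exchanging the finite sums then yields
$$
M_p(P)=\sum_{|\vec\alpha|=p}\binom{p}{\vec\alpha}A_{\vec\alpha}\,s^{\vec\alpha}.
$$
This identity is exact and holds for arbitrary signs of the inner products, so no nonnegativity assumption is needed.

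The second step is the streaming implementation. When $u_t$ arrives, the algorithm normalizes it, adds $u_t$ to $s$, and for every multi-index with $|\vec\alpha|=p$ adds $u_t^{\vec\alpha}$ to $A_{\vec\alpha}$. Both updates are linear, which is what makes a single pass possible, because the unknown final $s$ only enters when we evaluate the formula at the end of the stream. At that point the algorithm outputs the displayed sum. The number of multi-indices is $\binom{d+p-1}{p}\le d^p$ (it is also $O_p(d^p)$), and $s$ adds $d$ more words, so the total space is $O_p(d^p)$ words. The multinomial coefficients and the monomials $s^{\vec\alpha}$ can be computed on the fly in the final summation with $O_p(1)$ extra working space.

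The argument is routine and I do not expect a real obstacle. The only point that needs care is the exchange of sums that separates the dependence on $u_i$, which is accumulated in $A_{\vec\alpha}$, from the dependence on $s$, which is known only at the end. The exchange is valid because all the sums are finite and the expansion is a polynomial identity.
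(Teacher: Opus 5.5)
Your proposal is correct and follows the paper's proof essentially step for step. The proof uses the same steps: the multinomial expansion of $D_i^p=\ip{u_i}{s_n}^p$, exchanging the finite sums to get $M_p(P)=\sum_{|\vec{\alpha}|=p}\binom{p}{\vec{\alpha}}A_{\vec{\alpha}}\,s_n^{\vec{\alpha}}$, maintaining $s_t$ and the $\binom{d+p-1}{p}=O_p(d^p)$ counters in one pass, and evaluating the formula at the end.
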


\begin{proof}
	For every $i$, we have
	$$
	D_i^p = \ip{u_i}{s_n}^p = \left(\sum_{k \in [d]} u_{i,k}s_{n,k}\right)^p = \sum_{|\vec{\alpha}|=p}\binom{p}{\vec{\alpha}}u_i^{\vec{\alpha}}s_n^{\vec{\alpha}}.
	$$
	Summing over $i$ and interchanging the two sums gives
	$$
	M_p(P) = \sum_{|\vec{\alpha}|=p}\binom{p}{\vec{\alpha}} \left(\sum_{i \in [n]} u_i^{\vec{\alpha}}\right)s_n^{\vec{\alpha}}.
	$$
	During the stream, the algorithm maintains
	$s_t$ and counter $A_{\vec{\alpha}}(t)=\sum_{i\leq t}u_i^{\vec{\alpha}}$ for every $\vec{\alpha}$ with $|\vec{\alpha}|=p$.
	There are $\binom{d+p-1}{p} = O_p(d^p)$ such counters.
\end{proof}

\subsection{The Diversity--Rank Principle}
\label{sec:diversity-rank}

The diversity-index algorithm that we are going to discuss in the next section rests on a structural fact that holds for every prefix of the stream.  We consider the nonnegative-cosine regime; that is, $\langle u_i,u_j\rangle\geq 0$ for every pair $(i, j) \in [n]^2$.

Let $U_t\in\R^{d\times t}$ be the matrix whose columns are
$u_1,\ldots,u_t$, and let $G_t:=U_t^\top U_t$ be the $t\times t$ cosine Gram matrix.  Define the diagonal matrix of row sums 
$
\Delta_t:=\diag\bigl(D_1(t),\ldots,D_t(t)\bigr).
$
Because every $D_i(t) =\sum_{j \in [t]}\langle u_i,u_j\rangle \geq1$, the matrix $\Delta_t$ is invertible.

\begin{lemma}
	\label{lem:rank-bound}
	In the nonnegative-cosine regime, every nonempty prefix $P_t = (u_1, \ldots, u_t)$ satisfies
	$$
	1\leq M_{-1}(P_t)\leq \rank(G_t)\leq d.
	$$
\end{lemma}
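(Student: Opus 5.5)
The plan is to study the normalized matrix $B_t:=\Delta_t^{-1/2}G_t\Delta_t^{-1/2}$, as previewed in the technical overview, and read off all three inequalities from its spectrum. First, since $G_t=U_t^\top U_t$ is positive semidefinite and $\Delta_t^{-1/2}$ is an invertible diagonal matrix, $B_t$ is positive semidefinite by congruence. Moreover, $\rank(B_t)=\rank(G_t)\le\rank(U_t)\le d$. Next, the diagonal entries of $B_t$ are $\ip{u_i}{u_i}/D_i(t)=1/D_i(t)$, so
$$
\tr(B_t)=\sum_{i\le t}D_i(t)^{-1}=M_{-1}(P_t).
$$

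The second step is to control the eigenvalues of $B_t$. Note that $B_t$ is similar to $\Delta_t^{-1}G_t=\Delta_t^{-1/2}B_t\Delta_t^{1/2}$. The latter is entrywise nonnegative in the nonnegative-cosine regime, and its rows sum to exactly $1$ because $D_i(t)$ is the $i$-th row sum of $G_t$. It is therefore row-stochastic, so its spectral radius equals $1$, and $1$ is an eigenvalue with eigenvector $\mathbf{1}$. Since $B_t$ is symmetric PSD and has the same spectrum, all eigenvalues of $B_t$ lie in $[0,1]$, and at least one equals $1$. (Alternatively, the vector $\Delta_t^{1/2}\mathbf{1}$ is directly an eigenvector of $B_t$ with eigenvalue $1$, and the bound $\lambda_{\max}\le 1$ follows from $\norm{\Delta_t^{-1}G_t}_\infty=1$.)

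The conclusion then follows by summing eigenvalues. We have $M_{-1}(P_t)=\tr(B_t)=\sum_k\lambda_k$. This sum is at least $1$ because one eigenvalue equals $1$ and the rest are nonnegative. It is at most the number of nonzero eigenvalues, which is $\rank(B_t)=\rank(G_t)\le d$, because each eigenvalue is at most $1$.

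The only step needing care is the eigenvalue upper bound, and in particular the transfer of the stochastic-matrix spectral-radius bound to $B_t$ via similarity. This step uses nonnegativity of the cosines essentially, since without it the row-sum argument fails. The invertibility of $\Delta_t$ is already guaranteed by $D_i(t)\ge1$.
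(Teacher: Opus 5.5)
Your proposal is correct and matches the paper's proof essentially step for step. Both use the normalization $B_t=\Delta_t^{-1/2}G_t\Delta_t^{-1/2}$ and the trace identity $\tr(B_t)=M_{-1}(P_t)$. Both confine the spectrum to $[0,1]$ via similarity to the row-stochastic matrix $\Delta_t^{-1}G_t$, take $\Delta_t^{1/2}\mathbf{1}$ as the eigenvector for eigenvalue $1$, and conclude $1\le\tr(B_t)\le\rank(B_t)=\rank(G_t)\le d$.
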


\begin{proof}
	Consider the symmetric normalization
	$
	B_t:=\Delta_t^{-1/2}G_t\Delta_t^{-1/2}.
	$
	Since $G_t\succeq0$, we have $B_t\succeq0$. Moreover, because $\Delta_t^{-1/2}$ is invertible, $\rank(B_t)= \rank(G_t)\leq d$.  We also have
	\begin{equation}
		\label{eq:trace-equals-di}
		\tr(B_t) = \sum_{i=1}^t\frac{(G_t)_{ii}}{D_i(t)} = \sum_{i=1}^t\frac{1}{D_i(t)} = M_{-1}(P_t),
	\end{equation}
	where we used $(G_t)_{ii}=\norm{u_i}_2^2=1$.
	
	Now define
	$
	H_t:=\Delta_t^{-1}G_t.
	$
	The entries of $H_t$ are nonnegative, and every row sums to one.  Thus $H_t$
	is row-stochastic and every eigenvalue of $H_t$ has absolute value at most one.
	The matrices $H_t$ and $B_t$ are similar because
	$
	B_t=\Delta_t^{1/2}H_t\Delta_t^{-1/2}.
	$
	Since $B_t$ is positive semidefinite, all of its eigenvalues are real and
	nonnegative.  Therefore, every eigenvalue of $B_t$ lies in $[0,1]$.
	Furthermore,
	$$
	B_t\Delta_t^{1/2}\1 = \Delta_t^{-1/2}G_t\1 = \Delta_t^{-1/2}\Delta_t\1 = \Delta_t^{1/2}\1,
	$$
	so $1$ is an eigenvalue of $B_t$.  It follows that
	$$
	1\leq\tr(B_t)\leq\rank(B_t)=\rank(G_t)\leq d.
	$$
	Combining this with~\eqref{eq:trace-equals-di} proves the lemma.
\end{proof}

\section{Diversity Index}
\label{sec:diversity}

In this section, we study the diversity index $M_{-1}$. Because the case $p=-1$ is an odd exponent, we assume throughout this section that all pairwise cosine similarities are nonnegative. We first use the diversity--rank principle from Lemma~\ref{lem:rank-bound} to design a one-pass approximation algorithm, and then complement the upper bound with a lower bound that matches the dependence on the dimension $d$.

\subsection{The Algorithm}
\label{sec:diversity-one-pass}

Our one-pass algorithm for estimating $M_{-1}$ is described in Algorithm~\ref{alg:diversity}.

\begin{algorithm}[t]
	\caption{One-pass diversity-index estimator}
	\label{alg:diversity}
	\DontPrintSemicolon
	\KwIn{A stream $P = (x_1, \ldots, x_n)$ of points in $\mathbb{R}^d$ in the nonnegative-cosine regime, parameters $\eps, \delta\in(0,1)$, the stream length $n$.}
	\KwOut{A $(1+\eps)$-approximation to $M_{-1}(P)$}
	
	$\lambda\gets \frac{8}{3\eps^2}\ln\frac{4}{\delta}$, $N_{\max} = \lambda d + \sqrt{2 \lambda d \ln\frac{2n}{\delta}} + \frac{2}{3} \ln \frac{2n}{\delta}$,
	$s\gets \mathbf{0}\in\R^d$, and $R\gets\emptyset$\;
	
	\ForEach{incoming $x_t\ (t = 1, \ldots, n)$}{
		Normalize $u_t\gets x_t/\norm{x_t}_2$\;
		\ForEach{$(u_i,r_i,\widehat D_i)\in R$}{
			$\widehat D_i\gets\widehat D_i+\ip{u_i}{u_t}$\;
			\If{$r_i>\min\{1,\lambda/\widehat D_i\}$}{
				Delete $(u_i,r_i,\widehat D_i)$ from $R$ \tcp*{rejection sampling on stored points}
			}
		}
		$\widehat D_t\gets 1+\ip{u_t}{s}$\;
		Draw $r_t\sim\operatorname{Unif}(0,1)$\;
		\lIf{$r_t\leq\min\{1,\lambda/\widehat D_t\}$}{
			Insert $(u_t,r_t,\widehat D_t)$ into $R$
		}
		\lIf{$\abs{R} > N_{\max}$}{
			{\KwRet{\textnormal{\textsc{Overflow}}}}
		}
		$s\gets s+u_t$\;
	}
	\KwRet{$\displaystyle
		\widehat{M}_{-1}:=\sum_{(u_i,r_i,\widehat D_i)\in R}
		\frac{1}{\widehat D_i\min\{1,\lambda/\widehat D_i\}}$}
\end{algorithm}

Fix a threshold parameter $\lambda>0$.  Give every point $u_i$ an independent
rank $r_i\sim\operatorname{Unif}(0,1)$ and define its desired inclusion
probability at time $t\geq i$ by
$
q_i(t):=\min\left\{1,\frac{\lambda}{D_i(t)}\right\}.
$
Because all future similarities are nonnegative, $D_i(t)$ is nondecreasing and
$q_i(t)$ is nonincreasing.  This monotonicity allows the algorithm to maintain
exactly the sample
$$
R_t:=\{i\leq t:r_i\leq q_i(t)\},
$$
since a point rejected once can never become eligible again. The final contribution of a stored point is defined as
\begin{equation*}
	\label{eq:simplified-contribution}
	\frac{1}{D_i\min\{1,\lambda/D_i\}} =
	\begin{cases}
		1/D_i, & D_i\leq\lambda,\\
		1/\lambda, & D_i>\lambda.
	\end{cases}
\end{equation*}

The indicator $I_i:=\mathbf{1}[i\in R_n]$ is an independent Bernoulli random variable with mean $q_i:=q_i(n)$, and the estimator of Algorithm~\ref{alg:diversity} can therefore be written as
$
\widehat{M}_{-1} = \sum_{i \in [n]} \left(I_i \cdot \frac{1}{q_iD_i}\right).
$

\begin{lemma}
	\label{lem:unbiased-variance}
	We have
	$
	\E\left[\widehat{M}_{-1}\right]=M_{-1}(P)
	$
	and
	$
	\Var\left[\widehat{M}_{-1}\right]\leq\frac{M_{-1}(P)}{\lambda}.
	$
\end{lemma}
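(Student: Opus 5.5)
The proof is a direct computation from the representation $\widehat{M}_{-1}=\sum_{i\in[n]} I_i/(q_iD_i)$ with independent $I_i\sim\mathrm{Bernoulli}(q_i)$. The first thing to justify is that this representation is valid. The stored value $\widehat D_i$ equals $D_i(t)$ at every time $t$ that $i$ is in $R$: it is initialized to $1+\ip{u_i}{s_{i-1}}=D_i(i)$ and then incremented by $\ip{u_i}{u_t}$. Since $q_i(t)$ is nonincreasing in the nonnegative-cosine regime, we have $r_i\le q_i(t)$ for all $t\ge i$ iff $r_i\le q_i(n)$. Hence $i\in R_n$ exactly when $r_i\le q_i$, and $\Pr[I_i=1]=q_i$. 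The $r_i$ are independent, and the densities $D_i(t)$ are deterministic functions of the stream, not of the randomness, so the $I_i$ are independent. (Here we ignore the \textsc{Overflow} branch; the lemma concerns the idealized estimator, and overflow is handled separately via $N_{\max}$.)

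For unbiasedness, linearity of expectation gives $\E[\widehat M_{-1}]=\sum_i q_i\cdot \frac{1}{q_iD_i}=\sum_i D_i^{-1}=M_{-1}(P)$.

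For the variance, independence gives
$$\Var[\widehat M_{-1}]=\sum_i \frac{q_i(1-q_i)}{q_i^2D_i^2}=\sum_i\frac{1-q_i}{q_iD_i^2}.$$
Split into two cases. If $D_i\le\lambda$, then $q_i=1$ and the term vanishes. If $D_i>\lambda$, then $q_i=\lambda/D_i$, so the term is $\frac{1-q_i}{\lambda D_i}\le \frac{1}{\lambda D_i}$. In both cases the term is at most $\frac{1}{\lambda D_i}$, and summing over $i$ gives $\Var[\widehat M_{-1}]\le M_{-1}(P)/\lambda$.

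No step here is a real obstacle. The only point needing care is the first one: the sample maintained online coincides with $\{i: r_i\le q_i(n)\}$, and the stored $\widehat D_i$ equals the true final density. This rests on monotonicity of $D_i(t)$, which is where nonnegativity of the cosines is used.
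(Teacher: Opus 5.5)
Your proof is correct and matches the paper's argument: unbiasedness by linearity of expectation, then independence of the summands and a case split on $D_i\le\lambda$ versus $D_i>\lambda$ that bounds each variance term by $1/(\lambda D_i)$. Your extra check that the online sample equals $\{i: r_i\le q_i(n)\}$ with $\widehat D_i=D_i$ is what the paper establishes in the text just before the lemma, so it adds rigor without changing the route.
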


\begin{proof}
	For every $i$,
	$
	\E\left[I_i\frac{1}{q_iD_i}\right]=\frac{1}{D_i},
	$
	which proves unbiasedness.  The summands are independent.  If $D_i\leq
	\lambda$, then $q_i=1$ and the $i$-th contribution is deterministic.  If
	$D_i>\lambda$, then $q_i=\lambda/D_i$ and
	$$
	\Var\left[I_i\frac{1}{q_iD_i}\right] = \frac{1-q_i}{q_iD_i^2} \leq \frac{1}{\lambda D_i}.
	$$
	Summing over $i$ gives the variance bound.
\end{proof}

\begin{theorem}
	\label{thm:diversity-main}
	Let $\eps, \delta\in(0,1)$. With probability at least $1 - \delta$, Algorithm~\ref{alg:diversity} computes a $(1+\eps)$-approximation to $M_{-1}(P)$ for any input stream $P$ of $n$ vectors in $\mathbb{R}^d$ in the nonnegative-cosine regime, using at most 
	$O\left(d^2\eps^{-2}\log\frac{1}{\delta}+d\log\frac{n}{\delta}\right)$ words of space.
\end{theorem}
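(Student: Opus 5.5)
The proof has three parts: correctness of the stored sample, concentration of the estimator, and a space bound (including the probability that the \textsc{Overflow} branch fires).

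\emph{Correctness of the sample.} By induction on $t$, I will show that after processing $x_t$ the set $R$ equals $R_t=\{i\le t: r_i\le q_i(t)\}$, and that each stored $\widehat D_i$ equals $D_i(t)$. When $u_t$ arrives, each stored $\widehat D_i$ is incremented by $\ip{u_i}{u_t}$, and $u_t$ itself receives $\widehat D_t = 1+\ip{u_t}{s_{t-1}} = D_t(t)$. Since the cosines are nonnegative, $q_i(t)$ is nonincreasing in $t$. Hence an index with $r_i>q_i(t')$ for some $t'$ also has $r_i>q_i(t)$ for every $t\ge t'$, so deleting it permanently is safe. Consequently, conditioned on no overflow, the output equals $\widehat M_{-1}=\sum_i I_i/(q_iD_i)$ with $I_i$ independent Bernoulli$(q_i)$.

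\emph{Concentration.} Lemma~\ref{lem:unbiased-variance} alone, via Chebyshev, only gives constant failure probability. For the stated $\log(1/\delta)$ dependence I will use Bernstein's inequality instead. The summands are independent. Only those with $D_i>\lambda$ are random, and each centered summand is bounded in absolute value by $1/\lambda$, because $1/(q_iD_i)=1/\lambda$. The variance is at most $M_{-1}/\lambda$. Bernstein then gives
\[
\Pr\left[\abs{\widehat M_{-1}-M_{-1}}>\eps M_{-1}\right]\le 2\exp\!\left(-\frac{\eps^2M_{-1}^2/2}{M_{-1}/\lambda+\eps M_{-1}/(3\lambda)}\right)\le 2\exp\!\left(-\frac{3\lambda\eps^2 M_{-1}}{8}\right).
\]
Using $M_{-1}\ge1$ from Lemma~\ref{lem:rank-bound} and $\lambda=\frac{8}{3\eps^2}\ln\frac4\delta$, this probability is at most $\delta/2$.

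\emph{Space.} At each time $t$, $\abs{R_t}=\sum_{i\le t}\mathbf 1[r_i\le q_i(t)]$ is a sum of independent Bernoullis. Its mean is $\sum_{i\le t} q_i(t)\le \lambda\sum_{i\le t}1/D_i(t)=\lambda M_{-1}(P_t)\le\lambda d$, where the last step is Lemma~\ref{lem:rank-bound} applied to the prefix. This prefix-wise bound is the key structural input. The main technical point is that the bound must hold simultaneously for all $n$ prefixes. Bernstein/Chernoff gives
\[
\Pr\bigl[\abs{R_t}>\mu+\sqrt{2\mu\ln(2n/\delta)}+\tfrac23\ln(2n/\delta)\bigr]\le\delta/(2n)
\]
with $\mu=\lambda d$; here one can pad the mean up to $\lambda d$, since the tail is monotone in the mean. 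A union bound over $t\in[n]$ shows that overflow occurs with probability at most $\delta/2$. A final union bound with the concentration step gives total failure probability at most $\delta$.

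Each stored triple uses $O(d)$ words, so the sample costs $O(dN_{\max})$ words. Since $\lambda d=O(d\eps^{-2}\log\frac1\delta)$ and the cross term is dominated by the sum of the other two, this is $O(d^2\eps^{-2}\log\frac1\delta+d\log\frac n\delta)$. The vector $s$ adds $O(d)$ words, which is absorbed.
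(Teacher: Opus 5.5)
Your proposal is correct and follows the paper's proof essentially step for step. For accuracy you apply Bernstein's inequality with $|Y_i|\le 1/\lambda$ and variance at most $M_{-1}(P)/\lambda$. For the overflow cap you use the prefix-wise bound $\E\abs{R_t}\le\lambda M_{-1}(P_t)\le\lambda d$ from Lemma~\ref{lem:rank-bound}, then Bernstein and a union bound over the $n$ time steps. Your explicit induction showing $R=R_t$ with $\widehat D_i=D_i(t)$ formalizes what the paper states informally before Lemma~\ref{lem:unbiased-variance}. One small point: the independence of the $I_i$ holds for the unconditioned ideal estimator, which agrees with the output whenever no overflow occurs, and not after conditioning on no overflow; your final union bound already uses it this way.
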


\begin{proof}
	For every $i \in [n]$, let
	$
	Y_i:=I_i\frac{1}{q_iD_i}-\frac{1}{D_i}.
	$
	The variables $Y_i$ are independent and $\E[Y_i] = 0$.  For $D_i \le \lambda$, $Y_i = 0$.  For $D_i>\lambda$, the sampled contribution is either $0$ or $1/\lambda$, while $1/D_i<1/\lambda$.  Hence, $|Y_i|\leq1/\lambda$.  By Lemma~\ref{lem:unbiased-variance},
	$$
	\sum_i\E\left[Y_i^2\right] = \Var\left[\widehat{M}_{-1}\right] \leq \frac{M_{-1}(P)}{\lambda}.
	$$
	Bernstein's inequality (Lemma~\ref{lem:Bernstein}) gives
	\begin{eqnarray*}
		\Pr\left[ \Big|\sum_iY_i\Big|>\eps M_{-1}(P) \right]
		&\leq& 2\exp\left( -\frac{\eps^2M_{-1}(P)^2} {2M_{-1}(P)/\lambda+2\eps M_{-1}(P)/(3\lambda)} \right) \\
		&\leq& 2\exp\left(-\frac{3}{8}\lambda\eps^2M_{-1}(P)\right) \le \frac{\delta}{2},
	\end{eqnarray*}
	where the last inequality follows from the choice of $\lambda$ and the fact $M_{-1}(P)\geq1$ (by Lemma~\ref{lem:rank-bound}).  
	
	Let $N_t := \abs{R_t}$ be the number of tuples stored in $R_t$ at time $t$.  Then,
	\begin{equation}
		\label{eq:Nt}
		\E[N_t] = \sum_{i \in [t]} q_i(t) \leq\lambda\sum_{i \in [t]}\frac{1}{D_i(t)} = \lambda M_{-1}(P_t) \le \lambda d,
	\end{equation}
	where the last inequality follows from Lemma~\ref{lem:rank-bound}.
	
	For a fixed time $t$, the indicator variables $I_{i,t}:=\mathbf{1}[r_i \le q_i(t)]\ (i \in [t])$ are independent Bernoulli random variables.  Since $N_t = \abs{R_t} = \sum_{i \in [t]}I_{i, t}$, again by Bernstein’s inequality,
	$$
	\Pr\left[N_t > \E[N_t] + \sqrt{2\E[N_t] \ln\frac{2n}{\delta}}+\frac{2}{3} \cdot \ln\frac{2n}{\delta}\right] \leq \exp\left(-\ln\frac{2n}{\delta}\right) = \frac{\delta}{2n}.
	$$
	Combining this with \eqref{eq:Nt} and taking a union bound over all $n$ time steps, we have with probability at least $1 - \delta/2$, 
	$$
	N_t \le \lambda d + \sqrt{2 \lambda d \ln\frac{2n}{\delta}} + \frac{2}{3} \ln \frac{2n}{\delta} = O\left(d\eps^{-2}\log\frac{1}{\delta}+\log\frac{n}{\delta}\right)
	$$
	for all $t \in [n]$, in which case the algorithm will not return \textsc{Overflow}.
	
	The space bound follows by multiplying the space cap $N_{\max}$ by the vector dimension $d$.
\end{proof}

\begin{remark}[Unknown stream length]
	We note that if the stream length $n$ is not known in
	advance, the space cap $N_{\max}$ can be adjusted using a standard doubling method.
\end{remark}

\subsection{The Lower Bound}
\label{sec:diversity-lb}

We next show that the quadratic dependence on the dimension is inherent by another reduction from \Index$_N$.

\vspace{2mm}
\noindent{\bf Input Reduction.\ }
We choose $N:=m^2$, where $m := \lfloor(d-1)/2\rfloor$.   If $2m+1 < d$, we embed the construction into $\R^d$ by appending zero coordinates, which preserves all norms, inner products, densities, and hence $M_{-1}$.

We interpret Alice's input in  \Index$_N$ as a matrix $X \in \{0,1\}^{m \times m}$, and Bob's {\em query index} as a pair $J = (i, j) \in [m]^2$.

Let $\{e_1,\ldots,e_m,f_1,\ldots,f_m,g\}$ be an orthonormal basis of $\R^{2m+1}$, and set parameters
$\eta:=m^{-2}$, $\gamma:=m^{-4}$, and $L:=m^3$.

Given $X=(X_{k\ell})\in\{0,1\}^{m\times m}$,  Alice creates the set of
vectors
\begin{equation}
	\label{eq:lower-bound-alice-vector}
	a_k := e_k+\eta\sum_{\ell \in [m]} X_{k\ell}f_\ell+\gamma g,
	\qquad  k \in [m].
\end{equation}
Bob creates $L$ copies of each of the following two vectors
\begin{equation*}
	\label{eq:lower-bound-bob-vectors}
	b_i := \sum_{k\neq i}e_k+\gamma g, \quad \text{and} \quad
	q_j := f_j+\gamma g.
\end{equation*}
Intuitively, the {\em $e$-coordinates} identify rows and allow Bob to suppress all nonqueried rows relative to the queried row (i.e., the $i$-th row); the {\em $f$-coordinates} encode the matrix entries and allow Bob to test the queried bit $X_{ij}$; and the {\em $g$-coordinate} adds a small common component that makes every pairwise cosine similarity strictly positive.

For every nonzero input vector $v$, write $\bar v=v/\norm{v}_2$ as the normalized vector.  The hard input stream consists of
\begin{equation}
	\label{eq:hard-stream}
	P(X; (i,j)) :=(a_1,\ldots,a_m, \underbrace{b_i,\ldots,b_i}_{L\text{ copies}}, \underbrace{q_j,\ldots,q_j}_{L\text{ copies}}).
\end{equation}
The cosine similarities between input vectors are the inner products among the corresponding normalized vectors $\bar a_k$, $\bar b_i$, and $\bar q_j$.  Every input
coordinate is a nonnegative rational with $O(\log m)$ bits.  Moreover, every
two raw vectors have inner product at least $\gamma^2$, because they share the
positive $g$-coordinate.  Hence, every pairwise cosine similarity
in~\eqref{eq:hard-stream} is strictly positive.

The construction has the following interpretation.  The vector $b_i$ is nearly
orthogonal to the queried row vector $a_i$, but has cosine similarity
$\Omega(m^{-1/2})$ with every other Alice vector.  Its $L$ copies therefore
make all rows except $i$ contribute negligibly to the diversity index.  The
vector $q_j$ tests the queried bit: its cosine similarity with $a_i$ is about
$\eta$ when $X_{ij}=1$ and only about $\gamma^2$ when $X_{ij}=0$.  Finally,
each of the two duplicate blocks contributes approximately one unit of
diversity.  The total is therefore close to $3$ when $X_{ij}=0$ and close to
$2$ when $X_{ij}=1$.

\begin{lemma}
	\label{lem:hard-instance-gap}
	For every $X\in\{0,1\}^{m\times m}$ and $(i,j)\in[m]^2$, the stream
	$P(X; (i,j))$ has $m+2m^3$ vectors in dimension $2m+1$.  Moreover,
	if $X_{ij}=0$, then
	$
	M_{-1}(P)\geq 3-\frac{4}{m^2}.
	$
	If $X_{ij}=1$, then
	$
	M_{-1}(P)\leq 2+\frac{4}{m}.
	$
\end{lemma}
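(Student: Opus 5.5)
The plan is to compute, up to small errors, the densities of the three groups: Alice's $m$ vectors, the $L$ copies of $\bar b_i$, and the $L$ copies of $\bar q_j$. We then sum their reciprocals. The count $m+2L=m+2m^3$ and the dimension $2m+1$ are immediate.

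\textbf{Step 1: norms and pairwise cosines.} We have
\[
\norm{a_k}^2=1+\eta^2 r_k+\gamma^2, \qquad \norm{b_i}^2=m-1+\gamma^2, \qquad \norm{q_j}^2=1+\gamma^2,
\]
where $r_k\le m$ is the number of ones in row $k$, so $\norm{a_k}=1+O(m^{-3})$. The raw inner products are:
\[
\ip{a_k}{b_i}=\1[k\neq i]+\gamma^2, \quad \ip{a_k}{q_j}=\eta X_{kj}+\gamma^2, \quad \ip{b_i}{q_j}=\gamma^2, \quad \ip{a_k}{a_{k'}}=\eta^2\ip{X_k}{X_{k'}}+\gamma^2 \le m^{-3}+m^{-8}.
\]
After normalization:
\begin{itemize}
\item $\ip{\bar a_k}{\bar b_i}\approx 1/\sqrt{m-1}$ for $k\neq i$;
\item $\ip{\bar a_i}{\bar b_i}=O(m^{-17/2})$;
\item $\ip{\bar b_i}{\bar q_j}=O(m^{-17/2})$;
\item $\ip{\bar a_i}{\bar q_j}$ is $\approx \eta=m^{-2}$ if $X_{ij}=1$ and $\approx m^{-8}$ if $X_{ij}=0$.
\end{itemize}

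\textbf{Step 2: densities.}
\begin{itemize}
\item Each copy of $\bar b_i$ has density $L+O(m\cdot m^{-1/2})+O(L m^{-17/2})=L(1+O(m^{-5/2}))$.
\item Each copy of $\bar q_j$ has density at least $L$, and at most $L+m+O(Lm^{-8})$.
\item For $k\neq i$, the density of $\bar a_k$ is at least $L/\sqrt{m}$, so these rows contribute at most $m\cdot\sqrt m/L=m^{-3/2}$ in total.
\item The queried row has density $D_{a_i}=1+O(m\cdot m^{-3})+L\ip{\bar a_i}{\bar b_i}+L\ip{\bar a_i}{\bar q_j}$.
  \begin{itemize}
  \item If $X_{ij}=0$, every term beyond $1$ is $O(m^{-2})$, so $D_{a_i}\le 1+O(m^{-2})$.
  \item If $X_{ij}=1$, the term $L\eta\approx m$ gives $D_{a_i}\gtrsim m$, so $1/D_{a_i}\le O(1/m)$.
  \end{itemize}
\end{itemize}

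\textbf{Step 3: assembly.}
\begin{itemize}
\item Case $X_{ij}=0$: the $\bar b$ block contributes $L/(L(1+O(m^{-5/2})))\ge 1-O(m^{-5/2})$, and the $\bar q$ block contributes $L/(L+m+O(m^{-5}))\ge 1-2/m^2$. Together with $1/D_{a_i}\ge 1-O(m^{-2})$, this gives $M_{-1}\ge 3-4/m^2$.
\item Case $X_{ij}=1$: each block contributes at most $1$, since each density is at least $L$. The nonqueried rows contribute $O(m^{-3/2})$ and the queried row $O(1/m)$, giving $M_{-1}\le 2+4/m$.
\end{itemize}

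The main obstacle is bookkeeping explicit constants so that the bound $4/m^2$ holds in the zero case. The binding terms are the $\bar q$ block's $m/L=m^{-2}$ and the $a_i$ density excess of $O(m^{-2})$: they must sum to within $4m^{-2}$. I would bound each error term by an explicit multiple of $m^{-2}$, for example $L\gamma^2/\norm{\cdot}\le m^{-5}$ and the $a$–$a$ cross terms by $m\cdot 2m^{-3}$. This assumes $m$ is at least a modest constant, with small $m$ handled trivially or absorbed.
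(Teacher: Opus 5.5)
Your proposal is correct and follows essentially the same route as the paper. It bounds the queried row's density (at most $1+O(m^{-2})$ when $X_{ij}=0$, at least about $L\eta=m$ when $X_{ij}=1$), uses the $L$ copies of $\bar b_i$ to give every nonqueried row density of order $L/\sqrt m$, shows each duplicate block contributes between $1-O(m^{-2})$ and $1$, and sums reciprocals. The differences are only bookkeeping: you bound the $\bar b_i$ block more tightly (Alice cross-terms $O(\sqrt m)$ rather than $m$) and the $\bar q_j$ block more crudely (the paper uses $\ip{\bar q_j}{\bar a_k}\le\eta+\gamma^2$), and you argue asymptotically, whereas the paper tracks explicit constants ($A_k^2\le 2$, $B^2\le m$, $Q^2\le 2$) and so avoids your ``sufficiently large $m$'' caveat, which is in any case harmless for Theorem~\ref{thm:quadratic-lower-bound}.
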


\begin{proof}
	Let $A_k:=\norm{a_k}_2$, $B:=\norm{b_i}_2$, and $Q:=\norm{q_j}_2$.
	We have
	$$
	A_k^2 \leq 1 + m\eta^2 + \gamma^2 = 1+m^{-3}+m^{-8} \leq 2, \quad B^2 = 	m-1+\gamma^2 \leq m, \quad \text{and} \quad Q^2 = 1+\gamma^2\leq 2.
	$$
	
	We first bound the contribution of the queried vector $\bar a_i$.  If
	$X_{ij}=0$, then, for every $k\neq i$, we have
	$
	\ip{\bar a_i}{\bar a_k} \leq m\eta^2+\gamma^2,
	$
	$
	\ip{\bar a_i}{\bar b_i} \leq \gamma^2,
	$ and 
	$
	\ip{\bar a_i}{\bar q_j} \leq \gamma^2.
	$
	Its density therefore satisfies
	\begin{equation*}
		D(\bar a_i) \leq 1+(m-1)(m\eta^2+\gamma^2)+2L\gamma^2 \leq 1+m^{-2}+m^{-7}+2m^{-5} \leq 1+2m^{-2}.
	\end{equation*}
	Using $(1+x)^{-1}\geq1-x$ for $x\geq0$, we obtain
	\begin{equation}
		\label{eq:target-zero-contribution}
		\frac{1}{D(\bar a_i)} \geq1-\frac{2}{m^2}.
	\end{equation}
	If $X_{ij}=1$, then
	\begin{equation}
		\label{eq:target-one-contribution}
		D(\bar a_i) \geq L \cdot \ip{\bar a_i}{\bar q_j} = L \cdot \frac{\eta+\gamma^2}{A_iQ}  \geq L \cdot \frac{\eta}{2}=\frac{m}{2},
		\qquad\text{or}\quad
		\frac{1}{D(\bar a_i)}\leq\frac{2}{m}.
	\end{equation}
	
	Next consider any $k\neq i$.  Since the coefficient of $e_k$ is one in both
	$a_k$ and $b_i$, we have
	$$
	D(\bar a_k) \geq L \cdot \ip{\bar a_k}{\bar b_i} = L \cdot \frac{1+\gamma^2}{A_kB} \geq \frac{L}{2\sqrt m}.
	$$
	Thus all non-queried Alice vectors together contribute at most
	\begin{equation}
		\label{eq:other-alice-contribution}
		\sum_{k\neq i}\frac{1}{D(\bar a_k)} \leq \frac{2m^{3/2}}{L} = \frac{2}{m^{3/2}}.
	\end{equation}
	
	All $L$ copies of $\bar b_i$ have the same density, denoted by $D_b$.   Clearly, $D_b\geq L$.  On the other hand,
	$$
	D_b=L+\sum_{k \in [m]}\ip{\bar b_i}{\bar a_k}+L\ip{\bar b_i}{\bar q_j}\leq L+m+L\gamma^2.
	$$
	Here, we have used the fact that every cosine similarity is at most one and
	$\ip{\bar b_i}{\bar q_j}\leq\gamma^2$.  Therefore,
	\begin{equation}
		\label{eq:b-block-contribution}
		1-m^{-2}-m^{-8}	\leq \frac{1}{1+m/L+\gamma^2} \leq L \cdot \frac{1}{D_b} \leq 1.
	\end{equation}
	
	Similarly, all $L$ copies of $\bar q_j$ have a common density $D_q$.  We again have $D_q \ge L$. Moreover,
	$$
	D_q = L+\sum_{k \in [m]}\ip{\bar q_j}{\bar a_k}+L\ip{\bar q_j}{\bar b_i} \leq L+m^{-1}+m^{-7}+L\gamma^2.
	$$
	Therefore,
	\begin{equation}
		\label{eq:q-block-contribution}
		1-m^{-4}-m^{-8}-m^{-10} \le \frac{1}{1+m^{-1}/L+m^{-7}/L+\gamma^2}
		\leq L \cdot \frac{1}{D_q}\leq1.
	\end{equation}
	
	If $X_{ij}=0$, combining~\eqref{eq:target-zero-contribution},
	\eqref{eq:b-block-contribution}, and~\eqref{eq:q-block-contribution}, while
	ignoring the nonnegative contributions of the other Alice vectors, gives
	$$
	M_{-1}(P) \geq 3-3m^{-2}-m^{-4}-2m^{-8}-m^{-10} \geq 3-\frac{4}{m^2}.
	$$
	If $X_{ij}=1$, then~\eqref{eq:target-one-contribution},
	\eqref{eq:other-alice-contribution}, \eqref{eq:b-block-contribution}, and~\eqref{eq:q-block-contribution} give
	$$
	M_{-1}(P)	\leq 2+\frac{2}{m}+\frac{2}{m^{3/2}} \leq 2+\frac{4}{m}.
	$$
	The lemma follows.
\end{proof}

\begin{theorem}
	\label{thm:quadratic-lower-bound}
	Fix any constant $\eps\in(0,1/5)$.  For any sufficiently large dimension $d$, any one-pass $1/3$-error $(1+\eps)$-approximation streaming algorithm for $M_{-1}(P)$
	for every stream $P$ of vectors in $\mathbb{R}^d$ with strictly positive pairwise cosine
	similarities requires $\Omega(d^2)$ bits of memory.  The lower bound holds even
	when the stream has length $\operatorname{poly}(d)$ and every input coordinate
	has $O(\log d)$-bit rational coordinates.
\end{theorem}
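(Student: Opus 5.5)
The plan is a direct reduction from \Index$_N$ with $N=m^2$, using the hard instance $P(X;(i,j))$ and the gap established in Lemma~\ref{lem:hard-instance-gap}. Suppose a one-pass streaming algorithm $\mathcal{A}$ uses $S$ bits of memory and outputs a $(1+\eps)$-approximation to $M_{-1}$ with probability at least $2/3$. The protocol runs as follows.

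\begin{enumerate}
\item Alice, holding $X\in\{0,1\}^{m\times m}$, runs $\mathcal{A}$ on the prefix $(a_1,\ldots,a_m)$. Public randomness supplies $\mathcal{A}$'s random bits, or we fix a good random string by averaging. She then sends the memory state, $S$ bits, to Bob.
\item Bob, holding $(i,j)$, resumes $\mathcal{A}$ on $L$ copies of $b_i$ followed by $L$ copies of $q_j$. These suffix vectors depend only on his index, as the reduction requires.
\item Bob reads the output $\widetilde M$ and declares $X_{ij}=1$ iff $\widetilde M< 5/2$.
\end{enumerate}

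Correctness comes from Lemma~\ref{lem:hard-instance-gap}. If $X_{ij}=0$, then $\widetilde M\ge(1-\eps)(3-4/m^2)$. If $X_{ij}=1$, then $\widetilde M\le(1+\eps)(2+4/m)$. For $\eps<1/5$ we have $(1-\eps)\cdot 3>12/5>5/2$ and $(1+\eps)\cdot 2<12/5<5/2$, so the two ranges are separated by the threshold $5/2$. The $O(1/m)$ error terms are absorbed once $m$, hence $d$, is large enough. More precisely, pick the threshold $12/5$ and require $m\ge m_0(\eps)$ so that $(1-\eps)(3-4/m^2)>12/5$ and $(1+\eps)(2+4/m)<12/5$; both strict margins hold because $\eps<1/5$. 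Bob then errs only when $\mathcal{A}$ fails, which happens with probability at most $1/3$.

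By Lemma~\ref{lem:index-lower-bound}, this gives $S=\Omega(N)=\Omega(m^2)$. Since $m=\lfloor(d-1)/2\rfloor=\Theta(d)$, we get $S=\Omega(d^2)$.

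It remains to check the side conditions. The stream lives in dimension $2m+1\le d$, padded with zero coordinates. It has length $m+2m^3=\mathrm{poly}(d)$. Its coordinates are nonnegative rationals ($1$, $\eta=m^{-2}$, $\gamma=m^{-4}$) with $O(\log d)$ bits. Every pairwise cosine is strictly positive thanks to the shared $g$-coordinate. Together these establish the theorem's extra claims.

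There is no real obstacle. The substantive work is the gap lemma, which is already proved. The only care needed is choosing the threshold strictly between the two approximation ranges and taking $d$ large enough that the $4/m$ slack does not close the gap at $\eps$ near $1/5$.
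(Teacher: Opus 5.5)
Your proposal is correct and follows the paper's proof: the same reduction from \Index$_{m^2}$ through the stream $P(X;(i,j))$ and Lemma~\ref{lem:hard-instance-gap}, with Bob thresholding strictly between $(1+\eps)(2+4/m)$ and $(1-\eps)(3-4/m^2)$. One arithmetic slip needs fixing: $12/5<5/2$, so the rule ``$\widetilde M<5/2$'' in step 3 fails for $\eps\in[1/6,1/5)$; your later choice of threshold $12/5$ with $m\ge m_0(\eps)$ is correct, and step 3 should use it.
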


\begin{proof}
	Suppose that such a streaming algorithm uses $s$ bits.  We construct a
	one-way protocol for \Index$_{m^2}$, where $m = \lfloor(d-1)/2\rfloor$.  Alice interprets her input as a
	matrix $X\in\{0,1\}^{m\times m}$, runs the streaming algorithm on the vectors
	$a_1,\ldots,a_m$ created according to~\eqref{eq:lower-bound-alice-vector}, and sends
	the final memory configuration to Bob.  Given query index $J = (i,j)$, Bob resumes the
	algorithm and appends the $L$ copies of $b_i$ and $q_j$ as in~\eqref{eq:hard-stream}.
	
	Since $\eps<1/5$, we have $3(1-\eps)>2(1+\eps)$. Therefore, for all sufficiently large $m$, 
	$$
	(1-\eps)\left(3-\frac{4}{m^2}\right) > (1+\eps)\left(2+\frac{4}{m}\right).
	$$
	Bob chooses any threshold between the two sides of the above inequality.  By Lemma~\ref{lem:hard-instance-gap}, whenever the streaming estimate is correct, Bob determines whether $X_{ij}=0$ or $X_{ij}=1$.  The resulting protocol has error at most $1/3$ and communicates exactly the $s$-bit memory configuration.  Lemma~\ref{lem:index-lower-bound} implies $s = \Omega(m^2) = \Omega(d^2)$.
	
	The coordinate precision and stream length follow
	from~\eqref{eq:lower-bound-alice-vector} and \eqref{eq:hard-stream}.
\end{proof}

\begin{remark}[An exact two-pass algorithm]
	\label{thm:two-pass-diversity}
	Under the nonnegative-cosine promise, the diversity index can be
	computed exactly in two passes using $O(d)$ words of space. In the
	first pass, the algorithm maintains the vector sum
	$
	s=\sum_{j \in [n]} u_j.
	$
	In the second pass, it computes
	$
	\sum_{i \in [n]} {1}/{\langle u_i,s\rangle} = \sum_{i \in [n]} {1}/{D_i} = M_{-1}(P).
	$
	The algorithm stores only the $d$-dimensional vector $s$ and a scalar
	counter. This contrasts with the $\Omega(d^2)$-bit one-pass lower
	bound of Theorem~\ref{thm:quadratic-lower-bound}, showing that an
	additional pass can substantially reduce the space required to compute
	the diversity index.
\end{remark}

\section{Even Density Moments under Signed Cosine}
\label{sec:signed}

When signed-cosine similarities are allowed, a density $D_i$ may be negative.
For odd integer exponents, $M_p=\sum_i D_i^p$ can have either sign and may
exhibit substantial cancellation.  We therefore focus in this section on even exponents $p=2k$, for which $M_p(P)=\sum_{i\in[n]} D_i^p = \sum_{i\in[n]} |D_i|^p$
is nonnegative and admits the usual multiplicative-approximation guarantee.
We first give a one-pass algorithm for all even moments, and then prove
a linear lower bound for $p=2$ and a $\widetilde\Omega_p(d^{p/2})$-bit lower bound for every fixed even $p\geq4$. The latter matches the polynomial dependence on $d$ of the upper bound.

\subsection{A Tensor--AMS Estimator for Even Moments}
\label{sec:signed-even-ub}

The exact degree-$p$ representation in Theorem~\ref{thm:higher-moments} uses
$\Theta_p(d^p)$ words.  For even moments, approximation lowers the
required tensor degree by a factor of two.

\begin{theorem}
	\label{thm:even-higher-moments}
	Let $p=2k$ for an integer $k\geq1$.  For every $\eps, \delta\in(0,1)$, there is a one-pass algorithm
	that returns a $(1+\eps)$-approximation to $M_p(P)$ for any input stream $P$ of vectors in $\mathbb{R}^d$ with probability at
	least $1-\delta$ using
	$$
	O\left(\binom{d+k-1}{k}\eps^{-2}\log\frac1\delta\right) = O_k\left(d^{p/2}\eps^{-2}\log\frac1\delta\right)
	$$
	words of space.
\end{theorem}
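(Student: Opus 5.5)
We follow the roadmap in Section~\ref{sec:overview-ub}. Let $\psi_k:\R^d\to\R^{K}$, with $K=\binom{d+k-1}{k}$, be the symmetric degree-$k$ feature map indexed by multi-indices $\vec\alpha$ with $|\vec\alpha|=k$, given by $\psi_k(v)_{\vec\alpha}:=\sqrt{\binom{k}{\vec\alpha}}\,v^{\vec\alpha}$. By the multinomial expansion used in Theorem~\ref{thm:higher-moments}, $\ip{\psi_k(v)}{\psi_k(w)}=\ip{v}{w}^k$. Hence $D_i^{k}=\ip{\psi_k(u_i)}{\psi_k(s)}$, where $s=s_n$. Since $p=2k$, this gives
$$
M_p(P)=\sum_{i\in[n]}\ip{\psi_k(u_i)}{\psi_k(s)}^2=\psi_k(s)^\top\Big(\sum_{i\in[n]}\psi_k(u_i)\psi_k(u_i)^\top\Big)\psi_k(s)=\norm{\Phi\,\psi_k(s)}_2^2 .
$$
Here $\Phi\in\R^{n\times K}$ has rows $\psi_k(u_i)^\top$. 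The statistic is therefore the squared norm of the unknown vector $y:=\Phi\psi_k(s)\in\R^n$ with $y_i=D_i^k$, and it is nonnegative regardless of the signs of the $D_i$.

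The algorithm applies an AMS sketch to $y$ linearly, deferring the unknown factor $\psi_k(s)$. Draw $r=O(\eps^{-2})$ independent rows of random signs $\sigma_{\ell,i}\in\{\pm1\}$ (four-wise independent suffices). During the stream, maintain $s_t$ and the lifted counters $Y_\ell:=\sum_{i\le t}\sigma_{\ell,i}\psi_k(u_i)\in\R^K$ for each $\ell\in[r]$. Each arriving $u_t$ costs one update of $s$ and $r$ updates of $K$-dimensional vectors. At the end, compute $Z_\ell:=\ip{Y_\ell}{\psi_k(s)}=\sum_i\sigma_{\ell,i}y_i$, which is exactly the standard AMS counter for $y$. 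The standard analysis~\cite{AMS99} gives $\E[Z_\ell^2]=\norm{y}_2^2=M_p$ and $\Var[Z_\ell^2]\le 2\norm{y}_2^4$. Averaging $O(\eps^{-2})$ copies and applying Chebyshev yields a $(1+\eps)$-approximation with probability $2/3$. Taking the median of $O(\log(1/\delta))$ such averages amplifies the success probability to $1-\delta$.

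The space is $O(\eps^{-2}\log(1/\delta))$ vectors of dimension $K$, plus $s$ and the sign seeds. This totals $O(\binom{d+k-1}{k}\eps^{-2}\log\frac1\delta)$ words, which is $O_k(d^{p/2}\eps^{-2}\log\frac1\delta)$. The signs can be generated from $O(\log n)$-bit seeds per row, or taken as fully random per the paper's convention.

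I expect no genuine obstacle. The one point requiring care is that the sketch must be applied before $s$ is known. This works because $y=\Phi\psi_k(s)$ is linear in $\Phi$ for fixed $s$, so the signed combination $\sum_i\sigma_{\ell,i}\psi_k(u_i)$ can be accumulated online and contracted with $\psi_k(s)$ at the end. Checking this exchange of sums is the main thing to verify. The feature-map identity and the AMS variance bound are routine.
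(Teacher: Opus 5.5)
Your proposal is correct and follows the paper's proof essentially step for step. It uses the same normalized monomial feature map with $\ip{\psi_k(v)}{\psi_k(w)}=\ip{v}{w}^k$, and the same lifted AMS counters $\sum_i\sigma_i\psi_k(u_i)$ contracted with $\psi_k(s)$ at the end of the stream, so that $Z=\sum_i\sigma_iD_i^k$. It also uses the same averaging-then-median amplification, with the shared vector sum $s$ adding only $O(d)$ words.
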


\begin{proof}
	Let	$N_{d,k}=\binom{d+k-1}{k}$,
	and define the normalized degree-$k$ monomial map
	$\psi_k:\R^d\to\R^{N_{d,k}}$ by
	$$
	(\psi_k(v))_{\vec{\alpha}} :=\sqrt{\binom{k}{{\vec{\alpha}}}}\,v^{\vec{\alpha}}, \qquad \text{where} \quad |{\vec{\alpha}}|=k.
	$$
	The multinomial theorem gives
	\begin{equation}
		\label{eq:signed-feature-identity}
		\ip{\psi_k(v)}{\psi_k(w)}=\ip{v}{w}^k.
	\end{equation}
	
	Consider one AMS sketch~\cite{AMS99}.  Choose $4$-wise independent signs
	$\sigma_1,\ldots,\sigma_n\in\{-1,+1\}$ and maintain
	$
	T:=\sum_{i\in[n]}\sigma_i\psi_k(u_i)
	$
	and
	$
	s:=\sum_{i\in[n]} u_i.
	$
	At the end of the stream, compute
	$$
	Z:=\ip{T}{\psi_k(s)}.
	$$
	By \eqref{eq:signed-feature-identity} and the identity $D_i=\ip{u_i}{s}$,
	$$
	Z=\sum_{i\in[n]}\sigma_i\ip{u_i}{s}^k =\sum_{i\in[n]}\sigma_iD_i^k.
	$$
	Therefore
	$
	\E[Z^2]=\sum_{i\in[n]}D_i^{2k}=M_p(P).
	$
	The standard AMS moment calculation gives
	$
	\Var(Z^2)\leq2M_p(P)^2.
	$
	Averaging $O(\eps^{-2})$ independent copies of $Z^2$ gives a
	$(1+\eps)$-approximation with constant success probability, and taking the
	median of $O(\log(1/\delta))$ independent averages reduces the error
	probability to $\delta$.
	
	Each AMS sketch stores one vector in $\R^{N_{d,k}}$, and all copies share the
	$d$-dimensional vector sum $s$.  The total
	space is
	$
	O\left(N_{d,k}\eps^{-2}\log\frac1\delta+d\right)=O\left(\binom{d+k-1}{k}\eps^{-2}\log\frac1\delta\right)
	$
	words.
\end{proof}

\begin{remark}
	We note that the tensor--AMS estimator fundamentally uses the parity assumption $p=2k$: it represents each contribution as
	$
	D_i^p=\langle u_i^{\otimes k},s^{\otimes k}\rangle^2,
	$
	thereby reducing the problem to estimating a second moment in the lifted tensor space. For odd $p$, an additional factor of $D_i$ remains, even when all densities are nonnegative, and the resulting expression is no longer an unweighted sum of squares. Thus, nonnegativity alone does not make the tensor--AMS estimator applicable to odd moments.
\end{remark}

\subsection{A Linear Lower Bound for $M_2$}
\label{sec:signed-second-lower}

For $p=2$, Theorem~\ref{thm:even-higher-moments} uses
$O(d\eps^{-2}\log(1/\delta))$ words.  In this section, we show that the linear dependence on $d$ is necessary.  In fact, the lower bound holds even in the nonnegative-cosine regime, and therefore also applies in the signed regime.

We note that the construction for higher even moments in the next subsection relies on the suffix contribution $m$ being lower order than the queried signal $m^{p/2}$, which requires $p>2$. The direct reduction for $p=2$ also avoids the loss of logarithmic factors.

We use the following construction of a suffix consisting of mutually orthogonal unit vectors whose vector sum is a positive multiple of the required query direction.
\begin{lemma}
	\label{hm:lem:balanced-suffix}
	Let $L$ be a power of four, let $x$ be a unit vector, and let $W$ be an
	$(L-1)$-dimensional subspace orthogonal to $x$.  There are orthonormal unit
	vectors $q_1,\ldots,q_L\in\operatorname{span}\{x\}\oplus W$ such that
	$\ip{q_\ell}{x}=1/\sqrt{L}$ for every $\ell\in[L]$, and
	$\sum_{\ell \in [L]}q_\ell=\sqrt{L} x.$
\end{lemma}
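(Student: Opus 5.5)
The plan is to use a normalized Hadamard matrix, which exists for $L$ a power of four (indeed for any power of two). Let $H\in\{\pm1\}^{L\times L}$ be the Sylvester--Hadamard matrix, whose first column is the all-ones vector and which satisfies $HH^\top = LI$. Since $L=4^k$, we have $\sqrt{L}=2^k$, so the normalization $H/\sqrt{L}$ has rational entries. This is convenient for the bounded-precision requirement used later, although it is not needed for the lemma itself.

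Next, fix an orthonormal basis $w_2,\ldots,w_L$ of $W$ and set $w_1:=x$. Because $W\perp x$, the family $\{w_1,\ldots,w_L\}$ is an orthonormal basis of $\operatorname{span}\{x\}\oplus W$. Define
$$
q_\ell := \frac{1}{\sqrt L}\sum_{r\in[L]} H_{\ell r}\, w_r, \qquad \ell\in[L].
$$
Then $\ip{q_\ell}{q_{\ell'}} = \frac1L (HH^\top)_{\ell\ell'} = \delta_{\ell\ell'}$, so the $q_\ell$ are orthonormal. Moreover, $\ip{q_\ell}{x} = H_{\ell 1}/\sqrt L = 1/\sqrt L$.

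For the sum, we need the row sums of the columns of $H$. The first column sums to $L$. Each other column sums to zero, because it is orthogonal to the all-ones first column (the columns of $H$ are orthogonal, since $H^\top H = LI$). Hence
$$
\sum_\ell q_\ell = \frac{1}{\sqrt L}\sum_r \Big(\sum_\ell H_{\ell r}\Big) w_r = \frac{L}{\sqrt L}\,x = \sqrt L\, x.
$$
Equivalently, expand $x$ in the orthonormal basis $\{q_\ell\}$ and use $\ip{q_\ell}{x}=1/\sqrt L$.

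There is no real obstacle here. The only points to verify are the existence of a Hadamard matrix whose first column is constant, which Sylvester's recursive construction provides, and that $W$ has dimension exactly $L-1$, so that the span has dimension $L$.
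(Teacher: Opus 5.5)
Your proof is correct and is essentially the paper's argument. The paper also applies a Hadamard matrix with an all-$+1$ first row to the orthonormal basis $x, b_2, \ldots, b_L$, and it gets the vector sum from the fact that every other row is orthogonal to the constant row; you index by rows of $H$ instead of columns, which is immaterial since the Sylvester matrix is symmetric. Your alternative remark that $x=\sum_\ell \ip{q_\ell}{x}\,q_\ell$ yields $\sum_\ell q_\ell=\sqrt{L}\,x$ directly is a valid shortcut.
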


\begin{proof}
	Let $H\in\{-1,+1\}^{L\times L}$ be a Hadamard matrix whose first row
	is all $+1$.  Let $b_1=x$, and extend it by an orthonormal basis
	$b_2,\ldots,b_L$ of $W$.  For each column $\ell$ of $H$, set
	$$
	q_\ell:=\frac1{\sqrt L}\sum_{r\in [L]} H_{r\ell}b_r.
	$$
	Since $H^\top H=LI$, the vectors $q_1,\ldots,q_L$ are orthonormal.  The first
	row of $H$ gives $\ip{q_\ell}{x}=1/\sqrt L$.  Moreover, every row $r\geq 2$ of $H$ is orthogonal to the all-$+1$
	first row and hence has zero row sum. Therefore,
	$$
	\sum_{\ell\in[L]} q_\ell = \frac{1}{\sqrt{L}}\sum_{r\in[L]}
	\left(\sum_{\ell\in[L]} H_{r\ell}\right)b_r = \frac{1}{\sqrt{L}} \cdot L b_1 = \sqrt{L} x.
	$$
\end{proof}

\begin{theorem}
	\label{thm:second-density-moment}
	For every fixed $\eps\in(0,1/5)$ and every sufficiently large dimension $d$, any one-pass $1/3$-error algorithm that returns a $(1+\eps)$-approximation to $M_2(P)$ for every stream $P$ of vectors in $\mathbb{R}^d$ requires $\Omega(d)$ bits of space.  The lower bound holds even for streams of length $O(d)$ in the nonnegative-cosine regime with input vectors having $O(\log d)$-bit rational coordinates.
\end{theorem}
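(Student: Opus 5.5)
We reduce from \Index$_m$ with $m=\Theta(d)$, following the overview: Alice inserts one orthonormal vector per bit, and Bob appends a Hadamard suffix from Lemma~\ref{hm:lem:balanced-suffix} whose sum points in the queried direction. The streaming algorithm's memory is Alice's message, and Bob decides $z_j$ by thresholding the estimate of $M_2$. All vectors will have nonnegative $O(\log d)$-bit rational coordinates and nonnegative pairwise inner products.

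\textbf{Construction.} Take $L$ a constant power of four, say $L=16$. Let $m:=\lfloor (d-L)/2\rfloor$, so $2m+L-1\le d$. Fix orthonormal vectors $e_1,\ldots,e_m,f_1,\ldots,f_m$ and an $(L-1)$-dimensional subspace $W$ orthogonal to all of them. For each $k\in[m]$, Alice inserts $a_k:=e_k$ if $z_k=1$ and $a_k:=f_k$ if $z_k=0$. Given $j$, Bob applies Lemma~\ref{hm:lem:balanced-suffix} with $x=e_j$ and subspace $W$. This yields orthonormal $q_1,\ldots,q_L$ with $\sum_\ell q_\ell=\sqrt L\,e_j$ and $\ip{q_\ell}{e_j}=1/\sqrt L$, and Bob appends them. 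Since the Hadamard entries are $\pm1$, the $q_\ell$ have $\pm1/\sqrt L$ coordinates in a fixed basis. Choosing $L$ a power of four makes $\sqrt L$ an integer, so the coordinates are rational and small.

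\textbf{Nonnegativity.} Alice's vectors are mutually orthogonal. Each $q_\ell$ is orthogonal to every $a_k$ except possibly $a_j=e_j$, with which it has inner product $1/\sqrt L>0$. The $q_\ell$ are mutually orthogonal. Hence all pairwise cosines lie in $\{0,1/\sqrt L,1\}$.

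\textbf{Computing $M_2$.} The final sum is $s=\sum_k a_k+\sqrt L\,e_j$. Every $a_k$ with $k\neq j$ has density $1$, contributing $m-1$ in total.
\begin{itemize}
\item If $z_j=0$, then $a_j=f_j$ has density $1$. Each $q_\ell$ has density $\ip{q_\ell}{s}=\sqrt L\cdot\frac1{\sqrt L}=1$. So $M_2=m+L$.
\item If $z_j=1$, then $a_j=e_j$ has density $1+\sqrt L$. Each $q_\ell$ has density $\frac{1}{\sqrt L}(1+\sqrt L)$. So
$$M_2=(m-1)+(1+\sqrt L)^2+L\cdot\frac{(1+\sqrt L)^2}{L}=m-1+2(1+\sqrt L)^2.$$
\end{itemize}

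\textbf{Main obstacle.} The gap between the two values is an additive constant, while the baseline $m$ grows with $d$. A constant-factor approximation therefore cannot separate them, and the naive construction fails. I would fix this by making Alice's baseline comparable to Bob's signal: take $L=\Theta(m)$, the largest power of four with $L\le m$, using the $\Theta(d)$ dimensions of $W$ for the suffix. Then:
\begin{itemize}
\item If $z_j=0$, $M_2=m+L$.
\item If $z_j=1$, $M_2=m-1+2(1+\sqrt L)^2\approx m+2L$.
\end{itemize}
With $L\ge m/4$, the ratio of the two values is at least $1+\Omega(1)$. To ensure it beats $(1+\eps)^2/(1-\eps)$-type slack for every $\eps<1/5$, I would scale down Alice's baseline. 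One option is to encode only $m'=\lfloor cL\rfloor$ bits for a small constant $c$, still $\Theta(d)$. The ratio then becomes roughly $(c+2)/(c+1)$, which exceeds $(1+\eps)/(1-\eps)=1.5$ when $\eps<1/5$ and $c$ is small.

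Dimensions used are $2m'+L=O(d)$, and the stream length is $O(d)$. A correct estimate lets Bob recover $z_j$, so Lemma~\ref{lem:index-lower-bound} gives $\Omega(m')=\Omega(d)$ bits.
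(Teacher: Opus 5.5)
Your final construction is correct and is essentially the paper's proof: Alice inserts $e_k$ or $f_k$, Bob appends a Hadamard suffix of length $L=\Theta(d)$ summing to $\sqrt{L}\,e_j$, and the two values $m'+L$ and $m'+2L+4\sqrt{L}+1$ differ by a constant factor. The paper simply takes $m=L=t^2$ (your $c=1$), which already gives ratio $(3t^2+4t+1)/(2t^2)>3/2$, so shrinking Alice's part is unnecessary. Also, the suffix vectors have coordinates $\pm 1/\sqrt{L}$, so your opening claim of nonnegative coordinates should instead say nonnegative pairwise inner products, which is all the theorem needs.
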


\begin{proof}
	We reduce from $\Index_m$. Let $t$ be the largest power of two such that $3t^2-1\leq d,$ and set $m:=t^2$ and
	$L:=t^2.$ Thus, $2m+L-1=3t^2-1\leq d$. By the maximality of $t$, we have $m=\Theta(d)$. We construct the hard instance in $\mathbb R^{3t^2-1}$ and embed it into $\mathbb R^d$ by appending $d-(3t^2-1)$ zero coordinates to
	every vector. This zero-padding preserves norms and pairwise inner
	products, and therefore preserves all densities and the value of
	$M_2$.
	
	Choose distinct vectors $e_1,\ldots,e_m, f_1,\ldots,f_m, b_2,\ldots,b_L$ from the standard basis of $\mathbb R^d$, and let
	$
	W:=\operatorname{span}\{b_2,\ldots,b_L\}.
	$
	For each $k\in[m]$, Alice inserts
	$$
	a_k:=
	\begin{cases}
		e_k,& \text{if } z_k=1,\\
		f_k,& \text{if } z_k=0.
	\end{cases}
	$$
	Her $m$ vectors are mutually orthogonal. Alice runs the streaming
	algorithm on this prefix and sends its memory configuration to Bob.
	
	Bob applies Lemma~\ref{hm:lem:balanced-suffix} with $x=e_j$,
	$W$ as defined above, and suffix length $L$. He appends orthonormal
	unit vectors $q_1,\ldots,q_L$ satisfying
	$$
	\ip{q_\ell}{e_j} = \frac{1}{\sqrt L} = \frac{1}{t}
	$$
	for every $\ell\in[L]$, where each $q_\ell$ is orthogonal to every
	$e_k$ with $k\neq j$ and to every $f_k$. All pairwise inner products
	in the full stream are therefore nonnegative.
	
	If $z_j=0$, then Alice inserted $f_j$, which is orthogonal to the
	suffix. Every stream vector has density one, and hence,
	$$
	M_2^{(0)}=m+L=2t^2.
	$$
	
	If $z_j=1$, then Alice inserted $e_j$. Its density is
	$$
	D_{e_j} = 1+\sum_{\ell\in[L]}\ip{e_j}{q_\ell} = 1+\sqrt L = 1+t.
	$$
	Every other vector inserted by Alice has density one, while each
	suffix vector $q_\ell$, for $\ell\in[L]$, has density
	$$
	D_{q_\ell} = 1+\ip{q_\ell}{e_j} = 1+\frac{1}{t}.
	$$
	Consequently,
	$$
	M_2^{(1)} = (m-1)+(1+t)^2 + L\left(1+\frac{1}{t}\right)^2 = 3t^2+4t+1.
	$$
	Thus,
	$$
	\frac{M_2^{(1)}}{M_2^{(0)}} = \frac{3t^2+4t+1}{2t^2} > \frac{3}{2}.
	$$
	
	Since $\eps<1/5$, we have $2(1+\eps)<3(1-\eps).$
	It follows that
	$
	(1+\eps)M_2^{(0)} < (1-\eps)M_2^{(1)}.
	$
	A $(1+\eps)$-approximation to $M_2$ therefore allows Bob to recover
	$z_j$. Hence, the memory configuration sent by Alice must contain
	$\Omega(m)=\Omega(d)$ bits.
	
	Finally, the stream length is
	$
	m+L=2t^2=\Theta(d).
	$
	The prefix vectors are standard basis vectors. Since $L=t^2$ is a
	power of four, let $H\in\{-1,+1\}^{L\times L}$ be the Hadamard matrix
	used in Lemma~\ref{hm:lem:balanced-suffix}, with its first row being all-$+1$. The construction in that lemma gives
	$
	q_\ell = \frac{1}{t} \left(e_j+\sum_{r=2}^L H_{r\ell}b_r\right).
	$
	Therefore, every coordinate of every suffix vector belongs to
	$\{0,\pm1/t\}$. Since $t=\Theta(\sqrt d)$, all input coordinates are
	rational and can be represented using $O(\log d)$ bits.
\end{proof}

\subsection{Lower Bounds for Higher Even Moments}
\label{subsec:signed-higher-lower}

We now prove the lower bound for every fixed even $p\geq4$.  
The proof uses antipodal pairs to cancel Alice's prefix sum
and a low-correlation sign code to encode many possible query directions.  We use an elementary low-correlation-code argument; related code-based membership reductions appear in lower bounds for subspace sketches, for example in \cite{LWW20}. Our main new ingredient is a construction that realizes the query direction as the sum of a mutually orthogonal suffix of vectors.

\begin{lemma}
	\label{lem:random-code}
	Fix a constant $\zeta>0$.  For every sufficiently large integer $m$, there is a
	set $\mathcal S\subseteq\{-1,+1\}^m$ with $|\mathcal S|\geq m^\zeta$ such that,
	for all distinct $a,b\in\mathcal S$,
	\begin{equation}
		\label{eq:code-correlation}
		|\ip{a}{b}|\leq C_\zeta\sqrt{m\log m},
	\end{equation}
	where $C_\zeta>0$ depends only on $\zeta$.
\end{lemma}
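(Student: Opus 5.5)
We prove Lemma~\ref{lem:random-code} by the probabilistic method: choose $K:=\lceil m^\zeta\rceil$ vectors $a^{(1)},\ldots,a^{(K)}$ independently and uniformly at random from $\{-1,+1\}^m$, and show that with positive probability every pair of distinct vectors satisfies \eqref{eq:code-correlation}. The vectors are then automatically distinct, so $\mathcal S:=\{a^{(1)},\ldots,a^{(K)}\}$ has size $K\ge m^\zeta$.

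Fix indices $r\neq r'$. The inner product $\ip{a^{(r)}}{a^{(r')}}=\sum_{k\in[m]}a^{(r)}_k a^{(r')}_k$ is a sum of $m$ independent uniform $\pm1$ variables: conditioned on $a^{(r)}$, each product $a^{(r)}_k a^{(r')}_k$ is a uniform sign independent of the others. By Hoeffding's inequality,
$$
\Pr\left[\abs{\ip{a^{(r)}}{a^{(r')}}}>C\sqrt{m\log m}\right]\le 2\exp\left(-\frac{C^2 m\log m}{2m}\right)=2\exp\left(-\frac{C^2\log m}{2}\right).
$$
Since all logarithms in the paper are base $2$, the right-hand side equals $2m^{-C^2/(2\ln 2)}$; in any case it is $2m^{-c C^2}$ for an absolute constant $c>0$.

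Next, take a union bound over the fewer than $K^2\le (m^\zeta+1)^2\le 4m^{2\zeta}$ ordered pairs. The probability that some pair violates the bound is at most $8m^{2\zeta-cC^2}$. Choose $C_\zeta$ with $cC_\zeta^2\ge 2\zeta+1$. The failure probability is then at most $8/m<1$ for $m>8$, so a good outcome exists, and it gives a code with every pairwise $|\ip ab|\le C_\zeta\sqrt{m\log m}$.

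No step is a real obstacle. The only points needing care are the independence structure behind Hoeffding and the fact that distinctness is automatic: two equal vectors would have inner product $m>C_\zeta\sqrt{m\log m}$ for large $m$, so the correlation bound already rules out repeats. This is where the requirement that $m$ be sufficiently large enters.
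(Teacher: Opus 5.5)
Your proof is correct and follows the paper's argument: sample $\lceil m^\zeta\rceil$ uniform $\pm1$ vectors, apply Hoeffding to each pair, and take a union bound over fewer than $K^2$ pairs, with $C_\zeta$ depending only on $\zeta$. You are also slightly more careful than the paper's write-up, which leaves implicit both the base-$2$ logarithm constant and the fact that the sampled vectors are automatically distinct.
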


\begin{proof}
	Choose $N= \lceil m^\zeta \rceil$ independent uniformly random vectors from
	$\{-1,+1\}^m$.  For a fixed pair $a,b$, the products $a_rb_r\ (r \in [m])$ are independent
	Rademacher random variables, so Hoeffding's inequality gives
	$$	
	\Pr\left[|\ip{a}{b}|>C_\zeta \sqrt{m\log m}\right]\leq2m^{-C^2/2}.
	$$
	There are fewer than $N^2$ pairs.  To apply the union bound, we set
	$C_\zeta:=2\sqrt{\zeta+1}$ to make the total error probability less than one.  Hence, a realization satisfying
	\eqref{eq:code-correlation} exists. 
\end{proof}


\begin{theorem}
	\label{thm:signed-lower-bound}
	Fix an even integer $p\geq4$ and a constant $\eps\in(0,1)$.  For every sufficiently large dimension $d$, any
	one-pass $1/3$-error streaming algorithm that returns a $(1+\eps)$-approximation to
	$M_p(P)$ for every stream $P$ of vectors in $\mathbb{R}^d$ under signed-cosine similarity
	requires
	$
	\Omega_{p,\eps}\left(\frac{d^{p/2}}{(\log d)^{p/2}}\right) = \widetilde \Omega_{p,\eps}\left(d^{p/2}\right)
	$
	bits of space.  The lower bound holds even for
	streams of length $\widetilde O_p(d^{p/2})$ with input vectors having $O(\log d)$-bit rational coordinates.
\end{theorem}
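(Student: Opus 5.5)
We reduce from $\Index_N$ with $N:=m^\zeta$ codewords, where $m=\Theta(d)$ and $\zeta$ will be taken just below $p/2$. First apply Lemma~\ref{lem:random-code} with $\zeta:=p/2$, which gives a sign code $\mathcal S\subseteq\{-1,+1\}^m$ of size $N\ge m^{p/2}$ with pairwise correlations $|\ip{a}{b}|\le C\sqrt{m\log m}$. The normalized codewords $\bar a=a/\sqrt m$ then satisfy $|\ip{\bar a}{\bar b}|\le C\sqrt{\log m/m}$. Alice's bit string selects a subset $T$ of codewords. For each selected $\bar a$ she inserts the antipodal pair $\bar a,-\bar a$, so her prefix has vector sum zero. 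The code lives in a block of dimension $m$. A second orthogonal block of dimension $L-1$ holds Bob's balancing directions, where we take $L=\Theta(m)$ a power of four, so $d=\Theta(m)$.

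Given query $x=\bar a_q$, Bob invokes Lemma~\ref{hm:lem:balanced-suffix} and appends $L$ orthonormal vectors $q_\ell$ with $\sum_\ell q_\ell=\sqrt L\,x$; these have no component on other code directions except through $x$. The final sum is $s=\sqrt L\,x$. Each antipodal pair contributes $2L^{p/2}|\ip{\bar a}{x}|^p$. Each suffix vector has density $1+\sum_{\bar a\in T}(\ip{q_\ell}{\bar a}-\ip{q_\ell}{\bar a})=1$, so the suffix contributes exactly $L$. Hence
\[
M_p=2L^{p/2}\sum_{\bar a\in T}|\ip{\bar a}{x}|^p+L .
\]
If $x\in T$, this is at least $2L^{p/2}+L$. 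If $x\notin T$, it is at most $2L^{p/2}N(C^2\log m/m)^{p/2}+L$.

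The main obstacle is making the $x\notin T$ interference small. With $N=m^{p/2}$ it is of order $L^{p/2}(\log m)^{p/2}$, which is too large. So we shrink $N$ to $N:=c\,(m/\log m)^{p/2}$ for a small constant $c$, chosen so that the interference is at most $\nu L^{p/2}$ for a small $\nu$. Lemma~\ref{lem:random-code} still provides a code of this size, since we can take any subset of it. Since $p\ge4$, $L=o(L^{p/2})$, and the two cases give ratio at least $2/(2\nu+o(1))$. This exceeds $(1+\eps)/(1-\eps)$ once $c$ is small in terms of $p,\eps$, so a $(1+\eps)$-approximation lets Bob recover $z_q$. Lemma~\ref{lem:index-lower-bound} then gives $\Omega(N)=\Omega_{p,\eps}(d^{p/2}/(\log d)^{p/2})$ bits.

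It remains to check the side conditions. The stream length is at most $2N+L=\widetilde O_p(d^{p/2})$. Coordinates are in $\{0,\pm1/\sqrt m,\pm1/\sqrt L\}$, which are irrational in general. We fix this either by choosing $m$ and $L$ to be perfect squares or by noting that scaling raw vectors is harmless, since inputs are normalized anyway: use $\pm1$ entries for codewords and Hadamard $\pm1$ entries for the suffix. This gives $O(\log d)$-bit rationals. Signed cosine is allowed, so no nonnegativity needs to be checked.
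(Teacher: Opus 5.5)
Your proof is correct. Its core matches the paper's: antipodal pairs with zero prefix sum, the Hadamard suffix of Lemma~\ref{hm:lem:balanced-suffix}, and the identity $M_p=2L^{p/2}\sum_{\bar a\in T}|\ip{\bar a}{x}|^p+L$. The difference is in how the space bound is extracted.

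You reduce directly from $\Index_N$ using a \emph{small} code of size $N=c(m/\log m)^{p/2}$. Every bit string then selects a set with $|T|\le N$, so the $x\notin T$ interference is at most $2cC^pL^{p/2}$ automatically.

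The paper instead works as follows:
\begin{itemize}
\item It fixes a large code with $|\mathcal S|=\lceil m^{p+3}\rceil$.
\item Alice's input is an $R$-element subset with $R=\lfloor c(m/\log m)^{p/2}\rfloor$. This is where the paper imposes the sparsity you get for free.
\item Alice sends $\varrho=O(\log N)$ independent memory states, so that medians plus a union bound answer all $N$ membership queries correctly.
\item Bob reconstructs $T$ exactly, and the bound follows from $\log\binom{N}{R}\ge R\log(N/R)=\Omega(R\log m)$.
\end{itemize}
The $\log m$ gained in the counting is exactly spent on the $\varrho$ repetitions. So both routes give $\Omega(R)=\Omega_{p,\eps}(d^{p/2}/(\log d)^{p/2})$. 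Yours gets there with less machinery: no amplification, no union bound over queries, no counting over $\mathcal F$, and only $\zeta=p/2$ in Lemma~\ref{lem:random-code}. The paper's extra work buys nothing in the final bound.

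One small correction on precision. Your second option, rescaling raw vectors to $\pm1$ entries, does not work on its own. Each suffix vector has code-block coordinates $\pm1/\sqrt{Lm}$ and auxiliary coordinates $\pm1/\sqrt L$, and their ratio $\sqrt m$ survives any rescaling. You need $m$ to be a perfect square, as in the paper, which takes $m$ (and $L=m$) to be a power of four.
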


\begin{proof}
	Fix an arbitrary sufficiently large dimension $d$, and let $m$
	be the largest power of four such that $2m-1\leq d.$ By the maximality of $m$, we have $m=\Theta(d)$.	We construct the hard instance in $\mathbb R^{2m-1}$ and embed it into $\mathbb R^d$ by appending $d-(2m-1)$ zero coordinates to
	every vector. This zero-padding preserves norms and pairwise inner
	products, and therefore preserves all densities and the value of
	$M_p$.
	
	Let $\zeta:=p+3$ and let $\mathcal S\subseteq\{-1,+1\}^m$ be the family from
	Lemma~\ref{lem:random-code}. For $a\in\mathcal S$, write
	$
	\bar a:=\frac{a}{\sqrt m}.
	$
	Every $\bar a$ is a unit vector, and for any pair of distinct vectors $a,b\in\mathcal S$,
	\begin{equation}
		\label{eq:normalized-code-correlation}
		\left|\ip{\bar a}{\bar b}\right| \leq C_\zeta\sqrt{\frac{\log m}{m}},
	\end{equation}
	for a sufficiently large constant $C_\zeta=C_\zeta(p)$.
	
	Set $\rho:=\frac{1-\eps}{1+\eps}.$ Choose a sufficiently small constant $c=c(p,\eps)>0$ such that
	$
	c \cdot C_\zeta^p\leq\frac{\rho}{8},
	$
	and let
	$
	R:=\Big\lfloor c\left(\frac{m}{\log m}\right)^{p/2}\Big\rfloor.
	$
	For every $R$-element subset $T\subseteq\mathcal S$, Alice inserts the antipodal pair $\{\bar a,-\bar a\}$ for every $a\in T$.  Note that the sum of her vectors is exactly
	zero.
	
	Now fix a query $x\in\mathcal S$. Bob uses a fixed orthogonal auxiliary $(m-1)$-dimensional subspace
	and applies Lemma~\ref{hm:lem:balanced-suffix} with $L=m$ and the unit vector
	$\bar x=x/\sqrt m$.  He appends the resulting orthonormal vectors
	$q_1,\ldots,q_m$, whose sum is
	$
	s:=\sum_{\ell \in [m]} q_\ell=\sqrt m\,\bar x.
	$
	
	Let the input stream be
	$$
	P_{T, x} := ((\bar a_1,-\bar a_1), \ldots, (\bar a_{|T|},-\bar a_{|T|}), q_1, \ldots, q_m).
	$$
	The vectors associated with $a\in T$ therefore have densities
	$D_{\bar a}=\sqrt m\,\ip{\bar a}{\bar x}$ and $D_{-\bar a}=-\sqrt m\,\ip{\bar a}{\bar x}$, while every suffix vector $q_\ell\ (\ell \in [m])$ has density one. 
	
	Define
	$
	H_T(x):=\sum_{a\in T}|\ip{\bar a}{\bar x}|^p.
	$
	Since $p$ is even, the final moment is exactly
	\begin{equation}
		\label{hm:eq:signed-moment-identity}
		M_p(P_{T,x})=2m^{p/2}H_T(x)+m.
	\end{equation}
	If $x\in T$, the self-correlation term gives
	\begin{equation}
		\label{hm:eq:yes-H}
		H_T(x)\geq1.
	\end{equation}
	If $x\notin T$, then \eqref{eq:normalized-code-correlation} and
	the choice of $R$ imply
	\begin{equation}
		\label{hm:eq:no-H}
		H_T(x) \leq R \cdot C_\zeta^p\left(\frac{\log m}{m}\right)^{p/2} \leq c \cdot C_\zeta^p \leq \frac{\rho}{8}.
	\end{equation}
	Since $p>2$, for a sufficiently large $m$, $m^{1-p/2}\leq\frac{\rho}{4}.$
	Combining this inequality with
	\eqref{hm:eq:signed-moment-identity}--\eqref{hm:eq:no-H} gives
	$$
	\begin{aligned} 
		M_p(P_{T,x}) &\le \frac{\rho}{2}\,m^{p/2} && \text{if } x\notin T, \quad \text{and}\\ 
		M_p(P_{T,x}) &\ge 2m^{p/2} && \text{if } x\in T. 
	\end{aligned}
	$$
	In the first case, the $(1+\eps)$-approximation of $M_p$ is at most
	$
	(1+\eps)\frac{\rho}{2}m^{p/2}=\frac{1-\eps}{2}m^{p/2},
	$
	whereas in the second case, it is at least
	$
	2(1-\eps)m^{p/2}.
	$
	Thus, the estimate determines whether $x\in T$.
	
	By taking a subset if necessary, we may assume that
	$
	N:=|\mathcal S| = \left\lceil m^\zeta\right\rceil.
	$
	Let
	$$
	\mathcal F:=\{T\subseteq\mathcal S:|T|=R\}.
	$$
	Suppose the streaming algorithm uses $b$ bits of space. Let
	$
	\varrho:=\left\lceil C_\varrho\log N\right\rceil,
	$
	where $C_\varrho$ is a sufficiently large constant. Alice runs $\varrho$ independent copies of the algorithm on her prefix and sends their final memory configurations to Bob, using a total of $\varrho b$ bits. For each $x\in\mathcal S$, Bob makes a
	fresh copy of every received configuration, appends only the suffix
	corresponding to $x$, and takes the median of the resulting estimates.
	Thus, every membership query is evaluated from the same prefix states,
	and evaluating one query does not affect any other. For each fixed
	$x\in\mathcal S$, a Chernoff bound implies that the probability of
	incorrectly determining whether $x\in T$ is at most $N^{-3}$. A union
	bound over all $N$ queries then shows that, with probability at least
	$1-N^{-2}\geq 0.99$, every membership query is answered correctly.
	Consequently, Bob reconstructs $T$ exactly.
	
	Fix random coins for which at least a $0.99$ fraction of the sets in $\mathcal F$ are decoded correctly.  A deterministic message of $\varrho b$ bits has at most $2^{\varrho b}$ values, and one message cannot correctly decode two
	distinct sets.  Therefore $2^{\varrho b}\geq0.99|\mathcal F|$, or 
	$$
	b \ge \frac{1}{\varrho}(\log |\mathcal F| - O(1)) .
	$$
	Since $N/R$ is a positive polynomial in $m$,
	$$
	\log|\mathcal F| =\log\binom NR \geq R\log\frac NR =\Omega_p(R\log m).
	$$
	As $\varrho=O_p(\log m)$, it holds that
	$$
	b =\Omega_{p}(R) =\Omega_{p,\eps}\left( \frac{m^{p/2}}{(\log m)^{p/2}}\right).
	$$
	Since $m=\Theta(d)$, the preceding lower bound becomes
	$
	b = \Omega_{p,\eps}\left( \frac{d^{p/2}}{(\log d)^{p/2}} \right) = \widetilde{\Omega}_{p,\eps}(d^{p/2}).
	$
	The prefix contains $2R$ vectors and the suffix contains $m$
	vectors, so the stream length is
	$
	2R+m = \widetilde{O}_{p,\eps}(d^{p/2}).
	$
	Since $m$ is a power of four, $\sqrt m$ is an integer, and the
	vectors $\bar a$ have coordinates $\pm1/\sqrt m$ with
	$O(\log d)$-bit rational representations. The Hadamard construction
	in Lemma~\ref{hm:lem:balanced-suffix} also uses
	$O(\log d)$-bit rational coordinates.
\end{proof}

\section{Higher Density Moments under Nonnegative Cosine: Upper Bound}
\label{sec:nonnegative}

We now consider higher density moments in the nonnegative-cosine regime, where every pair of vectors in $P=(u_1,\ldots,u_n)$ has nonnegative cosine similarity. The main new case is that of odd integers $p>2$, since even moments are already covered by the sign-robust tensor--AMS estimator of the preceding section. Nevertheless, the algorithm developed here applies more generally to every fixed real exponent $p>2$. Nonnegativity ensures that all densities are positive and gives a
moment-ratio bound strong enough to support uniform subsampling. The
corresponding lower bound, which is the main technical contribution of
the paper, is stated and proved in
Section~\ref{sec:nonnegative-lb}.

The exact tensor representation in Theorem~\ref{thm:higher-moments} uses
$\Theta(d^p)$ counters for a fixed integer $p$.  In this section we
show that, if approximation is allowed, the dependence on the dimension can be
reduced to $\widetilde O(d^{p/2})$ for every $p>2$.  The algorithm
has two ingredients.  First, a structural moment-ratio bound shows that a
uniform sample of roughly $d^{p/2}$ density coordinates is sufficient.
Second, a linear $F_p$-sketch of the sampled density vector can be maintained
without storing the sampled vectors themselves.

Throughout the section, let $s:=\sum_{i \in [n]} u_i$ and $D_i=\ip{u_i}{s}$.
In the nonnegative-cosine regime, we have $D_i\geq 1$ for every $i \in [n]$.

We use the following standard result of high-frequency-moment sketching.
For every fixed $p>2$, every dimension $K$, and every $\eta\in(0,1)$, there is a linear sketch for a vector $y\in\mathbb R^K$ that returns a $(1+\eta)$-approximation to
$
F_p(y):=\sum_{r\in[K]} |y_r|^p
$
with success probability $0.99$ using
\begin{equation}
	\label{eq:black-box-fp-space}
	O_p\left(K^{1-2/p}\eta^{-2} + K^{1-2/p}\eta^{-4/p}\log K \right) = \widetilde O_p\left(K^{1-2/p}\eta^{-2}\right)
\end{equation}
counters~\cite{Ganguly15}, where $\widetilde O_p(\cdot)$ hides logarithmic factors and constants depending on $p$. The sketch can be represented by a random linear map
$\Pi\in\mathbb R^{m\times K}$ and a decoder $\mathsf{Dec}_p$ such that
$\mathsf{Dec}_p(\Pi y)$ approximates $F_p(y)$.  

The following moment-ratio bound enables uniform sampling.

\begin{lemma}
	\label{lem:moment-ratio}
	In the nonnegative-cosine regime, for every real $p\geq1$,
	\begin{equation*}
		\label{eq:moment-ratio}
		\frac{nM_{2p}(P)}{M_p(P)^2} \leq \rank(G)^{p/2} \leq d^{p/2},
	\end{equation*}
	where $G$ is the cosine Gram matrix of the stream.
\end{lemma}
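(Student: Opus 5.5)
The plan is to bound $M_{2p}$ by the maximum density times $M_p$, control the maximum density by $\norm{s}_2$, and then relate $\norm{s}_2^2=M_1$ to $n$ using the diversity--rank principle (Lemma~\ref{lem:rank-bound}). Throughout, $D_i=\ip{u_i}{s}\geq 1>0$ in the nonnegative-cosine regime, so all real powers are well defined. Write $r:=\rank(G)$.

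\emph{Step 1 (peeling off the maximum).} Since all $D_i>0$,
$$M_{2p}=\sum_{i} D_i^{2p}\leq \Big(\max_i D_i\Big)^p M_p.$$
By Cauchy--Schwarz and $\norm{u_i}_2=1$, each density satisfies $D_i=\ip{u_i}{s}\leq\norm{s}_2$. By Theorem~\ref{thm:SS}, $\norm{s}_2^2=M_1$. Hence $\max_i D_i^p\leq M_1^{p/2}$ and $M_{2p}\leq M_1^{p/2}M_p$.

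\emph{Step 2 (power mean lower bound on $M_p$).} For real $p\geq 1$ the map $x\mapsto x^p$ is convex on $(0,\infty)$. Jensen's inequality then gives $M_p\geq n\,(M_1/n)^p=n^{1-p}M_1^p$. Combining with Step 1,
$$\frac{nM_{2p}}{M_p^2}\leq \frac{nM_1^{p/2}}{M_p}\leq \frac{n\,M_1^{p/2}}{n^{1-p}M_1^{p}}=\Big(\frac{n^2}{M_1}\Big)^{p/2}.$$

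\emph{Step 3 (the key inequality $n^2\leq r M_1$).} This is the only step that uses the rank, and it is where I expect the substance to lie. The idea is to write $n=\sum_i D_i^{1/2}D_i^{-1/2}$ and apply Cauchy--Schwarz:
$$n^2\leq \Big(\sum_i D_i\Big)\Big(\sum_i D_i^{-1}\Big)=M_1M_{-1}.$$
Then invoke Lemma~\ref{lem:rank-bound} (with $t=n$), which gives $M_{-1}\leq \rank(G)\leq d$. This yields $n^2/M_1\leq r$.

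Plugging Step 3 into Step 2 gives $nM_{2p}/M_p^2\leq r^{p/2}\leq d^{p/2}$, as claimed. The restriction $p\geq1$ is used only for the convexity in Step 2. Nonnegativity of cosines is used for positivity of the $D_i$, which Steps 1 and 3 need, and for Lemma~\ref{lem:rank-bound}.
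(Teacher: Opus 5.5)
Your proposal is correct and follows the paper's proof essentially step for step. You bound $M_{2p}\le M_1^{p/2}M_p$ via $D_i\le\norm{s}_2$, lower-bound $M_p\ge n^{1-p}M_1^p$ (the paper cites H\"older where you use Jensen; this is the same inequality), and finish with $n^2\le M_1M_{-1}\le\rank(G)\,M_1$ by Cauchy--Schwarz and Lemma~\ref{lem:rank-bound}.
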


\begin{proof}
	Since $s=\sum_i u_i$ and $\norm{u_i}_2=1$,
	$
	M_1=\sum_{i \in [n]}\ip{u_i}{s}=\norm{s}_2^2
	$
	and
	$
	D_i=\ip{u_i}{s}\leq\norm{s}_2=\sqrt{M_1}.
	$
	Consequently,
	\begin{equation}
		\label{eq:m2p-via-mp}
		M_{2p} = \sum_i D_i^{2p} \leq \left(\max_i D_i\right)^p\sum_iD_i^p \leq M_1^{p/2}M_p.
	\end{equation}
	By H\"older’s inequality, we have
	\begin{equation}
		\label{eq:mp-via-m1}
		M_p \geq n^{1-p}M_1^p.
	\end{equation}
	Moreover, Cauchy--Schwarz and Lemma~\ref{lem:rank-bound} give
	$$
	n^2 =\left(\sum_{i \in [n]} 1\right)^2 \leq \left(\sum_{i \in [n]}D_i\right) \left(\sum_{i \in [n]}\frac1{D_i}\right)=M_1 \cdot M_{-1}(P) \leq \rank(G) \cdot M_1,
	$$
	which implies
	\begin{equation}
		\label{eq:m1-lower-bound}
		M_1\geq\frac{n^2}{\rank(G)}.
	\end{equation}
	Combining \eqref{eq:m2p-via-mp}, \eqref{eq:mp-via-m1}, and
	\eqref{eq:m1-lower-bound}, and using Lemma~\ref{lem:rank-bound}, we obtain
	$$
	\frac{nM_{2p}}{M_p^2} \leq \frac{nM_1^{p/2}}{M_p} \leq \frac{n^p}{M_1^{p/2}} \leq \rank(G)^{p/2} \leq d^{p/2}.
	$$
\end{proof}

\vspace{2mm}
\noindent{\bf Algorithm and Analysis.\ }
We first describe an estimator with constant success probability, assuming that the stream length $n$ is known.  The algorithm is described in Algorithm~\ref{alg:approx-higher-moment}.

\begin{algorithm}[t]
	\caption{One-pass estimator for $M_p\ (p > 2)$ in the nonnegative-cosine regime}
	\label{alg:approx-higher-moment}
	\DontPrintSemicolon
	\KwIn{A stream $P = (x_1, \ldots, x_n)$ of points in $\mathbb{R}^d$ in the nonnegative-cosine regime, parameters $p>2$ and $\eps, \delta \in(0,1)$, the stream length $n$.}
	\KwOut{A $(1+\eps)$-approximation to $M_p(P)$}
	
	Set $\kappa \gets  C_p{d^{p/2}}/{\eps^2}$, $q \gets \min\left\{1,\frac{\kappa}{n}\right\}$, and
	$K\gets \lceil 4\kappa \rceil$\;
	
	Initialize an independent linear $F_p$-sketch $(\Pi,\mathsf{Dec}_p)$ on
	$K$ coordinates with accuracy $\eps/4$, and let $m$ be the number of counters in the $F_p$-sketch (or, the number of rows of $\Pi$)\; 
	
	Initialize
	$s\gets \mathbf{0}\in\mathbb R^d$, $Y\gets \mathbf{0}\in\mathbb R^{m\times d}$, and $r\gets0$\;
	
	\ForEach{incoming $x_t\ (t = 1, \ldots, n)$}{
		$u_t\gets x_t/\norm{x_t}_2$\;
		$s\gets s+u_t$\;
		Draw $b_t\sim\operatorname{Bernoulli}(q)$\; \label{ln:b}
		\If{$b_t=1$}{
			$r\gets r+1$\;
			\lIf{$r>K$}{\KwRet{\textnormal{\textsc{Overflow}}}}
			Generate the $r$-th sketch column $a_r=\Pi e_r$ and update
			$Y\gets Y+a_ru_t^\top$\;
		}
	}
	Compute $z\gets Ys$ and $\widetilde F\gets\mathsf{Dec}_p(z)$\;
	\Return{$\widetilde M_p\gets \widetilde F/q$}\;
\end{algorithm}

Let $\kappa :=  C_p{d^{p/2}}/{\eps^2}$, and $q:=\min\left\{1,\frac{\kappa}{n}\right\}$,
where $C_p>0$ is a sufficiently large constant.  Independently sample each
stream index with probability $q$.  If the sampled indices are
$i_1,\ldots,i_R$, define the sampled density vector as
$$
y=(D_{i_1},\ldots,D_{i_R},0,\ldots,0)\in\mathbb R^K,
$$
where $K:=\lceil4\kappa\rceil$; if more than $K$ points are sampled, then we report overflow.
Clearly,
\begin{equation}
	\label{eq:nonnegative-Fp}
	F_p(y)=\sum_{r\in [R]} D_{i_r}^p = \sum_{i \in [n]} b_iD_i^p.
\end{equation}

The algorithm does not store $u_{i_1},\ldots,u_{i_R}$.  Instead, let
$\Pi\in\mathbb R^{m\times K}$ be the matrix of a linear $F_p$ sketch and let
$a_r$ denote its $r$th column.  The algorithm maintains
$
Y:=\sum_{r \in [R]} a_r u_{i_r}^{\top}\in\mathbb R^{m\times d}
$
and
$
s:=\sum_{i \in [n]} u_i\in\mathbb R^d.
$
At the end of the stream,
\begin{equation}
	\label{eq:lifted-sketch-identity}
	Ys = \sum_{r \in [R]} a_r\ip{u_{i_r}}{s} = \sum_{r \in [R]} a_rD_{i_r} = \Pi y.
\end{equation}
In this sense, each scalar counter of the original linear sketch is lifted to a $d$-dimensional vector, allowing the desired sketch $\Pi y$ to be recovered once the final sum $s$ is known.

\begin{lemma}
	\label{lem:uniform-moment-sampling}
	Let $b_1,\ldots,b_n$ be independent Bernoulli variables with parameter $q$ as
	in Line \ref{ln:b} of Algorithm~\ref{alg:approx-higher-moment}, and define
	$
	X:=\frac{1}{q}\sum_{i \in [n]} b_iD_i^p.
	$
	Then $\E[X]=M_p(P)$ and $\Var[X] \leq \frac{\eps^2}{C_p} \cdot M_p(P)^2.$
\end{lemma}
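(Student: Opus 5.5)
The plan is a direct second-moment computation, using the moment-ratio bound of Lemma~\ref{lem:moment-ratio} to control the variance. First I would settle the trivial case. If $q=1$, that is $\kappa\geq n$, then every $b_i=1$ deterministically. Hence $X=M_p(P)$ exactly, $\Var[X]=0$, and both claims hold. So assume $q=\kappa/n<1$.

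Unbiasedness follows from linearity. The densities $D_i=\ip{u_i}{s}$ are deterministic functions of the stream and do not depend on the sampling coins, so
$$
\E[X]=\frac1q\sum_{i\in[n]}\E[b_i]D_i^p=\sum_{i\in[n]}D_i^p=M_p(P).
$$

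For the variance, I would use independence of the $b_i$ and $\Var[b_i]=q(1-q)\leq q$ to write
$$
\Var[X]=\frac{1}{q^2}\sum_{i\in[n]}q(1-q)D_i^{2p}\leq\frac{1}{q}M_{2p}(P)=\frac{n}{\kappa}M_{2p}(P).
$$
Lemma~\ref{lem:moment-ratio} holds for every real $p\geq1$, so in particular for $p>2$. It gives $nM_{2p}(P)\leq d^{p/2}M_p(P)^2$. Therefore
$$
\Var[X]\leq\frac{d^{p/2}}{\kappa}M_p(P)^2=\frac{\eps^2}{C_p}M_p(P)^2,
$$
using $\kappa=C_pd^{p/2}/\eps^2$.

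There is no real obstacle here, since all the geometric work sits in Lemma~\ref{lem:moment-ratio}. The only point needing care is that the densities are fixed before sampling. The sampling affects only which $D_i$ enter $y$, never their values, which is what makes the variance formula valid.
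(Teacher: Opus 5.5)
Your proposal is correct and follows essentially the same route as the paper: dispose of the $q=1$ case, get unbiasedness by linearity, bound $\Var[X]\le M_{2p}(P)/q=(n/\kappa)M_{2p}(P)$ using independence, and then apply Lemma~\ref{lem:moment-ratio} to replace $nM_{2p}(P)$ by $d^{p/2}M_p(P)^2$, which gives $\frac{\eps^2}{C_p}M_p(P)^2$ by the choice of $\kappa$.
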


\begin{proof}
	The fact that $\E[X]=M_p(P)$ is immediate.  If $q=1$, the estimator is exact.  Otherwise, $q=\kappa/n$, and independence gives
	$$
	\Var[X] = \frac{1-q}{q}\sum_{i \in [n]}D_i^{2p} \leq \frac{M_{2p}(P)}q.
	$$
	Therefore, by Lemma~\ref{lem:moment-ratio},
	$$
	\frac{\Var[X]}{M_p(P)^2} \leq \frac{n}{\kappa}\frac{M_{2p}(P)}{M_p(P)^2} \leq \frac{d^{p/2}}{\kappa} \leq \frac{\eps^2}{C_p},
	$$
	where the last inequality follows from the
	definition of $\kappa$.
\end{proof}

\begin{theorem}
	\label{thm:approx-higher-moments}
	For every fixed real $p>2$, in the nonnegative-cosine regime, there is a
	one-pass algorithm that, for every $\eps, \delta\in(0,1)$, returns a $(1+\eps)$-approximation to $M_p(P)$ for any input stream $P$ of vectors in $\mathbb{R}^d$ with probability at least $1-\delta$.
	Its space usage is
	$
	\widetilde O_p\left( d^{p/2}\eps^{-4+4/p}\log\frac1\delta \right)
	$
	words. 
\end{theorem}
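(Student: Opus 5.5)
The proof analyzes Algorithm~\ref{alg:approx-higher-moment} as a two-stage estimator. The stages are: uniform sampling of density coordinates, then a linear $F_p$-sketch of the sampled density vector recovered through the lifted identity \eqref{eq:lifted-sketch-identity}. The errors from the two stages are combined, overflow is ruled out, the space is counted, and the success probability is finally boosted by a median.

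\emph{Sampling error and overflow.} By Lemma~\ref{lem:uniform-moment-sampling} and Chebyshev, $X=\frac1q\sum_i b_iD_i^p$ satisfies $|X-M_p|\le (\eps/2)M_p$ with probability at least $1-4/C_p$. This is at least $0.95$ once $C_p$ is large.

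For overflow, note that the number $R$ of sampled indices is $\operatorname{Binomial}(n,q)$ with mean at most $\kappa$. A Chernoff (or Markov) bound gives $\Pr[R>K]=\Pr[R>4\kappa]\le 0.01$, and when $q=1$ there is no issue since then $n\le\kappa$. So with probability at least $0.99$ the algorithm does not return \textsc{Overflow}, and $y\in\mathbb R^K$ is well defined.

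\emph{Sketch error.} By \eqref{eq:lifted-sketch-identity}, $z=Ys=\Pi y$ exactly. This holds because $\Pi$ is chosen independently of the stream and the columns $a_r=\Pi e_r$ are generated on demand. Hence $\widetilde F$ is a $(1+\eps/4)$-approximation of $F_p(y)$ with probability $0.99$, conditioned on the sample.

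By \eqref{eq:nonnegative-Fp}, $F_p(y)/q=X$. So $\widetilde M_p=\widetilde F/q$ is within factor $(1\pm\eps/4)(1\pm\eps/2)\subseteq 1\pm\eps$ of $M_p$. A union bound over the three events gives success probability above $0.9$. Taking the median of $O(\log(1/\delta))$ independent copies then gives $1-\delta$. The $\log(1/\delta)$ factor enters the space only through this median step.

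\emph{Space.} $Y$ has $m\cdot d$ entries, where $m$ is the number of sketch counters from \eqref{eq:black-box-fp-space} with $\eta=\eps/4$:
$$
m=O_p\bigl(K^{1-2/p}\eps^{-2}+K^{1-2/p}\eps^{-4/p}\log K\bigr),\qquad K=\Theta(C_pd^{p/2}\eps^{-2}).
$$
Then $K^{1-2/p}=O_p(d^{p/2-1}\eps^{-2+4/p})$, so $m=\widetilde O_p(d^{p/2-1}\eps^{-4+4/p})$, and the second term is dominated up to the logarithm. Multiplying by $d$ gives $\widetilde O_p(d^{p/2}\eps^{-4+4/p})$ words. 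The vector $s$ and the counter $r$ add only $O(d)$.

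The main obstacle is the randomness of $\Pi$: it needs $K$ columns, which cannot all be stored explicitly. I would generate the columns of $\Pi$ pseudorandomly from a short seed, which is what the fully-random-hash convention in the preliminaries allows; standard $F_p$-sketches are built from limited-independence hashing anyway. The known-$n$ assumption on $q$ can be removed by the doubling method of the earlier remark, or by restarting the sampling probability as $n$ grows. I would leave that as a remark rather than part of the main argument.
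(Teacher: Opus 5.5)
Your proof is correct and follows essentially the same route as the paper: Chebyshev via Lemma~\ref{lem:uniform-moment-sampling} for the sampling error, a Chernoff bound for overflow, the exact identity $Ys=\Pi y$ to invoke the black-box $F_p$-sketch at accuracy $\eps/4$, a median over $O(\log(1/\delta))$ copies, and the same space count $d\cdot K^{1-2/p}\eps^{-2}$ with $K=\Theta(d^{p/2}\eps^{-2})$. One small nit: Markov alone only gives $\Pr[R>4\kappa]\le 1/4$, not $0.01$; that weaker bound still suffices for the median step, but it is the Chernoff bound, using $\kappa\ge C_p$ large, that yields your stated constant.
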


\begin{proof}
	Consider one instance of Algorithm~\ref{alg:approx-higher-moment}.  By
	Lemma~\ref{lem:uniform-moment-sampling} and Chebyshev's inequality, choosing $C_p$
	sufficiently large makes
	\begin{equation}
		\label{eq:nonnegative-Mp}
		\frac{1}{q}\sum_{i \in [n]} b_iD_i^p=(1\pm\eps/4)M_p(P)
	\end{equation}
	with probability at least $9/10$.  The expected number of sampled points is at
	most $\kappa$.  A Chernoff bound shows that the overflow event
	$\sum_{i\in[n]} b_i>4\kappa$ happens with probability at most $e^{-\Omega(\kappa)}$ when $q<1$; when $q=1$, overflow never occurs because $n\leq\kappa$.
	
	Condition on the sampled set and on no overflow.  By
	\eqref{eq:lifted-sketch-identity}, the vector supplied to the $F_p$ decoder is
	exactly $\Pi y$, where $y$ is the sampled density vector.  The black-box sketch
	therefore returns
	\begin{equation}
		\label{eq:nonnegative-tilde-F}
		\widetilde F=(1\pm\eps/4)F_p(y)
	\end{equation}
	with constant probability.  
	Combining \eqref{eq:nonnegative-Fp}, \eqref{eq:nonnegative-Mp}, and \eqref{eq:nonnegative-tilde-F},
	we have
	$
	{\widetilde F}/{q}=(1\pm\eps)M_p(P).
	$
	Running
	$O(\log(1/\delta))$ independent instances of Algorithm~\ref{alg:approx-higher-moment} and returning their median reduces the
	error probability to $\delta$.
	
	To bound the space usage, we apply the $F_p$-sketch to a
	$K$-dimensional vector, where
	$
	K=\Theta(d^{p/2}\eps^{-2}),
	$
	with relative error $\Theta(\eps)$. By
	\eqref{eq:black-box-fp-space}, the sketch uses
	$
	\widetilde O_p\left(K^{1-2/p}\eps^{-2}\right)
	$
	scalar counters. In the lifted summary, each scalar counter is replaced
	by a $d$-dimensional vector, and we additionally store the vector sum
	$s$. Hence, one instance of
	Algorithm~\ref{alg:approx-higher-moment} uses
	$
	\widetilde O_p\left( dK^{1-2/p}\eps^{-2}+d \right) = \widetilde O_p\left( d^{p/2}\eps^{-4+4/p} \right)
	$
	words. Finally, independent repetition and taking the median amplify the
	success probability to $1-\delta$, increasing the space by a factor of
	$O(\log(1/\delta))$.
\end{proof}

\begin{remark}[Unknown stream length]
	We note that the assumption that $n$ is known is not essential.  Given an upper bound $N$ on the stream length, maintain parallel copies at geometric sampling rates
	$q_\ell=2^{-\ell}$ for $\ell=0,1,\ldots,\log N$.  Mark a level as overflowed if it stores more than $4\kappa$ points.  After observing $n$, choose the level $\ell^*$ determined solely by $n$ such that $\kappa/2\leq q_{\ell^*}n\leq\kappa$; when $n\leq\kappa$, use sampling rate $q=1$. If the level $\ell^*$ overflowed, the copy reports failure.  Its overflow probability is $e^{-\Omega(\kappa)}$, and the variance proof changes only by a constant factor.  The parallel levels contribute one additional logarithmic factor, which will again be subsumed by the $\widetilde O_p(\cdot)$ notation.
\end{remark}

\section{Higher Density Moments under Nonnegative Cosine: Lower Bound}
\label{sec:nonnegative-lb}

We now prove a superlinear lower bound for higher density moments under nonnegative cosine.

\begin{theorem}
	\label{thm:main}
	For every fixed integer $p>2$, there is a constant $\eps_p>0$ such that, for all sufficiently large dimensions $d$, any one-pass $1/3$-error streaming algorithm that gives a $(1+\eps_p)$-approximation to $M_p(P)$ for every input stream $P$ of unit vectors in $\mathbb{R}^d$ with pairwise nonnegative cosine similarities uses at least
	$\Omega_p(d^{\rho_p})$ bits of space, where
	$
	\rho_p=\frac{p^2(p-1)}{2(p^2-2)} \in (\frac{p-1}{2}, \frac{p}{2}).
	$
	The lower bound holds even when the input stream has length polynomial in $d$ and its vector coordinates can be represented using $O_p(\log d)$ bits.
\end{theorem}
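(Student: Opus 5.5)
We reduce from \Index$_N$ (Lemma~\ref{lem:index-lower-bound}) using the equal-sum moment isolation construction sketched in Section~\ref{sec:technical-highlight}. Set $m=\lfloor (d-1)/4\rfloor$ and split $\R^{4m}$ into four orthogonal blocks: a code block, anchor directions $e_1,\ldots,e_m$, balancing directions $g_1,\ldots,g_m$, and query directions $w_1,\ldots,w_m$. We also reserve one extra coordinate for a small common perturbation.

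\textbf{Code.} First we build, by a probabilistic argument, $N=c_Nm^{\rho_p}$ normalized incidence vectors $a_i\in\R^m_{\ge0}$ of random subsets of size $k$, for a suitable $k$. Intersection sizes are hypergeometric with mean $k^2/m$, and concentration plus a union bound over pairs should give two properties:
\begin{itemize}
\item the query leakage satisfies $\sum_{i\ne q}\ip{a_i}{a_q}^p\le\iota$ for every $q$;
\item the row sums satisfy $1+\sum_{r\ne i}\ip{a_i}{a_r}=O_p(R_{\max})$ with $R_{\max}=(N/m)^{(p-1)/(p-2)}$.
\end{itemize}
The sparsity $k$ must be chosen so that both hold simultaneously. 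I would try $k$ with $Nk^2/m\cdot k^{-1}\approx R_{\max}$ and verify the $p$-th power sum through the high-moment tail of the intersection counts. A balanced coloring $\chi:[N]\to[m]$ then assigns $N/m$ codewords to each color.

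\textbf{Gadgets.} Set $t_j=t_0+jh$ and $c_j=(-1)^{p-j}\binom pj$, with level vectors
\[
u_{i,j,\pm}=t_ja_i+\lambda e_{\chi(i)}\pm\mu_jg_{\chi(i)}.
\]
The gadget $\Gone(a_i)$ takes $|c_j|$ copies of both signs at levels with $c_j>0$, and $\Gzero(a_i)$ does the same at levels with $c_j<0$. The constants $t_0,h,\lambda$ are chosen so that $\lambda^2\ge\max_j\mu_j^2$ and all $t_j^2+\lambda^2<1$. This makes same-color inner products nonnegative, while different-color pairs reduce to $t_jt_k\ip{a_i}{a_{i'}}\ge0$.

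Since $\sum_jc_j=\sum_jc_jt_j=0$, the two gadgets have equal size and equal sum. Hence the final sum $s$ is independent of $z$, and each level vector has density $B_i+t_jZ_i$ with $B_i$ independent of $j$ (the $g$-components cancel in $s$). The finite-difference identity then gives the exact decomposition
\[
M_p(P(z,q))=V_0(q)+\kappa_p\sum_{i:z_i=1}Z_i^p .
\]

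\textbf{Bob's suffix.} Bob appends $L=m^{(p-2)/2}$ copies of each
\[
v_\ell=m^{-1/2}a_q+\sqrt{1-1/m}\,w_\ell .
\]
These are nonnegative against everything. The suffix gives $Z_q\ge Q=m^{(p-1)/2}$, and with the code properties,
\[
\sum_{i\ne q}Z_i^p=O_p\bigl(NR_{\max}^p+\iota Q^p\bigr)\le\nu Q^p ,
\]
once $\iota$ and then $c_N$ are taken small. The exponent $\rho_p$ is exactly what makes $NR_{\max}^p\asymp Q^p$.

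Next, $V_0(q)=O_p(Q^p)$, obtained by bounding separately:
\begin{itemize}
\item the anchor part, $N(N/m)^p=o(Q^p)$ since $\rho_p<p/2$;
\item the code baseline;
\item the query response on the zero-gadgets;
\item the suffix self-contribution, $mL^{p+1}=O(Q^p)$.
\end{itemize}
Each density is a sum of these pieces, so we split using $(x+y)^p\le 2^{p-1}(x^p+y^p)$.

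\textbf{Gap and precision.} If $z_q=1$ then $M_p\ge V_0+\kappa_pQ^p$; if $z_q=0$ then $M_p\le V_0+\kappa_p\nu Q^p$. Both $V_0=O_p(Q^p)$ and the signal are of the same order, so this is a constant-factor gap, which fixes $\eps_p$. Finally, we add a tiny common positive coordinate and round to $O_p(\log d)$-bit rationals. The error is polynomially small and preserves both the gap and nonnegativity, the latter because the common coordinate makes all inner products strictly positive before rounding. This yields $\Omega(N)=\Omega_p(d^{\rho_p})$.

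\textbf{Main obstacle.} I expect the hardest step to be the existence of the sparse nonnegative code with simultaneously tiny $p$-th power leakage $\iota$ (uniformly over $q$) and small row sums at $N=\Theta(m^{\rho_p})$. This needs careful tuning of $k$ and tail bounds on intersection sizes. A close second is bounding all cross terms of $V_0$ tightly enough to keep it $O_p(Q^p)$.
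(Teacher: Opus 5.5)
Your proposal is correct and is essentially the paper's proof: it uses the same four-block geometry with colored anchor and balancing directions, the same finite-difference equal-sum gadgets, Bob's suffix with $L=m^{(p-2)/2}$ copies of each $v_\ell$, the same decomposition $M_p(P(z,q))=V_0(q)+\kappa_p\sum_{i:z_i=1}Z_i^p$, the same order of fixing $\iota$ and then $c_N$, and the same perturb-then-rationalize step. The deviations are minor:
\begin{itemize}
\item For the code, the paper uses a first-moment argument instead of your per-row concentration. It applies Markov's inequality to the total row sums of $4N$ random supports, using $\E[X^p]\le C_p s^2/m$, and then deletes the bad rows.
\item It rationalizes through a rational parametrization of the sphere, so the vectors remain exactly unit.
\item In bounding $V_0(q)$ you should also include the prefix part $\gamma_pR_q/\sqrt m$ of each suffix density; the paper checks that this is $o(L)$.
\end{itemize}
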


The remainder of this section proves Theorem~\ref{thm:main}. We begin
by introducing the notation and parameters used throughout the proof.
We then develop the two main technical ingredients: a sparse
nonnegative code and a finite-difference identity. Building on these
tools, we construct the hard input instances and establish their key
properties. Finally, we reduce the one-way \Index\ problem to
approximating $M_p$ and analyze the finite-precision requirements of the
reduction.

\subsection{Notation and Parameters}

We collect the main notation used throughout the proof.

\begin{itemize}
	\item $p>2$ is a fixed integer exponent, and
	$\rho_p:=\frac{p^2(p-1)}{2(p^2-2)}$. The base dimension parameter is $m$, and
	$N=c_Nm^{\rho_p}$ is the number of bits encoded by Alice, where
	$c_N=c_N(p)>0$ is a sufficiently small constant. The construction
	has dimension $4m$ before a one-dimensional perturbation for meeting the precision requirement.
	
	\item $a_1,\ldots,a_N\in\R_{\geq0}^m$ are unit codewords. A balanced
	coloring $\chi:[N]\to[m]$ assigns each codeword to a color, with
	$|\chi^{-1}(c)|=N/m$ for every color $c\in[m]$.
	
	\item $e_1,\ldots,e_m$, $g_1,\ldots,g_m$, and
	$w_1,\ldots,w_m$ are orthonormal bases for the anchor, balancing, and
	suffix blocks, respectively. These three blocks are mutually
	orthogonal and are also orthogonal to the code block.
	
	\item For $j=0,\ldots,p$, $c_j:=(-1)^{p-j}\binom pj$ and $t_j:=t_0+jh$.
	Here $c_j$ is the $j$-th finite-difference coefficient, while $t_0$
	and $h$ are the initial coefficient and level spacing.
	
	\item $\lambda$ is the local-anchor coefficient, and
	$\mu_j:=\sqrt{1-\lambda^2-t_j^2}$
	is the balancing coefficient that makes each gadget vector a unit
	vector.
	
	\item $\gamma_p,\beta_p>0$ are the coefficients in the common gadget
	sum 
	$
	\sum_{u\in\Gzero(a_i)}u = \sum_{u\in\Gone(a_i)}u = \gamma_pa_i+\beta_pe_{\chi(i)}
	$.
	The constant $\kappa_p:=2p!\,h^p$ is the coefficient of the isolated $p$-th-order response.
	
	\item $B_i$ is the local-anchor contribution to the density of vectors
	in the $i$-th gadget, and $Z_i$ is its code-space response. Thus, a
	level-$j$ vector in that gadget has density
	$B_i+t_jZ_i$.
	
	\item $L$ is the number of copies of each suffix vector, and
	$Q:=L\sqrt m$ is the amplitude added by Bob in the queried code direction. We take
	$L=\Theta(m^{(p-2)/2})$, and hence $Q=\Theta(m^{(p-1)/2})$.
	
	\item For each codeword, let $R_i:=1+\sum_{r\neq i}\ip{a_i}{a_r}$ and
	$R_{\max}:=\left(N/m\right)^{(p-1)/(p-2)}$.
	The sparse code guarantees $R_i=O_p(R_{\max})$.
	
	\item $V_0(q)$ denotes the value of $M_p$ on the all-zero encoding
	followed by Bob's suffix for query $q$. It is the common baseline in
	the final reduction.
\end{itemize}

\subsection{A Sparse Nonnegative Code}

We need many nonnegative unit vectors whose $p$-th power
cross-correlations are small, while their row sums remain
controlled.

\begin{lemma}
	\label{lem:code}
	Fix an integer $p>2$ and a constant $\iota \in (0, \frac{1}{2})$.  There is a constant
	$C=C(p,\iota)$ such that the following holds.  Suppose $m$ is
	sufficiently large and $m<N=o(m^{p/2}).$
	Then there exist $N$ distinct unit vectors $a_1,\ldots,a_N\in\R^m_{\geq 0}$ satisfying, for every $i\in[N]$,
	$$
	\sum_{j\neq i}\ip{a_i}{a_j}^p \leq \iota, \quad \text{and} \quad \sum_{j\neq i}\ip{a_i}{a_j} \leq C\left(\frac Nm\right)^{(p-1)/(p-2)}.
	$$
	Moreover, each $a_i$ may be chosen as a normalized incidence vector of
	an $s$-subset of $[m]$, where $s=\Theta_{p,\iota}\left(\left(N/m\right)^{1/(p-2)}\right)$ and $s=o(\sqrt m)$.
\end{lemma}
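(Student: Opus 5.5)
We would prove Lemma~\ref{lem:code} by the probabilistic method with alteration, using random $s$-subsets of $[m]$. Fix $s:=\lceil A(N/m)^{1/(p-2)}\rceil$, where $A=A(p,\iota)$ is a large constant chosen below. Since $N>m$ we have $s\geq 1$. Since $N=o(m^{p/2})$ we have $N/m=o(m^{(p-2)/2})$, hence $s=o(\sqrt m)$, and the quantity $\mu:=s^2/m$ tends to $0$. We draw $2N$ independent uniformly random $s$-subsets $S_1,\ldots,S_{2N}\subseteq[m]$ and let $a_i:=\1_{S_i}/\sqrt s$. These are nonnegative unit vectors with $\ip{a_i}{a_j}=k_{ij}/s$, where $k_{ij}:=|S_i\cap S_j|$.

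\textbf{Intersection tail bound.} For $i\neq j$, condition on $S_i$; then $k_{ij}$ is hypergeometric. A union bound over $r$-subsets of $S_i$ that could lie in $S_j$ gives
\[
\Pr[k_{ij}\geq r]\leq \binom{s}{r}\frac{\binom{m-r}{s-r}}{\binom{m}{s}}\leq \binom sr\Big(\frac sm\Big)^r\leq \frac{\mu^r}{r!}.
\]
Because $\mu\leq 1$, this yields $\E[k_{ij}^p]\leq\sum_{r\geq1}r^p\mu^r/r!=O_p(\mu)$ and $\E[k_{ij}]\leq\mu$.

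\textbf{Expected row sums.} For each fixed $i$, the two expected sums are
\[
\E\Big[\sum_{j\neq i}\ip{a_i}{a_j}^p\Big]\leq 2N\cdot\frac{O_p(\mu)}{s^p}=O_p\Big(\frac{N}{m\,s^{p-2}}\Big)=O_p(A^{-(p-2)}),
\]
\[
\E\Big[\sum_{j\neq i}\ip{a_i}{a_j}\Big]\leq \frac{2N\mu}{s}=\frac{2Ns}{m}=O_{p,\iota}\Big(\Big(\frac Nm\Big)^{(p-1)/(p-2)}\Big).
\]
The last step uses $Ns/m\approx A(N/m)^{1+1/(p-2)}$, together with the rounding slack $s\leq 2A(N/m)^{1/(p-2)}$. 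We now choose $A$ large enough that the first expectation is at most $\iota/4$, and set $C$ to be $4$ times the constant in the second bound.

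\textbf{Alteration.} Call index $i$ bad if either of its two sums exceeds $4$ times its expectation. By Markov's inequality and a union bound, each $i$ is bad with probability at most $1/2$. So with positive probability at least $N$ indices are good; we keep $N$ of them and discard the rest. Deleting vectors can only decrease the remaining row sums, so every kept vector satisfies both required inequalities with constant $C$.

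\textbf{Distinctness.} If two kept vectors were equal, their inner product would be $1$, so each would have $p$-th power sum at least $1>\iota$. This contradicts the bound $\iota$, so the kept vectors are automatically distinct.

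\textbf{Main obstacle.} The delicate step is showing that $\E[k_{ij}^p]=O_p(\mu)$, i.e.\ that the $p$-th moment of the intersection is governed by the event $\{k_{ij}=1\}$ and not by larger overlaps. This is exactly where the regime $s=o(\sqrt m)$, equivalently $N=o(m^{p/2})$, is needed: it gives $\mu<1$, so the tail series converges to $O_p(\mu)$. The remaining steps are routine bookkeeping of exponents, which ensures that the same choice of $s$ balances the $p$-th power sum to a constant while keeping the first-power sum at $(N/m)^{(p-1)/(p-2)}$.
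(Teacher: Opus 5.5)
Your proposal is correct and follows essentially the same route as the paper: random $s$-subsets with $s\approx A(N/m)^{1/(p-2)}$, an $O_p(s^2/m)$ bound on the $p$-th moment of the hypergeometric intersection size, Markov's inequality followed by deletion of bad rows, and distinctness forced by the $p$-th power bound $\iota<1$. The differences are cosmetic. You bound $\E\bigl[|S_i\cap S_j|^p\bigr]$ via the tail estimate $\Pr\bigl[|S_i\cap S_j|\geq r\bigr]\leq (s^2/m)^r/r!$ instead of the paper's falling-factorial expansion. You also apply Markov per index to $2N$ samples, whereas the paper applies it to the total sums over $4N$ samples and then discards rows exceeding $16$ times the average.
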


\begin{proof}
	Let $M := 4N$ and 
	$
	s:=\left\lceil C_0\left({4N}/{m}\right)^{1/(p-2)} \right\rceil
	$,  
	where $C_0 = C_0(p, \iota)$ is a sufficiently large constant. Since $N=o(m^{p/2})$, we have $s=o(\sqrt m)$.
	
	Independently choose $M$ uniformly random $s$-subsets $S_1,\ldots,S_M\subseteq[m]$ and set $a_i={\ind_{S_i}}/{\sqrt s}.$
	For two independent random supports $S$ and $T$, let
	$X=|S\cap T|$.  Then $\ip{a_S}{a_T}= X/s$, and $\E[X]={s^2}/{m}$.
	By the factorial-moment identity for a hypergeometric random variable (Lemma~\ref{lem:hypergeometric-moment} in Appendix~\ref{sec:lem:sparse-intersection}), we have
	$
	\E[X^p]\leq C_p \cdot {s^2}/{m}
	$
	for a constant $C_p$.  Hence
	\begin{equation*}\label{eq:pair-p-expectation}
		\E\left[\ip{a_S}{a_T}^p\right] \leq \frac{C_p}{m s^{p-2}},
		\quad \text{and} \quad
		\E\left[\ip{a_S}{a_T}\right]=\frac sm,
	\end{equation*}
	which implies
	\begin{align*}
		\mathbb E\left[\frac1M\sum_i \sum_{j\neq i}\ip{a_i}{a_j}\right]
		&\leq \frac{Ms}{m} = O_{p,\iota}\left(\left(\frac Nm\right)^{(p-1)/(p-2)}
		\right), \quad \text{and}\\
		\mathbb E\left[\frac1M\sum_i \sum_{j\neq i}\ip{a_i}{a_j}^p\right]
		&\leq \frac{C_pM}{m s^{p-2}} = O_p\left(C_0^{-(p-2)}\right).
	\end{align*}
	By Markov's inequality and a union bound, with positive probability
	both total sums are at most four times their respective expectations.
	Fix such a realization.  Delete every row whose
	first power sum exceeds $16$ times the corresponding expected
	average, and also delete every row whose $p$-th power sum exceeds $16$ 
	times its expected average.  At most $M/4$ rows are deleted for each
	condition, so at least $M/2\geq N$ rows remain.  Take any $N$ of them.
	Deleting vectors only decreases the row sums of the remaining vectors.
	Finally, choose $C_0$ sufficiently large that the remaining
	$p$-th power row sums are at most $\iota$.  The first power bound follows
	with a sufficiently large constant $C=C(p,\iota)$.
	
	Finally, note that if two unit vectors were identical, their inner product would be $1$, forcing the relevant $p$-power row sum to be at least $1$, contradicting the chosen bound $\iota<1/2$.
\end{proof}

\subsection{The Finite-Difference Construction}

For $j=0,1,\ldots,p$, define $c_j:=(-1)^{p-j}\binom pj.$ Let $\cP:=\{j:c_j>0\}$ and
$\cN:=\{j:c_j<0\}$ be the sets of positive and negative finite-difference coefficients, respectively.

We use the following constants: $\lambda^2=\frac23$, $t_0=\frac1{20}$, and $h=\frac{2}{5p}.$
For each $j = 0, 1, \ldots, p$, let $t_j=t_0+jh$, and $\mu_j=\sqrt{1-\lambda^2-t_j^2}$.
Then $t_p=9/20$, and all $\mu_j$'s are positive because $\lambda^2+t_p^2<1$.

The quantity $t_j$ is the coefficient placed on a codeword at level $j$.  The parameter
$t_0$ is its smallest value and $h$ is the level spacing.  Later, if a
final stream sum produces code response $Z$ and a common {\em local-anchor
	baseline} $B$, then a level-$j$ vector has density $B+t_jZ$.  The
finite-difference coefficients $c_j$ are chosen precisely so that all
powers below $p$ cancel (see, e.g., \cite[Chapter~2.6]{GKP94}).

More precisely, we have the following lemma, whose proof is included in Appendix~\ref{sec:proof:lem:finite-difference} for completeness.

\begin{lemma}[Finite-difference identity]
	\label{lem:finite-difference}
	For every $B,Z\in\R$ and every integer $0\leq r\leq p$,
	$$
	\sum_{j=0}^p c_j(B+t_jZ)^r =
	\begin{cases}
		0,&0\leq r<p,\\
		p!\,h^pZ^p,&r=p.
	\end{cases}
	$$
\end{lemma}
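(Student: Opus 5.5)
The identity follows from expanding each power binomially and applying the standard vanishing of forward differences of low-degree polynomials, so I would organize the proof around those two facts.

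\textbf{Step 1 (polynomial moments of the coefficients).} First I would establish, for every integer $0\le k\le p$,
$$
\sum_{j=0}^p c_j\, j^k=\begin{cases}0,&0\le k<p,\\ p!,&k=p.\end{cases}
$$
The sum $\sum_j c_j f(j)=\sum_j(-1)^{p-j}\binom pj f(j)$ is exactly the $p$-th forward difference $(\Delta^p f)(0)$, where $(\Delta f)(x)=f(x+1)-f(x)$. This follows by expanding $\Delta=E-I$, with $E$ the shift operator, through the binomial theorem; the two operators commute. Each application of $\Delta$ lowers the degree of a polynomial by one and multiplies its leading coefficient by its degree. Hence $\Delta^p$ annihilates every polynomial of degree $<p$ and sends $x^p$ to the constant $p!$. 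A self-contained alternative is induction on $p$ using Pascal's rule $\binom{p}{j}=\binom{p-1}{j}+\binom{p-1}{j-1}$, or differentiating $(e^x-1)^p=\sum_j c_j e^{jx}$ $k$ times at $x=0$. The function $(e^x-1)^p$ vanishes to order exactly $p$ at $0$ with leading coefficient $1$, so $\sum_j c_j j^k=0$ for $k<p$ and $\sum_j c_j j^p=p!$.

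\textbf{Step 2 (reduce the affine expression to powers of $j$).} Write $B+t_jZ=(B+t_0Z)+jhZ=:\alpha+j\beta$, where $\alpha:=B+t_0Z$ and $\beta:=hZ$ do not depend on $j$. For fixed $r\le p$, the binomial theorem gives
$$
\sum_{j=0}^p c_j(\alpha+j\beta)^r=\sum_{k=0}^r\binom rk\alpha^{r-k}\beta^k\sum_{j=0}^p c_j j^k.
$$
By Step 1, every inner sum with $k<p$ vanishes. If $r<p$, every index satisfies $k\le r<p$, so the whole expression is $0$. If $r=p$, only the $k=p$ term survives, which gives $\beta^p\,p!=p!\,h^pZ^p$, the claimed value.

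\textbf{Where the work lies.} The only substantive step is Step 1, and even that is classical; the paper cites \cite[Chapter~2.6]{GKP94} for it. I would present the shift-operator argument: $\Delta$ maps $x^k$ to a polynomial of degree $k-1$ with leading coefficient $k$, so iterating $p$ times kills all degrees below $p$ and produces $p!$ from $x^p$. Everything after that is routine expansion and needs no estimates.
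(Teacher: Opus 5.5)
Your proof is correct and follows essentially the same route as the paper: both rest on the fact that the $p$-th forward difference annihilates polynomials of degree below $p$ and sends the degree-$p$ monomial to a constant ($p!$ for unit step, $p!\,h^p$ for step $h$). The only difference is cosmetic: the paper applies $\Delta_h^p$ directly to $f_r(t)=(B+tZ)^r$ at $t_0$, whereas you use the unit step in $j$ and then substitute $B+t_jZ=\alpha+j\beta$ and expand binomially.
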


\subsection{Local Anchors and Nonnegative Geometry}

We work in the orthogonal direct sum of a code block, an anchor block, and
a balancing block.  For a codeword $a_i$, a level $j$, a balanced color function $\chi:[N]\to[m]$, and a sign $\sigma\in\{-1,+1\}$, define
\begin{equation}
	\label{eq:gadget-vector}
	u_{i,j,\sigma} = t_j a_i+\lambda e_{\chi(i)}+\sigma\mu_j g_{\chi(i)}.
\end{equation}
We call $u_{i,j,\sigma}$ a level-$j$ vector associated with the codeword $a_i$; the sign $\sigma\in\{-1,+1\}$ specifies its balancing component. By construction, $\norm{u_{i,j,\sigma}}=1$.

\begin{lemma}
	\label{lem:acute}
	Every pair of vectors of the form \eqref{eq:gadget-vector} has nonnegative inner product.  
\end{lemma}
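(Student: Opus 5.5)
We compute the inner product of two gadget vectors $u_{i,j,\sigma}$ and $u_{i',k,\tau}$ directly from \eqref{eq:gadget-vector}. The code, anchor, and balancing blocks are mutually orthogonal, and within the anchor and balancing blocks the bases are orthonormal. Hence
$$
\ip{u_{i,j,\sigma}}{u_{i',k,\tau}} = t_jt_k\ip{a_i}{a_{i'}} + \lambda^2\,\ind[\chi(i)=\chi(i')] + \sigma\tau\mu_j\mu_k\,\ind[\chi(i)=\chi(i')].
$$
We then split into two cases according to whether the two vectors share a color.

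\emph{Different colors}, $\chi(i)\neq\chi(i')$. Only the code term survives. The codewords from Lemma~\ref{lem:code} lie in $\R^m_{\geq0}$, so $\ip{a_i}{a_{i'}}\geq0$. The coefficients $t_j,t_k\geq t_0>0$ are also positive, so the inner product is nonnegative.

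\emph{Same color}, $\chi(i)=\chi(i')$. This includes $i=i'$. The code term is again nonnegative, because $\ip{a_i}{a_{i'}}\ge 0$, and it equals $1$ when $i=i'$. It therefore suffices to show $\lambda^2+\sigma\tau\mu_j\mu_k\geq0$. The worst case is $\sigma\tau=-1$, where we need $\mu_j\mu_k\leq\lambda^2$. Each $\mu_j^2=1-\lambda^2-t_j^2\leq 1-\lambda^2-t_0^2$, and with $\lambda^2=2/3$ this gives $\mu_j^2\leq 1/3$. Hence $\mu_j\mu_k\leq 1/3<2/3=\lambda^2$. The anchor term thus strictly dominates the worst negative balancing contribution.

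There is no real obstacle beyond this bookkeeping. The one point to check carefully is that the balancing vectors $g_c$ are shared only among vectors of the same color, so that the sign $\sigma\tau$ can appear only together with the anchor term $\lambda^2$. This is exactly the purpose of coupling $e_{\chi(i)}$ and $g_{\chi(i)}$ through the same color. Note also that the bound $\mu_j\mu_k\le 1-\lambda^2$ holds for every level pair $j,k$, independently of $p$ and $h$, so the constants chosen in this subsection suffice uniformly.
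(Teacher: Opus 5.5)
Your proposal is correct and follows the paper's proof essentially verbatim: you split on whether $\chi(i)=\chi(i')$, use nonnegativity of the codewords and $t_j,t_k>0$ in the different-color case, and in the same-color case bound $\mu_j\mu_k\le \max_\ell \mu_\ell^2\le 1/3<\lambda^2$. The paper states the same-color bound as $\lambda^2-\max_\ell\mu_\ell^2=\frac13+t_0^2>0$.
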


\begin{proof}
	If $\chi(i)\neq\chi(i')$, then the local anchor and balancing blocks are
	orthogonal, so
	$$
	\ip{u_{i,j,\sigma}}{u_{i',k,\tau}}=t_jt_k\ip{a_i}{a_{i'}}\geq0
	$$
	because the codewords have nonnegative coordinates.
	
	If $\chi(i)=\chi(i')$, then
	$$
	\ip{u_{i,j,\sigma}}{u_{i',k,\tau}}=t_jt_k\ip{a_i}{a_{i'}}+\lambda^2+\sigma\tau\mu_j\mu_k.
	$$
	Since $\sigma\tau \ge -1$, we have
	$$
	\ip{u_{i,j,\sigma}}{u_{i',k,\tau}} \geq \lambda^2-\max_{\ell}\mu_\ell^2=\frac{2}{3}-\left(\frac{1}{3}-t_0^2\right)=\frac{1}{3}+t_0^2>0.
	$$
\end{proof}

We now define two gadgets (that is, two multisets of vectors) for each codeword $a_i$.  The one-gadget is defined as
\begin{equation}
	\label{eq:G1}
	\Gone(a_i) = \biguplus_{j\in\cP} \left\{|c_j|\text{ copies of }u_{i,j,+},\ |c_j|\text{ copies of }u_{i,j,-}
	\right\},
\end{equation}
and the zero-gadget is
\begin{equation}
	\label{eq:G0}
	\Gzero(a_i) = \biguplus_{j\in\cN} \left\{|c_j|\text{ copies of }u_{i,j,+},\ |c_j|\text{ copies of }u_{i,j,-}
	\right\}.
\end{equation}
Both gadgets contain exactly $2^p$ stream vectors, counting multiplicity, since $\sum_{j\in \cP} c_j=\sum_{j\in \cN}|c_j|=2^{p-1}$ and each level appears with both balancing signs.

The following two lemmas are the key properties of our hard input constructions.

\begin{lemma}\label{lem:equal-sums}
	For every $i \in [N]$,
	$
	\sum_{u\in\Gone(a_i)}u = \sum_{u\in\Gzero(a_i)}u.
	$
	Moreover, there are positive constants $\gamma_p,\beta_p$ (depending
	only on $p$) such that this common sum is
	$
	\gamma_p a_i+\beta_p e_{\chi(i)}.
	$
\end{lemma}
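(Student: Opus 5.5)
The plan is to compute both gadget sums directly, pairing the two balancing signs at each level first. By \eqref{eq:gadget-vector},
$$
u_{i,j,+}+u_{i,j,-}=2t_ja_i+2\lambda e_{\chi(i)},
$$
so the $g_{\chi(i)}$-components cancel exactly within every level. Since each level $j$ occurs in the appropriate gadget with the same multiplicity $|c_j|$ for both signs, we obtain
$$
\sum_{u\in\Gone(a_i)}u=\sum_{j\in\cP}|c_j|\bigl(2t_ja_i+2\lambda e_{\chi(i)}\bigr),\qquad
\sum_{u\in\Gzero(a_i)}u=\sum_{j\in\cN}|c_j|\bigl(2t_ja_i+2\lambda e_{\chi(i)}\bigr).
$$

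Next I subtract. Because $|c_j|=c_j$ on $\cP$ and $|c_j|=-c_j$ on $\cN$, and every $j\in\{0,\ldots,p\}$ lies in exactly one of the two sets (binomial coefficients are nonzero), the difference equals
$$
2a_i\sum_{j=0}^p c_jt_j+2\lambda e_{\chi(i)}\sum_{j=0}^p c_j .
$$
Both scalar sums vanish by Lemma~\ref{lem:finite-difference}, applied with $B=0,Z=1$ and $r=1$ (valid since $1<p$) for the first, and with $r=0$ for the second. Alternatively, one can check directly that $\sum_j c_j=(1-1)^p=0$ and that $\sum_j c_jj=0$ for $p\ge2$. This proves the equality of the two sums.

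For the explicit form, I read off the one-gadget sum: $\gamma_p:=2\sum_{j\in\cP}\binom pj t_j$ and $\beta_p:=2\lambda\sum_{j\in\cP}\binom pj=2\lambda\cdot 2^{p-1}=2^p\lambda$, using the identity $\sum_{j\in\cP}c_j=2^{p-1}$ already noted. Positivity follows because $\cP\ni p$ is nonempty, and $t_j>0$, $\lambda>0$. These constants depend only on $p$, since $t_0,h,\lambda$ are fixed functions of $p$.

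There is no real obstacle. The only point needing care is that the $r=1$ case of the finite-difference identity requires $p\ge2$, which holds since $p>2$. One should also note explicitly that the zero-gadget, not just the one-gadget, has positive coefficients. This follows immediately from the equality of the sums, or from the same computation over $\cN$.
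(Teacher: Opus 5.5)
Your proposal is correct and follows the paper's proof essentially verbatim. You pair the two balancing signs at each level, reduce the difference of the gadget sums to the zeroth- and first-order finite-difference sums (which vanish by Lemma~\ref{lem:finite-difference}), and read off $\gamma_p=2\sum_{j\in\cP}c_jt_j>0$ and $\beta_p=2^p\lambda>0$ from the one-gadget.
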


\begin{proof}
	The two signs cancel the balancing component:
	$$
	u_{i,j,+}+u_{i,j,-} = 2t_j a_i+2\lambda e_{\chi(i)}.
	$$
	Therefore the difference between the sums of the two gadgets is
	$$
	2\left(\sum_{j=0}^p c_jt_j\right)a_i + 2\lambda\left(\sum_{j=0}^p c_j\right)e_{\chi(i)}.
	$$
	Both coefficients are zero by Lemma~\ref{lem:finite-difference}, applied
	to polynomials of degrees $1$ and $0$.  For the common coefficients, we take
	\begin{equation}
		\label{eq:def-gamma}
		\gamma_p=2\sum_{j\in\cP}c_jt_j \quad \text{and} \quad \beta_p=2\lambda\sum_{j\in\cP}c_j=2^p\lambda;
	\end{equation}
	both are positive constants that depend on $p$.
\end{proof}

The following lemma shows that although the local-anchor baseline $B_i$ may be large, it cancels when we take the difference between the two bit gadgets.

\begin{lemma}
	\label{lem:pure-response}
	Suppose a final stream sum $s$ has no component in any balancing
	direction $g_c$.  Define
	$$
	B_i=\lambda\ip{e_{\chi(i)}}{s}, \quad \text{and} \quad Z_i=\ip{a_i}{s}.
	$$
	Then for any $i$, $j$ and $\sigma$, we have $\ip{u_{i,j,\sigma}}{s} = B_i+t_jZ_i$ and
	\begin{equation}\label{eq:gadget-difference}
		\sum_{u\in\Gone(a_i)}\ip{u}{s}^p - \sum_{u\in\Gzero(a_i)}\ip{u}{s}^p = \kappa_p Z_i^p,
	\end{equation}
	where $\kappa_p=2p!\,h^p.$
\end{lemma}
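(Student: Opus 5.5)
The plan is a direct computation from the definition \eqref{eq:gadget-vector}, followed by an application of Lemma~\ref{lem:finite-difference} with $r=p$.

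\emph{First claim.} Expand the density of a single gadget vector by bilinearity:
$$
\ip{u_{i,j,\sigma}}{s} = t_j\ip{a_i}{s}+\lambda\ip{e_{\chi(i)}}{s}+\sigma\mu_j\ip{g_{\chi(i)}}{s}.
$$
By hypothesis, $s$ has no component in any balancing direction, so $\ip{g_{\chi(i)}}{s}=0$. The third term therefore vanishes for both signs $\sigma$. Recalling the definitions of $B_i$ and $Z_i$, this gives $\ip{u_{i,j,\sigma}}{s}=B_i+t_jZ_i$. In particular, the value depends only on the level $j$ and not on $\sigma$.

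\emph{Gadget difference.} Use the multiset definitions \eqref{eq:G1} and \eqref{eq:G0}. Each level $j\in\cP$ contributes $2\abs{c_j}=2c_j$ vectors to $\Gone(a_i)$, and each level $j\in\cN$ contributes $2\abs{c_j}=-2c_j$ vectors to $\Gzero(a_i)$. Every one of these vectors has $p$-th powered density $(B_i+t_jZ_i)^p$. Since $c_j\neq 0$ for all $0\le j\le p$, the sets $\cP$ and $\cN$ partition $\{0,\ldots,p\}$. Hence the left-hand side of \eqref{eq:gadget-difference} equals
$$
\sum_{j\in\cP}2c_j(B_i+t_jZ_i)^p-\sum_{j\in\cN}2\abs{c_j}(B_i+t_jZ_i)^p = 2\sum_{j=0}^p c_j(B_i+t_jZ_i)^p.
$$
Applying Lemma~\ref{lem:finite-difference} with $r=p$, $B=B_i$ and $Z=Z_i$ gives $2p!\,h^pZ_i^p=\kappa_pZ_i^p$.

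\emph{Main obstacle.} The argument is essentially bookkeeping, and I expect no real difficulty. The only points needing care are the following.
\begin{itemize}
\item The factor $2$ from the two balancing signs must be tracked, since it is what produces $\kappa_p=2p!\,h^p$ rather than $p!\,h^p$.
\item We must confirm that no $c_j$ vanishes, so that every level is assigned to exactly one of the two gadgets.
\item The hypothesis that $s$ has zero $g$-component is essential. It is what makes the density affine in $t_j$ with a level-independent intercept $B_i$, which is exactly the form Lemma~\ref{lem:finite-difference} requires.
\end{itemize}
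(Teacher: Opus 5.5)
Your proposal is correct and follows the paper's proof essentially verbatim. Bilinearity plus the vanishing $g$-component gives the density $B_i+t_jZ_i$, the two balancing signs turn the difference into $2\sum_{j=0}^p c_j(B_i+t_jZ_i)^p$, and Lemma~\ref{lem:finite-difference} with $r=p$ finishes; your check that no $c_j$ vanishes, so that $\cP$ and $\cN$ partition the levels, is a detail the paper leaves implicit.
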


\begin{proof}
	Because $s$ has no balancing component,
	$
	\ip{u_{i,j,\sigma}}{s}=B_i+t_jZ_i.
	$
	Since each level appears with both signs, the difference in the left-hand side of
	\eqref{eq:gadget-difference} equals
	$
	2\sum_{j=0}^p c_j(B_i+t_jZ_i)^p,
	$
	which is $2p!\,h^pZ_i^p$ by Lemma~\ref{lem:finite-difference}.
\end{proof}

\subsection{The Hard Input}
\label{sec:Mp-input-reduction}

We now perform an input reduction from the $\textsc{Index}$ problem to $M_p$.
Let $z=(z_1,\ldots,z_N)\in\{0,1\}^N$ be Alice's input to
$\textsc{Index}_N$.  She inserts the multiset of vectors
$
\biguplus_{i \in [N]} \mathcal{G}_{z_i}(a_i).
$
By Lemma~\ref{lem:equal-sums}, its vector sum is independent of $z$ :
\begin{equation}
	\label{eq:sA}
	s_A = \gamma_p\sum_{i \in [N]} a_i + \beta_p\sum_{i \in [N]} e_{\chi(i)}.
\end{equation}
In particular, $s_A$ has no balancing component.

Bob receives an index $q\in[N]$.  In a fresh $m$-dimensional block,
let $w_1,\ldots,w_m$ be an orthonormal basis and define vectors
\begin{equation*}
	\label{eq:vell}
	v_\ell = \frac1{\sqrt m}a_q + \sqrt{1-\frac1m}\,w_\ell, \qquad \ell\in[m].
\end{equation*}
Each $v_\ell$ is a unit vector, and
\begin{equation*}
	\label{eq:suffix-ip}
	\ip{v_\ell}{v_r}=
	\begin{cases}
		1,& \text{if } \ell=r,\\
		1/m,& \text{if } \ell\neq r.
	\end{cases}
\end{equation*}
Furthermore,
\begin{equation*}
	\label{eq:suffix-prefix-ip}
	\ip{v_\ell}{u_{i,j,\sigma}} = \frac{t_j}{\sqrt m}\ip{a_q}{a_i} \geq 0.
\end{equation*}
Thus, Bob's vectors preserve the nonnegative-cosine promise.

Bob appends $L$ copies of every $v_\ell$, where
$L=\left\lfloor m^{(p-2)/2} \right\rfloor$, and defines the query amplitude
$Q:=L\sqrt m$. The code-space component of Bob's suffix sum is
$Qa_q$. Consequently, the final code response in the $i$-th gadget is
\begin{equation}
	\label{eq:Zi-formula}
	Z_i = \gamma_p\sum_{r\in[N]}\ip{a_i}{a_r} + Q\ip{a_i}{a_q}.
\end{equation}
The first term is the response created by Alice's prefix, while the
second is the additional response created by Bob's query suffix. We
refer to $Z_q$ as the \emph{queried code response} and to $Z_i\ (i \neq q)$ as the \emph{nonqueried code responses}. Since $\ip{a_q}{a_q}=1$ and all codeword inner products are nonnegative,
\begin{equation}
	\label{eq:Zq-lower}
	Z_q\geq Q.
\end{equation}
For $i\neq q$, Bob's suffix contributes only $Q\ip{a_i}{a_q}$; the resulting nonqueried responses are the ``noise'' controlled by the sparse-code properties. Finally, the stream sum has no balancing component, so Lemma~\ref{lem:pure-response} applies.

The whole stream can be written as 
\begin{equation}
	\label{eq:P}
	P(z,q) = \left(
	\biguplus_{i\in[N]} \mathcal G_{z_i}(a_i),
	\underbrace{v_1,\ldots,v_1}_{L\text{ copies}},
	\underbrace{v_2,\ldots,v_2}_{L\text{ copies}},
	\ldots,
	\underbrace{v_m,\ldots,v_m}_{L\text{ copies}}
	\right).
\end{equation}

\subsection{Moment Upper Bounds}

Set $R_{\max}=\left(N/m\right)^{(p-1)/(p-2)}.$ By Lemma~\ref{lem:code}, there is a sufficiently large constant $C_R$ such that, for every $i\in[N]$,
\begin{equation}
	\label{eq:row-one-bound}
	R_i:=1+\sum_{r\neq i}\ip{a_i}{a_r} \leq C_R R_{\max},
\end{equation}
and for every $q$,
\begin{equation}
	\label{eq:row-p-bound}
	\sum_{i\neq q}\ip{a_i}{a_q}^p\leq\iota.
\end{equation}

The exponent $\rho_p$ was chosen to balance Alice's total code contribution $\sum_{i \in [N]}  R_i^p = O_p(N R_{\max}^p)$ with Bob's query contribution $\Omega_p\left(Z_q^p\right) = \Omega_p\left(Q^p\right)$.   Note that
\begin{eqnarray*}
	\label{eq:exponent-balance}
	\frac{N R_{\max}^p}{Q^p} 
	&=&\Theta\left(\frac{N\left(N/m\right)^{p(p-1)/(p-2)}}{m^{p(p-1)/2}}\right) = \Theta_p \left(\frac{c_N m^{\rho_p}\left(c_N m^{\rho_p}/m\right)^{p(p-1)/(p-2)}}{m^{p(p-1)/2}}\right)\\
	&=&\Theta_p\left({c_N}^{(p^2-2)/(p-2)}\right).
\end{eqnarray*}
Consequently, after the code parameter $\iota$ is fixed, we may make
$N R_{\max}^p/Q^p$ arbitrarily small by choosing $c_N$ sufficiently small.
\smallskip

The following lemma shows that the total contribution of the
nonqueried code responses is small compared with the queried signal
scale $Q^p$.
\begin{lemma}
	\label{lem:interference}
	For every constant $\nu>0$, the code parameter $\iota$ and then the
	constant $c_N$ in $N= c_N m^{\rho_p}$ can be chosen so that,
	for every query index $q$, $\sum_{i\neq q} Z_i^p\leq \nu Q^p.$
\end{lemma}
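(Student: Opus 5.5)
We start from the exact formula \eqref{eq:Zi-formula}. For $i\neq q$ it gives
$$
Z_i=\gamma_p\sum_{r\in[N]}\ip{a_i}{a_r}+Q\ip{a_i}{a_q}=\gamma_pR_i+Q\ip{a_i}{a_q},
$$
where we used $\ip{a_i}{a_i}=1$, so the $r=i$ term together with the $r\neq i$ terms is exactly $R_i$. Both summands are nonnegative, so the convexity bound $(x+y)^p\leq 2^{p-1}(x^p+y^p)$ applies and yields
$$
\sum_{i\neq q}Z_i^p\leq 2^{p-1}\gamma_p^p\sum_{i\neq q}R_i^p+2^{p-1}Q^p\sum_{i\neq q}\ip{a_i}{a_q}^p .
$$
We will bound the two sums separately and then fix the constants in the order prescribed by the statement.

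The second (query-leakage) term is handled by \eqref{eq:row-p-bound}. It is at most $2^{p-1}\iota Q^p$. We choose $\iota=\iota(p,\nu)\in(0,\tfrac12)$ small enough that $2^{p-1}\iota\leq\nu/2$. This choice fixes the code constant $C=C(p,\iota)$ from Lemma~\ref{lem:code}, and hence fixes $C_R$ in \eqref{eq:row-one-bound}.

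The first (Alice-baseline) term is handled by \eqref{eq:row-one-bound}. It gives
$$
\sum_{i\neq q}R_i^p\leq N C_R^pR_{\max}^p,
$$
so the first term is at most $2^{p-1}\gamma_p^pC_R^p\,NR_{\max}^p$. By the exponent-balance computation preceding the lemma, $NR_{\max}^p/Q^p=\Theta_p\bigl(c_N^{(p^2-2)/(p-2)}\bigr)$. This bound uses $L=\lfloor m^{(p-2)/2}\rfloor$, which gives $Q=\Theta(m^{(p-1)/2})$, with constants that depend only on $p$ and not on $c_N$ or $\iota$. The exponent $(p^2-2)/(p-2)$ is positive, so taking $c_N$ small enough, now depending on $p$, $\iota$ and $\nu$, makes the first term at most $\tfrac{\nu}{2}Q^p$. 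Summing the two terms gives $\sum_{i\neq q}Z_i^p\leq\nu Q^p$ uniformly in $q$.

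The main point to check is the hypotheses of Lemma~\ref{lem:code}. We need $m<N=o(m^{p/2})$, which holds for large $m$ because $1<\rho_p<p/2$ and $c_N$ is a fixed constant. We also need the constant $C(p,\iota)$ in the row-sum bound to be independent of $c_N$, so that shrinking $c_N$ does not inflate $C_R$. This is the reason for the order of choices: first $\iota$, then the code constants, then $c_N$. A second thing to verify is that the $\Theta_p$ in the exponent balance hides no dependence on $c_N$ beyond the explicit power; the only such dependence comes through $N$ and $R_{\max}$, and both scale as explicit powers of $c_N$.
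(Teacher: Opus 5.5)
Your proposal is correct and follows essentially the same route as the paper: you write $Z_i=\gamma_pR_i+Q\ip{a_i}{a_q}$, apply $(x+y)^p\le 2^{p-1}(x^p+y^p)$, and bound the two sums via \eqref{eq:row-one-bound} and \eqref{eq:row-p-bound}. You then fix $\iota$ before $c_N$, as the paper does. Your extra checks are exactly the points the paper leaves implicit: that $C(p,\iota)$ from Lemma~\ref{lem:code} is independent of $c_N$, that $1<\rho_p<p/2$, and that the exponent balance depends on $c_N$ only through $c_N^{(p^2-2)/(p-2)}$.
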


\begin{proof}
	By \eqref{eq:Zi-formula} and the fact $(x+y)^p\leq2^{p-1}(x^p+y^p)$,
	$$
	\sum_{i\neq q}Z_i^p \leq 2^{p-1}\gamma_p^p\sum_{i\neq q}R_i^p + 2^{p-1}Q^p\sum_{i\neq q}\ip{a_i}{a_q}^p.
	$$
	Combining the previous inequality with \eqref{eq:row-one-bound}, \eqref{eq:row-p-bound}, and recalling that $\gamma_p$ is a constant, we obtain
	$$
	\sum_{i\neq q}Z_i^p \leq C_B N R_{\max}^p+2^{p-1}\iota Q^p,
	$$
	where $C_B = C_B(p, \iota)$ is a large enough constant.
	
	We first choose $\iota$ so that the second term is at most $\nu Q^p/2$, and then choose $c_N$ (in the definition of $N$) sufficiently small that $C_B N R_{\max}^p\leq\nu Q^p/2$. The lemma follows.
\end{proof}

We next bound the contribution of a fixed baseline instance in which every Alice gadget is a zero-gadget. For a query index $q$, let $V_0(q)$ be the {\em global moment baseline} of $\Mp$ on the stream consisting of $\Gzero(a_i)$ for every $i\in[N]$, followed by Bob's
suffix for $q$.  Because the two gadgets have equal vector sums,
$V_0(q)$ is independent of Alice's actual bit string.

\begin{lemma}
	\label{lem:baseline}
	There is a constant $C_V=C_V(p,\nu)$ such that, for all sufficiently large
	$m$ and every query index $q$, it holds that $0 \leq V_0(q)\leq C_V Q^p.$
\end{lemma}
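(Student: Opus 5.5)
Nonnegativity of $V_0(q)$ is immediate: every pairwise inner product in the stream is nonnegative (Lemma~\ref{lem:acute} together with the suffix inner-product formulas), so every density is at least $1$ and $V_0(q)$ is a sum of positive terms. For the upper bound I will split $V_0(q)$ into the contribution of Alice's zero-gadgets and the contribution of Bob's $mL$ suffix vectors, and bound each by $O_p(Q^p)$. The final stream sum is $s = s_A + s_B$, where $s_A$ is given by \eqref{eq:sA} and $s_B = Qa_q + L\sqrt{1-1/m}\sum_\ell w_\ell$. The sum $s$ has no balancing component, so Lemma~\ref{lem:pure-response} gives every gadget vector the density $B_i + t_jZ_i$.

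\textbf{Gadget vectors.} First, since the coloring is balanced, $\sum_i e_{\chi(i)} = (N/m)\sum_c e_c$. Hence $B_i = \lambda\beta_p N/m$ for all $i$. Each zero-gadget has $2^p$ vectors, and $t_j\le 1$, so using $(x+y)^p\le 2^{p-1}(x^p+y^p)$ its contribution is at most $2^{2p-1}(B_i^p + Z_i^p)$. Summing over $i$ gives two terms.

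The anchor part is $O_p(N(N/m)^p)$. With $N = c_N m^{\rho_p}$, this is $O_p(m^{\rho_p(p+1)-p})$, and I will check that this is $o(m^{p(p-1)/2}) = o(Q^p)$. It suffices that $(\rho_p - 1)(p+1) < p(p-1)/2 - 1$, which reduces to the inequality $\rho_p < p/2$ up to lower order; I would verify it directly from the formula for $\rho_p$. This is the arithmetic step I expect needs the most care. A fallback is $N(N/m)^p \le N R_{\max}^p$, because $(N/m) \le (N/m)^{(p-1)/(p-2)}$ when $N\ge m$. That fallback already gives $O(Q^p)$, which is all we need.

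The code part is $\sum_i Z_i^p = Z_q^p + \sum_{i\ne q} Z_i^p$. The second sum is at most $\nu Q^p$ by Lemma~\ref{lem:interference}. For the first, \eqref{eq:Zi-formula} gives $Z_q \le \gamma_p R_q + Q \le \gamma_p C_R R_{\max} + Q$. Then I need $R_{\max}^p \le N R_{\max}^p = O(c_N^{\,\cdot} Q^p)$, which follows from the exponent balance computed just before Lemma~\ref{lem:interference}. Hence $Z_q^p = O_p(Q^p)$.

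\textbf{Suffix vectors.} The density of a copy of $v_\ell$ is $\langle v_\ell, s_A\rangle + \langle v_\ell, s_B\rangle$. The second term equals $L + (m-1)L/m \le 2L$. The first term equals $m^{-1/2}\gamma_p\sum_r\langle a_q,a_r\rangle \le m^{-1/2}\gamma_p R_q = O_p(m^{-1/2}R_{\max})$. Raising to the $p$-th power and summing over the $mL$ suffix vectors gives $O_p(mL^{p+1} + mL\, m^{-p/2}R_{\max}^p)$. The first piece is $O_p(Q^p)$, since $mL^{p+1} \le m\cdot m^{(p-2)(p+1)/2} \le m^{p(p-1)/2}$. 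The second piece is at most $L\sqrt m\, R_{\max}^p\cdot m^{(1-p)/2} \le Q\cdot R_{\max}^p/ m^{(p-1)/2}$, which is $O(R_{\max}^p) = O(Q^p)$ by the same balance estimate.

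Adding the pieces gives $V_0(q)\le C_V Q^p$ with $C_V$ depending on $p,\nu$ (through $\iota$ and $c_N$). The main obstacle is bookkeeping the exponents so that each cross term is genuinely $O(Q^p)$ uniformly in $q$. I would handle this by reducing everything to the single balance identity $NR_{\max}^p = \Theta_p(c_N^{(p^2-2)/(p-2)}Q^p)$ and to $mL^{p+1}\le Q^p$.
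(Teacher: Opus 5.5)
Your proof is correct and follows essentially the same decomposition as the paper: the anchor and code parts of the zero-gadget densities $B_i+t_jZ_i$, plus the suffix densities $\frac{\gamma_p}{\sqrt m}R_q+L\left(2-\frac1m\right)$. The only difference is bookkeeping. The paper checks separately that the anchor term is $o(Q^p)$ and that $R_{\max}/\sqrt m=o(L)$, whereas you fold both into the single balance identity via $N(N/m)^p\le NR_{\max}^p$ and $(x+y)^p\le 2^{p-1}(x^p+y^p)$, which is enough for an $O(Q^p)$ bound. Also, your ``up to lower order'' reduction is in fact exactly the inequality $\rho_p<p/2$.
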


\begin{proof}
	We separately bound Alice's prefix vector densities and Bob's suffix vector densities.
	
	Each zero-gadget contains only $O_p(1)$ vectors, and every one has density at most $B_i+t_pZ_i$.  Therefore, Alice's total
	contribution in the all-zero instance is at most
	$
	O_p\left(\sum_{i \in [N]}(B_i^p+Z_i^p)\right).
	$
	
	For a color $c$, let $n_c=|\chi^{-1}(c)| = N/m$.  By \eqref{eq:sA} and the definition of $\beta_p$ in \eqref{eq:def-gamma}, the anchor part of a vector's density is
	$
	B_i=\lambda\beta_p n_{\chi(i)}=O_p(N/m).
	$
	Thus, the anchor term satisfies
	$$
	\sum_i B_i^p=O_p\left(N\left(\frac Nm\right)^p\right) = O_p\left((c_Nm^{\rho_p})^{p+1} / m^p \right) = o(m^{p(p-1)/2}) = o(Q^p),
	$$
	where the second-to-last equality follows from the fact that
	$$
	\frac{p(p-1)}2-\bigl((p+1)\rho_p-p\bigr)=\frac{p(p-2)(p+1)}{2(p^2-2)}>0.
	$$
	For the code term, \eqref{eq:Zi-formula} gives
	$$
	\sum_iZ_i^p \leq 2^{p-1}\gamma_p^p\sum_iR_i^p + 2^{p-1}Q^p\sum_i\ip{a_i}{a_q}^p.
	$$
	The first term is $O_p(NR_{\max}^p)=O_p(Q^p)$, and the second is at most
	$2^{p-1}(1+\iota)Q^p = O_p(Q^p)$, including the self-correlation at $i=q$.  
	Combining the anchor term and the code term, the prefix contribution is 
	$O_p(Q^p)$.
	
	It remains to bound the suffix.  For any copy of $v_\ell$, its density is
	\begin{equation*}\label{eq:suffix-density}
		\ip{v_\ell}{s_A} + L\ip{v_\ell}{\sum_{r \in [m]} v_r} = \frac{\gamma_p}{\sqrt m}R_q + L\left(2-\frac1m\right).
	\end{equation*}
	The first term is $O_p(R_{\max}/\sqrt m)$ by
	\eqref{eq:row-one-bound}.  At our choices of parameters,
	$
	{R_{\max}}/{\sqrt m}=o\left(m^{(p-2)/2}\right) =o(L).
	$
	Indeed, the exponent of $R_{\max}$ is
	$$
	(\rho_p-1)\frac{p-1}{p-2}=\frac{(p-2)(p-1)(p+1)}{2(p^2-2)},
	$$
	and
	$$
	\frac{p-2}{2}-\left(\frac{(p-2)(p-1)(p+1)}{2(p^2-2)}-\frac12\right)=\frac{p(p-1)}{2(p^2-2)}>0.
	$$
	Thus every suffix vector has density $O_p(L)$.  There are $mL$ suffix vectors, so their total contribution is $O_p(mL^{p+1})=O_p(Q^p)$ when $p > 2$.
	
	Combining the prefix and suffix parts proves the lemma.
\end{proof}

\subsection{The Reduction}

Let $P(z,q)$ in \eqref{eq:P} denote the stream of vectors generated from Alice's bit
string $z$ and Bob's index $q$.  Since Alice's prefix sum is
independent of $z$, all quantities $B_i$, $Z_i$ (recall their definitions in Lemma~\ref{lem:pure-response}), and all suffix
densities are independent of $z$.

The following lemma shows that the $p$-th density moment is the global moment baseline $V_0(q)$ plus an independent contribution from each bit with $z_i=1$.
\begin{lemma}
	\label{lem:decomposition}
	For every $z\in\{0,1\}^N$ and every $q\in[N]$,
	$
	\Mp(P(z,q)) = V_0(q)+\kappa_p\sum_{i:z_i=1}Z_i^p.
	$
\end{lemma}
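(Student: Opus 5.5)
The plan is to show that the final vector sum $s$ of $P(z,q)$ does not depend on $z$, and then compare $P(z,q)$ with the all-zero stream $P(\mathbf{0},q)$ term by term. Since $M_p$ is a sum of $\ip{u}{s}^p$ over the stream vectors, the comparison reduces to a gadget-by-gadget application of Lemma~\ref{lem:pure-response}.

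\emph{Step 1 (common sum).} By Lemma~\ref{lem:equal-sums}, each gadget $\mathcal G_{z_i}(a_i)$ has vector sum $\gamma_pa_i+\beta_pe_{\chi(i)}$, whatever the value of $z_i$. Summing over $i$ gives the prefix sum $s_A$ in \eqref{eq:sA}. Bob's suffix sum is $Qa_q+L\sqrt{1-1/m}\sum_\ell w_\ell$, which depends only on $q$. Hence the final sum $s=s(q)$ is the same for every $z$. It has no component in any balancing direction $g_c$, because $s_A$ has none and Bob's vectors live in the code and $w$ blocks. Consequently every density $\ip{u}{s}$ is a function of the vector $u$ and of $q$ alone, and so are $B_i$ and $Z_i$.

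\emph{Step 2 (splitting the moment).} With $s$ fixed, write
\[
M_p(P(z,q))=\sum_{i\in[N]}\sum_{u\in\mathcal G_{z_i}(a_i)}\ip{u}{s}^p+\sum_{\text{suffix }v}\ip{v}{s}^p .
\]
The suffix term is the same for all $z$. For each $i$ with $z_i=0$, the gadget term equals the corresponding term in $V_0(q)$. For each $i$ with $z_i=1$, we subtract the zero-gadget term and use Lemma~\ref{lem:pure-response}, which applies because $s$ has no balancing component:
\[
\sum_{u\in\Gone(a_i)}\ip{u}{s}^p=\sum_{u\in\Gzero(a_i)}\ip{u}{s}^p+\kappa_pZ_i^p .
\]
Summing over $i$ yields $V_0(q)+\kappa_p\sum_{i:z_i=1}Z_i^p$. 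The ordering of vectors within the stream is irrelevant, since $M_p$ depends only on the multiset of vectors and the final sum.

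\emph{Main point of care.} The only subtle step is justifying that swapping one gadget leaves every other density unchanged. This is exactly Step 1: densities are taken against the final sum $s$, and $s$ is invariant under every swap. This invariance is what allows all $N$ swaps to be applied simultaneously, with no interaction between them. Everything else is bookkeeping.
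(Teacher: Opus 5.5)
Your proposal is correct and follows essentially the same route as the paper. Both arguments use Lemma~\ref{lem:equal-sums} to show that the final sum, and hence every density, is independent of $z$, and then apply Lemma~\ref{lem:pure-response} to each gadget with $z_i=1$; the only difference is that you perform all swaps at once, whereas the paper swaps gadgets one at a time starting from the all-zero instance.
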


\begin{proof}
	We start from the all-zero instance defining $V_0(q)$.  By Lemma~\ref{lem:equal-sums}, replacing the $i$-th zero-gadget by the one-gadget does not change the final stream
	sum, and therefore does not change any other vector's density.  Lemma~\ref{lem:pure-response} shows that the $i$-th replacement increases the moment by exactly $\kappa_pZ_i^p$. Summing over the indices with $z_i=1$ proves the lemma.
\end{proof}

Choose $\nu=1/10$ in Lemma~\ref{lem:interference}.  If $z_q=0$, then Lemma~\ref{lem:interference} gives
\begin{equation}
	\label{eq:case-zero}
	\Mp(P(z,q)) \leq V_0(q)+\kappa_p\nu Q^p.
\end{equation}
If $z_q=1$, then \eqref{eq:Zq-lower} gives
\begin{equation}
	\label{eq:case-one}
	\Mp(P(z,q)) \geq V_0(q)+\kappa_p Q^p.
\end{equation}
By Lemma~\ref{lem:baseline}, we know ${V_0(q)}/{Q^p}\in[0,C_V].$
We choose
$
\varphi_p := \frac{C_V + \kappa_p}{C_V + \kappa_p\nu}  > 1.
$
Thus, 
$$
\frac{V_0(q)+\kappa_p Q^p}{V_0(q)+\kappa_p\nu Q^p} \ge \varphi_p.
$$
Consequently, for any $\eps_p \in \left(0, \frac{\varphi_p-1}{\varphi_p+1}\right)$, a $(1+\eps_p)$-approximation to $M_p(P(z, q))$ can be used to distinguish the two cases $z_q = 0$ and $z_q = 1$.

\begin{proof}[Proof of Theorem~\ref{thm:main}]
	Alice runs the streaming algorithm on her prefix vectors determined by input $z$ and sends its memory configuration to Bob.  Bob continues the algorithm on the suffix vector set determined by input $q$ and uses the algorithm's estimate of $M_p(P(z,q))$ to recover $z_q$ with the same constant success probability.  Lemma~\ref{lem:index-lower-bound} implies that we need at least $\Omega(N) = \Omega(m^{\rho_p})$ bits of memory.
	
	The four orthogonal blocks have total dimension $4m$.  The finite-precision implementation to be described in Section~\ref{sec:finite-precision} uses one additional common coordinate, for a total of $4m+1$ dimensions.  For an arbitrary sufficiently large dimension $d$, set
	$m=\lfloor(d-1)/4\rfloor$; the space bound becomes $\Omega_p(d^{\rho_p}).$
	In the input reduction in Section~\ref{sec:Mp-input-reduction},	Alice inserts $2^pN=O_p(N)$ vectors,  and Bob inserts $mL=O(m^{p/2})$ vectors.  Since $p$ is fixed, the total stream length is polynomial in $d$.  We will show in Section~\ref{sec:finite-precision} that the vectors can be
	chosen as exact rational unit vectors with $O_p(\log d)$ bits per
	coordinate while preserving the constant gap $\varphi_p$.
\end{proof}

\subsection{Finite Precision}
\label{sec:finite-precision}

The above construction uses real unit vectors and guarantees nonnegative
pairwise similarities, some of which may be zero. A direct rational
approximation could make these similarities slightly negative. We
therefore first introduce a small positive margin and then obtain exact
rational unit vectors with $O_p(\log m)$ bits per coordinate while
preserving the nonnegative-cosine promise.

Let $n$ be the length of the hard input stream.  From the construction,
$n=2^pN+mL=O_p(m^{p/2}).$ Choose $\xi=m^{-2p}$ and map every ideal unit vector $u\in\R^{4m}$ to
$
\tilde u=\left(\sqrt{1-\xi^2}\,u,\xi\right)\in\R^{4m+1}.
$
Then $\tilde u$ is a unit vector and
$$
\ip{\tilde u}{\tilde v}=(1-\xi^2)\ip{u}{v}+\xi^2\geq\xi^2.
$$
Thus, every pairwise similarity is strictly positive.  If $D_i$ is an
ideal density and $\widetilde D_i$ is the corresponding perturbed
density, then
$
\widetilde D_i=(1-\xi^2)D_i+n\xi^2.
$
Since $D_i\geq1$ and $n\xi^2=o(1)$, we have $\widetilde D_i=(1+o(1))D_i$ and 
$$
\sum_i\widetilde D_i^p=(1+o(1))\sum_iD_i^p.
$$

We next replace each $\tilde u$ by a nearby rational unit vector. We use the following fact.

\begin{lemma}
	\label{lem:rational-sphere}
	Let $\phi\geq 2$, let $x=(x',x_\phi)\in\R^{\phi-1}\times\R$ satisfy $\norm{x}=1$ and $x_\phi>0$, and let $\eta\in(0,1)$. There exists an exact rational unit vector $\hat{x}\in\mathbb Q^\phi$ such that $\norm{\hat{x}}=1$, $\hat{x}_\phi>0$, and $\norm{\hat{x}-x}\leq\eta.$ Moreover, the numerator and denominator of every coordinate of $\hat{x}$ can be represented using
	$
	O\!\left(\log \phi + \log\frac{1}{\eta} + \log\frac{1}{x_\phi}\right)
	$
	bits.
\end{lemma}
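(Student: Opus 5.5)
I would prove Lemma~\ref{lem:rational-sphere} with the inverse stereographic projection from the south pole $-e_\phi$. This map sends rational points of $\R^{\phi-1}$ to rational points of the unit sphere. Explicitly, for $y\in\R^{\phi-1}$ set
$$
\Sigma(y):=\left(\frac{2y}{1+\norm{y}^2},\ \frac{1-\norm{y}^2}{1+\norm{y}^2}\right).
$$
Then $\norm{\Sigma(y)}=1$ identically. If $y\in\mathbb Q^{\phi-1}$, every coordinate of $\Sigma(y)$ is rational, and the last coordinate is positive whenever $\norm{y}<1$. The forward map is $y=x'/(1+x_\phi)$. Since $x_\phi>0$, we get $\norm{y}^2=(1-x_\phi)/(1+x_\phi)<1$, so $y$ lies strictly inside the unit ball.

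The plan is to approximate $y$ by a rational vector $\hat y$ and output $\hat x:=\Sigma(\hat y)$. The first step is to round each coordinate of $y$ to the nearest multiple of $1/K$ for an integer $K$ to be chosen. This gives $\norm{\hat y-y}\le\sqrt{\phi-1}/K$.

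The second step is a Lipschitz bound for $\Sigma$ on the ball $\{\norm{y}\le 1\}$. Each coordinate of $\Sigma$ is a rational function whose denominator $1+\norm{y}^2$ is at least $1$. A direct computation of the Jacobian therefore gives $\norm{\Sigma(y)-\Sigma(\hat y)}\le C\norm{y-\hat y}$ for an absolute constant $C$, say $C=4$, valid whenever both points lie in the ball of radius $1$. Because $\Sigma$ is in fact conformal with factor $2/(1+\norm{y}^2)\le 2$, the constant $2$ suffices along any segment inside the ball.

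The third step keeps the last coordinate positive. We have $1-\norm{y}^2=2x_\phi/(1+x_\phi)\ge x_\phi$. So if $\norm{\hat y-y}\le x_\phi/4$, then $\norm{\hat y}\le\norm{y}+x_\phi/4$, and hence $\norm{\hat y}^2\le\norm{y}^2+x_\phi/2+x_\phi^2/16<1$. This gives $\hat x_\phi>0$, and it also keeps both points in the unit ball so that the Lipschitz bound applies. It therefore suffices to take
$$
K:=\left\lceil \frac{4\sqrt{\phi}}{\min\{\eta/C,\ x_\phi\}}\right\rceil,
$$
which yields $\norm{\hat x-x}\le\eta$ and $\hat x_\phi>0$.

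The last step bounds the bit complexity. Write $\hat y=k/K$ with integer vector $k$ satisfying $\abs{k_r}\le K$. Then
$$
\hat x=\left(\frac{2Kk}{K^2+\norm{k}^2},\ \frac{K^2-\norm{k}^2}{K^2+\norm{k}^2}\right).
$$
All numerators and denominators are integers of magnitude at most $2\phi K^2$. They therefore use $O(\log\phi+\log K)=O(\log\phi+\log(1/\eta)+\log(1/x_\phi))$ bits. The only delicate point is the positivity and Lipschitz argument near the boundary of the ball when $x_\phi$ is tiny, and the $x_\phi/4$ margin together with the $\log(1/x_\phi)$ term handles it.
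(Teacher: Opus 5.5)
Your proposal is correct and follows essentially the same route as the paper. Both arguments use the inverse stereographic projection, round the preimage $y = x'/(1+x_\phi)$ to a grid whose mesh depends on $\min\{\eta, x_\phi\}$, use $1-\norm{y}^2 = 2x_\phi/(1+x_\phi)$ to keep $\norm{\hat y}<1$ (and hence $\hat x_\phi>0$), and bound bit lengths through the common denominator $K^2+\norm{k}^2$. The paper gets the Lipschitz constant $2$ from the exact identity $\norm{\Psi(y)-\Psi(z)}^2 = 4\norm{y-z}^2/\bigl((1+\norm{y}^2)(1+\norm{z}^2)\bigr)$, which holds for all $y,z$, instead of your Jacobian/conformal-factor argument, so it never needs to restrict to the unit ball for that step.
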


\begin{proof}
	Consider the rational parametrization of the unit sphere
	$$
	\Psi(y) := \left(\frac{2y}{1+\norm{y}^2}, \frac{1-\norm{y}^2}{1+\norm{y}^2}\right),
	\qquad y \in \R^{\phi-1}.
	$$
	A direct calculation shows that $\norm{\Psi(y)}=1$.
	Moreover, for all $y,z\in\R^{\phi-1}$,
	\begin{equation}
		\label{eq:stereographic-distance}
		\norm{\Psi(y)-\Psi(z)}^2 = \frac{4\norm{y-z}^2} {\bigl(1+\norm{y}^2\bigr)\bigl(1+\norm{z}^2\bigr)} \leq 4\norm{y-z}^2.
	\end{equation}
	
	Since $x_\phi>0$, the vector $y:=\frac{x'}{1+x_\phi}$ belongs to the open unit ball and satisfies $\Psi(y)=x$. Indeed,
	$$
	\norm{y}^2 = \frac{1-x_\phi^2}{(1+x_\phi)^2} = \frac{1-x_\phi}{1+x_\phi} < 1.
	$$
	In addition,
	\begin{equation}
		\label{eq:stereographic-margin}
		1-\norm{y} = \frac{1-\norm{y}^2}{1+\norm{y}} = \frac{2x_\phi}{(1+x_\phi)(1+\norm{y})} \geq
		\frac{x_\phi}{2}.
	\end{equation}
	
	Set $\tau := \min\left\{\frac{\eta}{2}, \frac{x_\phi}{4}\right\}$, and $q := \left\lceil \frac{\sqrt{\phi-1}}{2\tau}\right\rceil.$ For each $j\in[\phi-1]$, let $r_j$ be an integer nearest to
	$qy_j$, and define
	$$
	\hat{y} := \frac{1}{q}(r_1,\ldots,r_{\phi-1}) \in\mathbb Q^{\phi-1}.
	$$
	Then
	$$
	\norm{\hat{y}-y} \leq \frac{\sqrt{\phi-1}}{2q} \leq \tau.
	$$
	By \eqref{eq:stereographic-margin} and the choice of $\tau$,
	$$
	\norm{\hat{y}} \leq \norm{y}+\tau \leq \norm{y}+\frac{1-\norm{y}}{2} < 1.
	$$
	Hence, the last coordinate of $\Psi(\hat{y})$ is positive. Define $\hat{x}:=\Psi(\hat{y}).$ Because $\hat{y}$ is rational, $\hat{x}$ is a rational unit vector. Furthermore, by \eqref{eq:stereographic-distance},
	$$
	\norm{\hat{x}-x} = \norm{\Psi(\hat{y})-\Psi(y)} \leq 2\norm{\hat{y}-y} \leq 2\tau \leq \eta.
	$$
	
	It remains to verify the bit complexity. Write $r=(r_1,\ldots,r_{\phi-1})$ and
	$
	S:=q^2+\sum_{j=1}^{\phi-1}r_j^2.
	$
	Substituting $\hat{y}=r/q$ into the definition of $\Psi$ gives
	$$
	\hat{x} = \left(\frac{2qr_1}{S}, \ldots, \frac{2qr_{\phi-1}}{S}, \frac{q^2-\sum_{j=1}^{\phi-1}r_j^2}{S}\right).
	$$
	Since $\norm{y}<1$, each $|r_j|$ is at most $q+1$. Therefore, $S=O(\phi q^2),$
	and every numerator and denominator in the preceding display has $O(\log \phi+\log q)$ bits. Finally,
	$
	q = O\left(\frac{\sqrt \phi}{\min\{\eta,x_\phi\}}\right),
	$
	so
	$$
	\log \phi+\log q = O\left(\log \phi + \log\frac{1}{\eta} + \log\frac{1}{x_\phi}\right).
	$$
	This proves the lemma.
\end{proof}

Apply Lemma~\ref{lem:rational-sphere} to every perturbed vector
$\tilde u\in\R^{4m+1}$ with $\eta=m^{-6p}$, and let
$\hat u\in\mathbb Q^{4m+1}$ be the resulting rational unit vector.
Since the last coordinate of $\tilde u$ is $\xi=m^{-2p}$, every
coordinate of $\hat u$ has $O_p(\log m)$ bits. Moreover, for any two vectors,
$$
\left|\ip{\hat u}{\hat v} - \ip{\tilde u}{\tilde v} \right| \leq \norm{\hat u-\tilde u}_2 + \norm{\hat v-\tilde v}_2 \leq 2\eta.
$$
Every perturbed inner product is at least $\xi^2$, and $2\eta=o(\xi^2)$, so all rationalized pairwise similarities remain positive for sufficiently large $m$. Consequently, if $\widetilde D_i$ and $\widehat D_i$ denote the densities before and after rationalization, then
$$
\left|\widehat D_i-\widetilde D_i\right| \leq 2n\eta=o(1).
$$
Since $\widetilde D_i\geq1$, uniformly over Alice's input and Bob's query, $\widehat D_i=(1+o(1))\widetilde D_i$ and hence,
$$
\sum_i\widehat D_i^p = (1+o(1))\sum_i\widetilde D_i^p = (1+o(1))\sum_iD_i^p.
$$

Suppose that the ratio between the one-case and zero-case objective
values in the ideal construction is at least $\varphi_p>1$. The
preceding uniform estimate implies that, after perturbation and
rationalization, this ratio is at least
$
\varphi_p \cdot \frac{1-o(1)}{1+o(1)} = \varphi_p-o(1).
$
Thus, for all sufficiently large $m$, it is bounded below by some constant $\tilde\varphi_p>1$. Choose a constant $\eps_p>0$ such that
$
\frac{1+\eps_p}{1-\eps_p} < \tilde\varphi_p.
$
Then the approximation intervals in the two cases are disjoint, so a
$(1+\eps_p)$-approximation still recovers the queried bit. Therefore,
the communication reduction applies to the rationalized hard
instances.

\section{Conclusion}
\label{sec:conclusion}

In this paper, we studied the streaming complexity of cosine density moments, which extend classical frequency moments from equality to vector data under cosine similarity. Our results show that the low-rank Gram structure of cosine similarity permits dimension-dependent streaming algorithms despite the impossibility results for general similarity graphs.

The main remaining question is to determine the optimal complexity of higher
density moments under nonnegative cosine. For each fixed integer $p>2$, our
upper and lower bounds have dimension exponents $p/2$ and
$
\rho_p=\frac{p^2(p-1)}{2(p^2-2)},
$
respectively, where $(p-1)/2<\rho_p<p/2$. Closing this gap and determining the optimal dependence on $\eps$ remain open.

\section*{Use of AI assistance}
The research questions, algorithms, proof strategies, and substantive mathematical arguments in this paper were developed by the author. ChatGPT (GPT-5.5 and GPT-5.6) was used for language editing and proofreading, and to assist with parameter optimization in the gadget constructions and routine algebraic manipulation. The resulting text and calculations were verified by the author, who takes full responsibility for the paper's contents.


\newcommand{\etalchar}[1]{$^{#1}$}

\appendix

\section{Math Tools}
\label{app:math-tools}

\begin{lemma}[Bernstein's inequality]
	\label{lem:Bernstein}
	Let $X_1,\ldots,X_m$ be independent mean-zero random variables satisfying $|X_i|\le b$ almost surely, and let
	$
		\sigma^2:=\sum_{i \in [m]} \E[X_i^2].
	$
	Then, for every $t>0$,
	$$
		\Pr\left[ \Biggl|\sum_{i \in [m]} X_i\Biggr|\ge t \right] \le 2\exp\left(-\frac{t^2}{2\sigma^2+\frac{2}{3}bt} \right).
	$$
	In particular, for every $x>0$,
	$$
		\Pr\left[\Biggl|\sum_{i \in [m]} X_i\Biggr| \ge \sqrt{2\sigma^2x}+\frac{2}{3}bx \right] \le 2e^{-x}.
	$$
	For a one-sided application, the leading factor $2$ on the right-hand side of both inequalities can be omitted.
\end{lemma}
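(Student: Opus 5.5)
The statement to prove is Bernstein's inequality, a standard concentration result. The plan is the classical Chernoff--Cramér argument. First I reduce to the one-sided bound $\Pr[\sum_i X_i\ge t]\le \exp\bigl(-t^2/(2\sigma^2+\frac23 bt)\bigr)$. Applying it to $-X_1,\ldots,-X_m$, which satisfy the same hypotheses, and taking a union bound then gives the two-sided inequality with the factor $2$. This also explains the remark that the factor $2$ may be dropped in one-sided applications.

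For the one-sided bound I fix $\theta\in(0,3/b)$ and bound each moment generating function. Using $\E[X_i]=0$ and $|X_i|^k\le b^{k-2}X_i^2$ for $k\ge2$,
$$
\E[e^{\theta X_i}] \le 1+\E[X_i^2]\sum_{k\ge2}\frac{\theta^k b^{k-2}}{k!} \le 1+\frac{\theta^2\E[X_i^2]}{2}\sum_{j\ge0}\Bigl(\frac{\theta b}{3}\Bigr)^j = 1+\frac{\theta^2\E[X_i^2]}{2(1-\theta b/3)}.
$$
The middle step uses $k!\ge 2\cdot 3^{k-2}$. Then $1+x\le e^x$ and independence give $\E[e^{\theta\sum_i X_i}]\le \exp\bigl(\theta^2\sigma^2/(2(1-\theta b/3))\bigr)$. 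Markov's inequality yields $\Pr[\sum_i X_i\ge t]\le \exp\bigl(-\theta t+\theta^2\sigma^2/(2(1-\theta b/3))\bigr)$.

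The main step is choosing $\theta$. I would take $\theta=t/(\sigma^2+bt/3)$, which lies in $(0,3/b)$. With this choice $1-\theta b/3=\sigma^2/(\sigma^2+bt/3)$, so the second term equals $\frac{t^2}{2(\sigma^2+bt/3)}$. The exponent therefore becomes $-t^2/(2\sigma^2+\frac23 bt)$, which is exactly the claimed bound.

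For the ``in particular'' form, set $t=\sqrt{2\sigma^2x}+\frac23 bx$. It suffices to check that $t^2\ge x(2\sigma^2+\frac23 bt)$, so that the exponent is at most $-x$. Expanding the square gives
$$
t^2 = 2\sigma^2x+\tfrac43 bx\sqrt{2\sigma^2x}+\tfrac49 b^2x^2.
$$
On the other side, $x\cdot\frac23 bt=\frac23 bx\sqrt{2\sigma^2x}+\frac49 b^2x^2$. The difference is $\frac23 bx\sqrt{2\sigma^2x}\ge0$, so the inequality holds. No step here is a real obstacle.
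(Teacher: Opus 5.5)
Your proof is correct and is the standard Cram\'er--Chernoff derivation of Bernstein's inequality. The paper gives no proof of its own; it only quotes the lemma as a standard tool in the appendix, so there is no alternative route to compare against.

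There is one small detail to add. In the degenerate case $\sigma^2=0$, your choice $\theta=t/(\sigma^2+bt/3)$ equals $3/b$, which lies outside $(0,3/b)$, and it is undefined if $b=0$. In that case every $X_i=0$ almost surely, so both bounds hold trivially. You might also note in passing that interchanging $\E$ with the Taylor series of $e^{\theta X_i}$ is justified by the boundedness $|X_i|\le b$.
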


\section{Auxiliary Lemmas and Proofs}
\label{app:acute-auxiliary-proof}

\subsection{A Sparse-Intersection Moment Bound}
\label{sec:lem:sparse-intersection}

\begin{lemma}
	\label{lem:hypergeometric-moment}
	Let $A$ and $B$ be independent uniformly random $s$-subsets of
	$[m]$, and let $X:=|A\cap B|$.
	Fix an integer $p\geq 1$. If $m\geq 2p$ and ${s^2}/{m}\leq 1$,
	then
	$
	\E[X^p] \leq C_p \cdot {s^2}/{m},
	$
	where the constant $C_p>0$ depends only on $p$. 
\end{lemma}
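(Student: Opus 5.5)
I would prove the bound by writing $X^p$ in terms of falling factorials and using the exact factorial moments of the hypergeometric distribution. The Stirling expansion $X^p=\sum_{k=1}^{p}\stirling{p}{k}(X)_k$, where $(X)_k=X(X-1)\cdots(X-k+1)$, has no $k=0$ term because $p\geq1$. It therefore suffices to show $\E[(X)_k]\leq C'_k\, s^2/m$ for each $1\leq k\leq p$. The claim then follows with $C_p=\sum_{k}\stirling{p}{k}C'_k$.

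Conditioning on $A$, the variable $X=|A\cap B|$ is hypergeometric: $B$ is a uniform $s$-subset of $[m]$ and $A$ marks $s$ "successes". The standard identity $(X)_k=k!\binom{X}{k}$ counts ordered $k$-tuples of distinct elements of $A\cap B$. Hence
$$\E[(X)_k]=\frac{(s)_k(s)_k}{(m)_k}.$$
To see this, fix an ordered $k$-tuple of distinct elements of $A$; there are $(s)_k$ of them. The probability that all of them lie in $B$ is $(s)_k/(m)_k$. Linearity of expectation gives the formula.

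Next I would bound the ratio. For $k\leq p$ and $m\geq 2p$ we have $(m)_k\geq (m/2)^k$, and $(s)_k\leq s^k$ (and the product vanishes if $k>s$). This gives
$$\E[(X)_k]\leq 2^k\frac{s^{2k}}{m^k}=2^k\Bigl(\frac{s^2}{m}\Bigr)^k\leq 2^k\frac{s^2}{m},$$
where the last step uses the hypothesis $s^2/m\leq1$, so that higher powers of $s^2/m$ are at most its first power. Thus $C'_k=2^k$ works, and $C_p=\sum_{k=1}^p\stirling{p}{k}2^k$ depends only on $p$.

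There is no real obstacle here. The only points to state carefully are that the $k=0$ Stirling term is absent (this is where $p\geq1$ is used) and that the condition $m\geq 2p$ is exactly what gives $(m)_k\geq(m/2)^k$ for all $k\leq p$.
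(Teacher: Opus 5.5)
Your proposal is correct and follows the paper's proof step for step: the Stirling falling-factorial expansion, the factorial-moment formula $\E[(X)_k]=(s)_k^2/(m)_k$, the bounds $(s)_k\le s^k$ and $(m)_k\ge (m/2)^k$, and the same constant $C_p=\sum_{k=1}^{p}\stirling{p}{k}2^k$. The only cosmetic difference is that you obtain the factorial moment by conditioning on $A$, whereas the paper counts ordered tuples in $[m]$ and uses the independence of $A$ and $B$ to square $(s)_k/(m)_k$.
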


\begin{proof}
	For an integer-valued random variable $Y$, write
	$
	(Y)_r:=Y(Y-1)\cdots(Y-r+1)
	$
	for the falling factorial, with $(Y)_0:=1$.
	The ordinary power $X^p$ has the falling-factorial expansion
	$$
	X^p=\sum_{r \in [p]}\stirling{p}{r}(X)_r.
	$$
	All coefficients $\stirling{p}{r}$ are nonnegative and depend only on
	$p$. Taking expectations on both sides gives
	\begin{equation}
		\label{eq:Xp}
		\E[X^p]=\sum_{r \in [p]}\stirling{p}{r}\E[(X)_r].
	\end{equation}
	
	We first compute the factorial moments of $X$. The random variable
	$(X)_r$ counts the number of ordered $r$-tuples of distinct elements
	contained in $A\cap B$. There are $(m)_r$ ordered $r$-tuples of
	distinct elements in $[m]$. For any fixed such tuple $T$,
	$$
	\Pr[T \subseteq A]=\frac{(s)_r}{(m)_r}.
	$$
	Since $A$ and $B$ are independent,
	$$
	\Pr[T \subseteq A\cap B]=\left(\frac{(s)_r}{(m)_r}\right)^2.
	$$
	By linearity of expectation,
	\begin{equation}
		\label{eq:Xr}
		\E[(X)_r]=(m)_r\left(\frac{(s)_r}{(m)_r}\right)^2=\frac{(s)_r^2}{(m)_r}.
	\end{equation}
	
	For every $1\leq r\leq p$, we have $(s)_r\leq s^r.$
	Moreover, since $m\geq 2p$,
	\begin{equation}
		\label{eq:mr}
		(m)_r=\prod_{k=0}^{r-1}(m-k)\geq\left(\frac{m}{2}\right)^r.
	\end{equation}
	Combining \eqref{eq:Xr} and \eqref{eq:mr}, we have
	$
	\E[(X)_r]\leq2^r\left({s^2}/{m}\right)^r.
	$
	Since $s^2/m \le 1$, 
	$\E[(X)_r]\leq 2^r \cdot {s^2}/{m}.$
	Substituting this into \eqref{eq:Xp} gives
	$$
	\E[X^p]\leq \left(\sum_{r \in [p]}\stirling{p}{r}2^r\right)
	\frac{s^2}{m}.
	$$
	Setting $C_p = \sum_{r \in [p]}\stirling{p}{r}2^r$ proves the lemma.
\end{proof}

\subsection{Proof of Lemma~\ref{lem:finite-difference}}
\label{sec:proof:lem:finite-difference}

\begin{proof}
	For a function $f\colon\mathbb{R}\to\mathbb{R}$ and a step size
	$h > 0$, define the forward-difference operator
	$$
	(\Delta_h f)(t):=f(t+h)-f(t).
	$$
	Its $p$-fold iterate satisfies
	\begin{equation}
		\Delta_h^p f(t) = \sum_{j=0}^{p}(-1)^{p-j}\binom{p}{j}f(t+jh).
		\label{eq:forward-difference-expansion}
	\end{equation}
	Indeed, if $E_h f(t):=f(t+h)$, then $\Delta_h=E_h-I$, and hence
	$$
	\Delta_h^p = (E_h-I)^p = \sum_{j=0}^{p}(-1)^{p-j}\binom{p}{j}E_h^j,
	$$
	which gives \eqref{eq:forward-difference-expansion}.
	
	For $0\leq r\leq p$, define
	$
	f_r(t):=(B+tZ)^r.
	$
	Using $c_j=(-1)^{p-j}\binom{p}{j}$ and $t_j=t_0+jh$, we obtain
	\begin{equation}
		\label{eq:fr-expand}
		\sum_{j=0}^{p}c_j(B+t_jZ)^r = \sum_{j=0}^{p}(-1)^{p-j}\binom{p}{j} f_r(t_0+jh) = \Delta_h^p f_r(t_0).
	\end{equation}
	
	We use the following standard property of the forward-difference
	operator on polynomials. Let $g$ be a polynomial of degree $k\geq 1$
	with leading coefficient $a_k$, and write
	$g(t)=a_kt^k+q(t)$,
	where
	$
	\deg(q)\leq k-1.
	$
	Then
	\begin{equation*}
		(\Delta_h g)(t) = g(t+h)-g(t) = a_k\bigl((t+h)^k-t^k\bigr)+(\Delta_h q)(t) = kha_kt^{k-1}+\widetilde q(t),
	\end{equation*}
	where $\deg(\widetilde q)\leq k-2$. Indeed,
	$$
	(t+h)^k-t^k = kht^{k-1} + \sum_{\ell=0}^{k-2}\binom{k}{\ell}t^\ell h^{k-\ell},
	$$
	and $\deg(\Delta_h q)\leq k-2$. Thus, since $h>0$, applying
	$\Delta_h$ to a degree-$k$ polynomial lowers its degree by exactly one
	and multiplies its leading coefficient by $kh$. Iterating this
	process, $\Delta_h^p$ cancels every polynomial of degree less
	than $p$, while
	$
	\Delta_h^p t^p=p!\,h^p.
	$
	
	If $r<p$, then $f_r$ has degree at most $r<p$, and therefore
	\begin{equation}
		\label{eq:fr}
		\Delta_h^p f_r(t_0)=0.
	\end{equation}
	If $r=p$, then the coefficient of $t^p$ in
	$f_p(t)=(B+tZ)^p$ is $Z^p$. All remaining terms have degree less than
	$p$, so by linearity,
	\begin{equation}
		\label{eq:fp}
		\Delta_h^p f_p(t_0) = Z^p\Delta_h^p t^p = p!\,h^pZ^p.
	\end{equation}
	Combining \eqref{eq:fr} and \eqref{eq:fp} with
	\eqref{eq:fr-expand} proves the lemma.
\end{proof}

\end{document}